\newcommand{\ba}{\begin{eqnarray}}
\newcommand{\ea}{\end{eqnarray}}
\newcommand{\bas}{\begin{eqnarray*}}
\newcommand{\eas}{\end{eqnarray*}}
\newtheorem{theorem}{Theorem}
\newtheorem{coro}{Corollary}
\newtheorem{lemma}{Lemma}
\newtheorem{remark}{Remark}
\newtheorem{assumption}{Assumption}
\newtheorem{proposition}{Proposition}
\newtheorem{example}{Example}
\newtheorem{defn}{Definition}
\newcommand{\e}{{\mathbb{E}}}
\newcommand{\E}{{\textup{E}}}
\newcommand{\R}{{\mathbb{R}}}
\newcommand{\N}{{\mathbb{N}}}
\newcommand{\p}{{\mathbb{P}}}
\newcommand{\T}{{\intercal}}
\newcommand{\col}{{\textup{cl}}}
\newcommand{\norm}[1]{\left\lVert#1\right\rVert}
\newcommand{\var}{{\mathbb{V}\rm ar }}
\newcommand{\range}{{\textup{ran}}}
\newcommand{\I}{\mathbb{I}}
\newcommand{\blue}{\color{blue}}
\newcommand{\red}{\color{red}}
\newcommand{\interior}[1]
{%
  {\kern0pt#1}^{\mathrm{o}}%
}
\title{A Backward Simulation Method for\\ Stochastic Optimal Control Problems
\footnote{The authors are thankful to Alexander Schied, Pengfei Li, Degui Li, and Zhaoxing Gao for helpful discussions.}
}
\author{Zhiyi Shen
\footnote{Department of Statistics and Actuarial Science, University of Waterloo. Email Address: zhiyi.shen@uwaterloo.ca}, 
\, and \, Chengguo Weng
\footnote{Department of Statistics and Actuarial Science, University of Waterloo. Email Address: c2weng@uwaterloo.ca}
 }
\date{\today}
\begin{document}

\maketitle
\begin{abstract} 
A number of optimal decision problems with uncertainty can be formulated into a stochastic optimal control framework.
The Least-Squares Monte Carlo (LSMC) algorithm 
is a popular numerical method to approach solutions
of such stochastic control problems as analytical solutions are not tractable in general. 
This paper generalizes the LSMC algorithm proposed in \cite{Shen2017} 
to solve a wide class of stochastic optimal control models.
Our algorithm has three pillars: 
a construction of auxiliary stochastic control model,
an artificial simulation of post-action value of state process, 
and a shape-preserving sieve estimation method
which equip the algorithm with a number of merits including 
bypassing forward simulation and control randomization,
evading extrapolating the value function,
and alleviating computational burden of the tuning parameter selection.
The efficacy of the algorithm is corroborated by an application to pricing equity-linked insurance products.
\end{abstract}

\section{Introduction}
The stochastic optimal control model is a prevalent paradigm for 
solving optimal decision problems with uncertainty 
in a variety of fields, particularly, financial engineering.
In solving discrete-time stochastic optimal control problems,
the Dynamic Programming Principle (DPP) is a prevailing tool 
which characterizes the optimal value function as the solution to
a backward recursive equation system, often known as the \textit{Bellman equation}.
This reduces the stochastic optimization problem into two separate problems:
1) solving a sequence of deterministic optimization problems 
and 2) evaluating the conditional expectation terms in the Bellman equation.
In spite of the theoretical appealingness of the DPP,
there generally does not exist closed-form solution of the Bellman equation,
which impedes the application of stochastic optimal control models to complicated real-world problems.
Recently, a number of numerical methods have been proposed in the literature
to approach the optimal or suboptimal solutions to various stochastic optimal control problems
by combining Monte Carlo Simulation with nonparametric regression methods.

In a statistical setting,
the typical goal of nonparametric regression methods is to estimate the functional form
of the expectation of a response variable conditioning on a covariate variable.
This naturally motivates one to use certain nonparametric regression methods to evaluate the conditional expectation
(also known as the continuation value in the context of pricing Bermudan option)
involved in the Bellman equation where
the value function at the next time step and the state variable at the current time step
are taken as the response and covariate variables, respectively.
Such a ground-breaking idea was incubated in a series of papers including
\cite{Carriere1996}, \cite{Longstaff2001}, and \cite{Tsitsiklis2001},
and the corresponding numerical algorithms are often referred to as the Least-Squares Monte Carlo (LSMC) algorithms.
Since then, the LSMC algorithm has witnessed remarkable popularity in solving optimal stopping problems,
a special class of stochastic control problems;
see, e.g., \cite{Clement2002}, \cite{Stentoft2004}, \cite{Egloff2005}, \cite{Glasserman2004}, \cite{Glasserman2004-2}, 
\cite{Egloff2007}, \cite{Zanger2009}, \cite{Zanger2013}, \cite{Belomestny2009}, \cite{Belomestny2011},
and the references therein.

The problem of solving general stochastic control problems 
by resorting to the LSMC algorithm is considerably more involved.
To understand the crux, let us note that
the LSMC method has two building blocks: 1) a forward simulation of the state process
and 2) a backward updating procedure which employs the nonparametric regression to estimate the continuation value.
In an optimal stopping problem, the evolution of the state process is independent of decision maker's (DM's) action
and therefore, the forward simulation of the sample paths of the state process is relatively straightforward.
In stark contrast to this, in a general stochastic optimal control setting,
the state process is influenced by the DM's action
and accordingly, its simulation is unattainable without specifying the DM's action.
Ideally, one may expect to simulate the state process driven by the optimal action of the DM.
However, the optimal action should be determined by solving the Bellman equation in a backward recursion manner,
which is incongruous with the need of forward simulation in an LSMC algorithm.
To circumvent this, 
\cite{Kharroubi2014} proposes to first draw the DM's action from a random distribution
and then simulate the sample paths of the state process based on the initialized action.
This method has been referred to as the control randomization method in the literature and 
applied in the LSMC algorithm to solve many specific stochastic control problems;
see, e.g., \cite{Cong2016}, \cite{Zhang2018}, and \cite{Huang2016}, among others.
Despite the wide usage of the control randomization method, 
the accuracy of the numerical estimate is impaired over the region with sparse sample points 
\citep[Section 3.3]{Zhang2018}
and the spread of the sample paths is sensitive to the specific way of initializing the action.
In an extreme case, the LSMC algorithm might even miss the optimal solution under a dismal choice of random distribution 
from which the PH's action are drawn; see \cite{Shen2017}, for instance.
It is also notable that 
most literature bind together the control randomization and the forward simulation of the state process.
However, this paper will show that the forward simulation is not imperative in an LSMC algorithm
and the merits of abjuring the forward simulation are extant in several aspects.
The limitations of the control randomization and the consequential forward simulation
will be elaborated in Section \ref{sec:LSMC_review} of this paper.

Besides the simulation of the state process,
the approximation of the conditional expectation term in a Bellman equation is also taxing for several reasons.
Firstly, the prevalent regression methods only warrant the accuracy of the regression estimate over a compact support,
see, e.g., \cite{Newey1997}, \cite{Stentoft2004}, and \cite{Zanger2013},
whereas the state variable generally takes value in an unbounded set.
Some literature compromise to first truncate the domain of the continuation function
and then use extrapolation techniques when the knowledge of the function outside the truncated region is required.
It is worth noting that this problem is not acute in the context of optimal stopping problem
but is severe in a general stochastic control setting.
This is because, in the latter case, one has to traverse all admissible actions,
which calls for the values of the continuation function over a domain that 
is wider than spreading range of sample paths.
Secondly, in order to avoid overfitting or underfitting, 
most nonparametric regression methods
thirst for an appropriate choice of the tuning parameter, 
e.g., the number of basis functions in a linear sieve estimation method (see the sequel Section \ref{sec:sieve_estimation}).
This is often resolved by computationally expensive cross-validation methods, see, e.g., \cite{Li1987}.
However, in view of the extraordinarily large number of simulated paths, 
such a tuning parameter selection procedure is intolerable in implementing the LSMC algorithm.
The aforementioned challenges will be investigated in details in the sequel section.

The contribution of this paper is summarized as follows.
Firstly, we restrain the value set of the state process into a compact set,
which evades the undesirable extrapolating value function estimate 
during the backward recursion of the LSMC algorithm.
The value function accompanying the truncated state process 
is shown to be a legitimate approximation for the primal value function 
under a suitable choice of the truncation parameter.
Secondly, we generalize the idea of \cite{Shen2017} 
to simulate the post-action value of the state process
from an artificial probability distribution. 
This eliminates the need for the forward simulation 
and is consistent with the backward induction nature of the Bellman equation.
The memory as well as time costs of the artificial simulation method
are considerably less than those of the control-randomization-based forward simulation method.
Thirdly, we introduce a shape-preserving sieve estimation method to approximate
the conditional expectation term involved in the Bellman equation.
By exploiting certain shape information of the continuation function,
the sieve estimate is insensitive to the tuning parameter
and accordingly reduces the computational cost of the tuning parameter selection.
We refer to the proposed LSMC algorithm as the Backward Simulation and Backward Updating (BSBU) algorithm.
Finally, we establish the convergence result of BSBU algorithm
which sheds light on how the numerical error propagates over the backward recursion procedure.

This paper is organized as follows.
Section \ref{sec:framework} gives a tour through the LSMC algorithm and shows its challenges
in solving general stochastic optimal control problems.
Section \ref{sec:results} gives the main results of the paper:
a construction of auxiliary stochastic optimal control model,
the BSBU algorithm, and the associated convergence analysis.
Section \ref{sec:VA} applies the BSBU algorithm to the pricing problem
of an equity-linked insurance product and Section \ref{sec:num_experiment}
conducts the corresponding numerical experiments.
Finally, Section \ref{sec:conclusion_BSBU} concludes the paper.

\section{Basic Framework and Motivations}
\label{sec:framework}
\subsection{Stochastic Optimal Control Model}
\label{sec:model_setup}
We restrict our attention to a collection of consecutive time points labeled by $\mathcal{T}:=\{0,1,\dots,T\}$
on which a decision maker (DM) may take action.
The uncertainty faced by the DM is formulated by a probability space $\left(\Omega,\mathcal{F}, \mathbb{P}\right)$ 
equipped with a filtration $\mathbb{F}=\big\{\mathcal{F}_{t}\big\}_{t \in \mathcal{T}}$.
The DM's action is described by a discrete-time stochastic process ${\sf a}=\{a_t\}_{t \in \mathcal{T}_0}$
with $\mathcal{T}_0=\mathcal{T}\backslash \{T\}$.
Let $X=\{X_t\}_{t \in \mathcal{T}}$ be a certain state process valued in $\mathcal{X} \subseteq \mathbb{R}^d$ with $d \in \mathbb{N}$.
Starting from an initial state $X_0 \in \R^d$, it evolves recursively according to the following transition equation:
\ba
\label{transition_eq}
X_{t+1}=S\left(X_{t},a_{t},\varepsilon_{t+1}\right),\ \ 
\text{for}\ \ t=0,1,\dots,T-1,
\ea
where $\varepsilon:=\{\varepsilon_{t+1}\}_{t\in \mathcal{T}_{0}}$ 
is a sequence of independent random variables valued in $\mathbb{R}^{q}$ with $q \in \mathbb{N}$. 
$\varepsilon_{t+1}$ reflects the uncertainty faced by the DM at time step $t$
and is referred to as \textit{random innovation} in what follows. 
For brevity of notation, in what follows, we compress the dependency of the state process on the action
and the readers should always bear in mind that $X_t$ implicitly depends on the DM's action up to time $t-1$.
We give the formal definition of the action ${\sf a}$ as follows.
\begin{defn}[Admissible Action]
\label{def:control}
We call a discrete-time process
${\sf a}=\{a_t\}_{t\in \mathcal{T}_{0}}$ an admissible action if it satisfies:
\begin{description}
\vspace{-1ex}\item[(i)]
$a_t$ is $\mathcal{F}_{t}$-measurable for $t=0,1,\dots,T-1$;

\vspace{-1ex}\item[(ii)]
$a_t \in A_t(X_t)$ for $t=0,1,\dots,T-1$, 
where $A_t(\cdot)$ is some function valued as a subset of $\mathbb{R}^{p}$ with $p\in \mathbb{N}$.
\end{description}
\end{defn}
The function $A_t(\cdot)$ in the preceding definition corresponds to a certain state constraint on the action taken at time $t$.
Denote by $\mathcal{A}$ the set of the DM's admissible actions.
Consider a discrete-time stochastic optimal control problem in the following form:
\ba
\label{stochastic_control_problem}
V_{0}(X_{0})=\sup_{{\sf a}\in \mathcal{A}} \e \left[\sum_{t=0}^{T-1} \varphi^{t} f_{t}(X_t,a_{t}) + \varphi^{T}f_T\left(X_{T}\right)\right],
\ea
where $\varphi \in (0,1)$ is a certain discounting factor,
$f_t(\cdot,\cdot)$ and $f_T(\cdot)$ are the intermediate and terminal reward functions, respectively.
In order to ensure the well-posedness of the stochastic control problem \eqref{stochastic_control_problem}, 
we impose the following assumption which is conventional in literature, 
see \cite{Rogers2007} and \cite{Belomestny2010} for instance.
\begin{assumption}
\label{assum:moment_condition}
\bas
\sup_{{\sf a}\in \mathcal{A}}  \e \left[\sum_{t=0}^{T-1} f_{t}(X_t,a_{t})\right] < \infty,
\ \ \textup{and}\ \ 
\sup_{{\sf a}\in \mathcal{A}} \e \left[f_T(X_T)\right] < \infty.
\eas
\end{assumption}
The Dynamic Programming Principle states that the value function $V_0(\cdot)$ can be solved recursively:
\ba
\label{Bellman_eq}
\begin{cases}
V_{T}(x)&=f_T(x),\\
V_t(x) &=	  
\sup \limits_{a \in A_{t}(x)} \Big[
        f_{t}(x, a)+ \varphi 
         \bar{C}_{t}(x, a)\Big],
\ \ \text{for}\ \ t=0,1,\dots,T-1,
\end{cases} 
\ea
where 
\ba
\label{continuation_fun}
\bar{C}_{t}(x, a)=\e \left[V_{t+1}\left(X_{t+1}\right)\Big| X_{t}=x, a_{t}=a\right].
\ea

We proceed by rewriting the transition equation \eqref{transition_eq} into the following form:
\ba
\label{transition_eq-2}
S(X_{t},a_{t},\varepsilon_{t+1})= H\big(K(X_{t},a_{t}),\varepsilon_{t+1}\big),
\ea
where $H(\cdot,\cdot): \R^{r+q} \longrightarrow \R^{d}$ and $K(\cdot,\cdot): \R^{d+p} \longrightarrow \R^{r}$
are some measurable functions with $r \in \N$. It is worth stressing that any transition function $S(\cdot,\cdot,\cdot)$ can be rewritten into the above form
since one may choose $K(\cdot,\cdot)$ as identity function (i.e., $K(x,a)=(x,a)^{\T}$) and the above equation holds trivially.
Nevertheless, it is instructive to introduce the function $K(\cdot,\cdot)$ as it brings the benefit of dimension reduction.
We will explain this more in the sequel.
Combing Eqs. \eqref{continuation_fun} and \eqref{transition_eq-2}, we get
\bas
\bar{C}_{t}\left(X_{t}, a_{t}\right)
=\e \left[V_{t+1}\big(H\left(X_{t^{+}},\varepsilon_{t+1}\right)\big)\Big| X_{t^{+}}=K\left(X_{t}, a_{t}\right)\right].
\eas
Hereafter, we call $X_{t^{+}}$ the \textit{post-action} value of the state process $X_t$ at time $t$. 
It constitutes an essential component in the LSMC algorithm proposed in Section \ref{sec:algorithm}.
Define function 
\ba
\label{aux_continuation_fun}
C_{t}(k):=
\e \Big[V_{t+1}\left(X_{t+1}\right)\Big| X_{t^{+}}=k\Big]
=\e \Big[V_{t+1}\big(H\left(k,\varepsilon_{t+1}\right)\big)\Big].
\ea
We observe the following relationship between $\bar{C}_t(\cdot,\cdot)$ and $C_t(\cdot)$:
\ba
\label{relation}
\bar{C}_{t}(x, a)=C_{t}\big(K(x,a)\big).
\ea
The crucial implication of the above relation is that
it suffices to recover the functional form of $C_t(\cdot)$ in order to evaluate $\bar{C}_t(\cdot,\cdot)$ 
since $K(\cdot,\cdot)$ is known at the first hand.
The motivation of rewriting the transition equation into Eq. \eqref{transition_eq-2} is now clear:
$K(\cdot,\cdot)$ maps a $(d+p)$-dimensional vector into a $r$-dimensional vector, which compresses the dimension if $r<d+p$
and it is more efficient to recover the function $C_t(\cdot)$ than $\bar{C}_t(\cdot,\cdot)$ due to such a dimension reduction.
It is also worth noting that $C_t(\cdot)$ is solely determined by the probability distribution of $\varepsilon_{t+1}$
according to Eq. \eqref{aux_continuation_fun}.
This implies that \textit{it is not necessary to know the exact distribution of $X_{t^{+}}$ in the evaluation of the function $C_t(\cdot)$.}
In view of the relation \eqref{relation}, the Bellman equation \eqref{Bellman_eq} can be equivalently written as
\ba
\label{Bellman_eq-2}
\begin{cases}
V_{T}(x)&=f_T(x),\\
V_t(x) &=	  
\sup \limits_{a \in A_{t}(x)} \Big[
        f_{t}(x, a)+ \varphi 
         C_{t}\big(K(x,a)\big)\Big],
\ \ \text{for}\ \ t=0,1,\dots,T-1.
\end{cases} 
\ea
The above equation system states that, given the value function at time step $t+1$, one may first evaluate continuation function according to Eqs. \eqref{aux_continuation_fun} and \eqref{relation}
and then obtain the value function at time step $t$ via solving an optimization problem in the second line of Eq. \eqref{Bellman_eq-2}.
The information propagation behind the above recursive procedure is illustrated in Figure \ref{fig:propagation-0}.

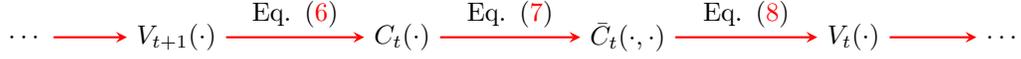
\begin{figure}
\centering
\begin{tikzpicture}[>=stealth,thick]
\draw (-2,0) node (e) {$\cdots$};
\draw node (a) {$V_{t+1}(\cdot)$};
\draw[->, thick, red] (e) to (a);
\draw (3,0) node (b) {$C_t(\cdot)$};
\draw[->, thick, red] (a) to node [above, align=center, text width=2cm,black] {Eq. \eqref{aux_continuation_fun}} (b);
\draw (6,0) node (c) {$\bar{C}_{t}(\cdot,\cdot)$};
\draw[->, thick, red] (b) to node [above, align=center, text width=2cm,black] {Eq. \eqref{relation}} (c);
\draw (9,0) node (d) {$V_{t}(\cdot)$};
\draw[->, thick, red] (c) to node [above, align=center, text width=2cm,black] {Eq. \eqref{Bellman_eq-2}} (d);
\draw (11,0) node (f) {$\dots$};
\draw[->, thick, red] (d) to (f);
\end{tikzpicture}
\captionsetup{width=0.9\textwidth}
\caption{A diagram for backward information propagation in solving the Bellman equation.}
\label{fig:propagation-0}
\end{figure}

\subsection{A Tour Through LSMC Algorithm}
\label{sec:LSMC_review}
We proceed by briefly reviewing the Least-squares Monte Carlo (LSMC) algorithm. 
We will show its limitations in several aspects which motivate the algorithm we will propose in the subsequent sections.
\subsubsection{``Forward simulation and backward updating" (FSBU) algorithm}
There has been voluminous literature on the LSMC for optimal stopping problem,
while the literature on the LSMC for general stochastic optimal control problem is thin.
Most literature addresses the LSMC for the stochastic control problems arising in some specific applications,
see, e.g.,
\cite{Carmona2010}, \cite{Barrera2006}, 
\cite{Huang2016}, \cite{Shen2017}, 
\cite{Cong2016}, and \cite{Zhang2018}, among others.
An LSMC algorithm for a class of stochastic control problem is developed in \cite{Belomestny2010}.

For most variants of the LSMC algorithm, they can be decomposed into two pillars: 
(i) a forward simulation of the state process and (ii) a backward updating of control policies.
We review these algorithms in a unified paradigm as follows.
\begin{enumerate}
\item
\textbf{Initiation:}\quad
Set $V_{T}^{\E}(x)=f_T(x)$.
For $t=T-1,T-2,\dots,0$, do the two steps below.

\item
\textbf{Forward Simulation:}
\begin{enumerate}
\item[2.1] \textbf{Control randomization} \quad
Generate a random sample of the DM's action up to time step $t$: 
\bas
{\sf a}_{0:t}^{M}:=\left\{\left(a_{0}^{(m)},\dots, a_{t}^{(m)}\right),\ m=1,2,\dots,M\right\}
\eas
with each $a_{t}^{(m)}$ generated by a certain heuristic rule.

\item[2.2] \textbf{Simulation of state process} \quad
Simulate a random sample of the random innovations:
\bas
\left\{\left(\varepsilon_{1}^{(m)},\dots, \varepsilon_{t+1}^{(m)}\right),\ m=1,2,\dots,M\right\}.
\eas
The sample of the state process up to time step $t+1$ is given by
\bas
\mathbf{X}_{1:t+1}^{M}:=\left\{X_{1:t+1}^{(m)}:=\left(X_{1}^{(m)},\dots, X_{t+1}^{(m)}\right),\ m=1,2,\dots, M\right\},
\eas
where $X_{n}^{(m)}=S\left(X_{n-1}^{(m)}, a_{n-1}^{(m)}, \varepsilon_{n}^{(m)}\right)$ for $n=1,2,\dots,t+1$.
\end{enumerate}
\item \textbf{Backward Updating}
\begin{enumerate}
\item[3.1] \textbf{Regression} \quad
Given a numerical estimate of value function at time step $t+1$, denoted by $V_{t+1}^{\textup{E}}(\cdot)$, construct the random sample
\bas
\mathbf{Y}_{t+1}^{M}:=\left\{V_{t+1}^{\textup{E}}\left(X_{t+1}^{(m)}\right),m=1,2,\dots, M\right\}.
\eas
Further construct a random sample of post-action value of the state process as follows:
\ba
\label{sample_post_action}
\mathbf{X}_{t^{+}}^{M}:=\left\{X_{t^{+}}^{(m)}:=K\left(X_{t}^{(m)},a_{t}^{(m)}\right),m=1,2,\dots, M\right\}.
\ea
Take $\mathbf{Y}_{t+1}^{M}$ and $\mathbf{X}_{t^{+}}^{M}$ as the samples of response variable and regressor, respectively, 
and employ a certain non-parametric regression to obtain a regression estimate
$C_{t}^{\textup{E}}(\cdot)$ for $C_{t}(\cdot)$.

\item[3.2] \textbf{Optimization} \quad
An estimate for the value function at time step $t$ is given by 
\ba
\label{local_opt}
V_{t}^{\textup{E}}(x) =	  
\sup \limits_{a \in A_{t}(x)} \Big[
        f_{t}(x, a)+ \varphi 
         C_{t}^{\textup{E}}\big(K(x,a)\big)\Big].
\ea
\end{enumerate}

\end{enumerate}

We henceforth call the above algorithm as the Forward Simulation and Backward Updating (FSBU) algorithm.
%
%

\begin{remark}[Randomness of $V_{t}^{\textup{E}}(\cdot)$ and $C_{t}^{\textup{E}}(\cdot)$]
\label{rem:notation_numerical_estimate}
The superscript $\textup{E}$ in $V_{t}^{\textup{E}}(\cdot)$ and $C_{t}^{\textup{E}}(\cdot)$ stresses that they are numerical estimates of the true value function and continuation function, respectively.
Since a certain regression technique is employed to get such numerical estimates,
they essentially depend on the random samples $\mathbf{Y}_{t+1}^{M}$ and $\mathbf{X}_{t^{+}}^{M}$
and hence on all previously generated random samples going from step $T-1$ down to step $t$, i.e., $\mathbf{Y}_{n+1}^{M}$ and $\mathbf{X}_{n^{+}}^{M}$ for $n=t,\dots,T-1$.
Such dependency is suppressed in notation for brevity, but the readers should keep in mind that
both $V_{t}^{\textup{E}}(\cdot)$ and $C_{t}^{\textup{E}}(\cdot)$ are random functions.
\end{remark}

\begin{figure}
\centering
\begin{tikzpicture}[>=stealth,thick]
\tikzset{
    position label/.style={
       below = 3pt,
       text height = 1.5ex,
       text depth = 1ex
    },
   brace/.style={
     decoration={brace, mirror},
     decorate
   }
}
\draw[-latex,thick,blue!80] (0,0) to [out=60,in=120] node[above,midway,black]{$K(\cdot,\cdot)$} (6,0);
\draw[thick] (-1.5,0) to node[below,midway,black] {$(x, a)$} (1.5,0);
\draw[thick] (3.5,0) to node[below,midway,black] {} (6.5,0);
\draw (3.5,-0.1) to (3.5,0.1);
\draw (6.5,-0.1) to (6.5,0.1);
\draw (-1.5,-0.1) to (-1.5,0.1);
\draw (1.5,-0.1) to (1.5,0.1);
\draw [brace,decoration={raise=2ex}] (3.5,0) -- node [position label,yshift=-3ex] {Range of post-action value} (6.5,0);
\draw [decoration={brace}, decorate] (3.5,0.2) -- node [above=3pt] {$\mathcal{D}$} (5,0.2);
\end{tikzpicture}
\captionsetup{width=0.9\textwidth}
\caption{A diagram illustrating the map $K(\cdot,\cdot)$ relating pre-action value to the post-action value.}
\label{fig:extrapolation}
\end{figure}
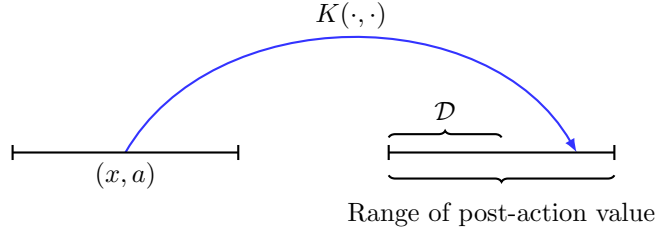

\subsubsection{Challenges}
There are several challenges in implementing the previously introduced FSBU algorithm to solve 
a stochastic control problem. 
Below, we make some comments on the challenges from three aspects.
\begin{description}
\item[(I)] \textbf{Limitation of control randomization}\quad 
As the DM's optimal action is not tractable at the very first but should be solved in the backward updating stage of the algorithm,
Step 1.1 randomly generates a feasible action,
which is referred to as \textit{control randomization} method; see \cite{Kharroubi2014}.
For some selected action ${\sf a}_{0:n}^{M}$, the accuracy of the regression estimate $C_{t}^{\textup{E}}(\cdot)$ can be warranted only over the support of the resulting sampling points $\mathbf{X}_{t^{+}}^{M}$, say $\mathcal{D}$, which might be smaller than those for other actions.
However, in order to solve the optimization problem in Step 2.2 (see Eq. \eqref{local_opt}), 
one requires the knowledge of $C_{t}^{\textup{E}}(\cdot)$ 
over the range of post-action value $X_{t^{+}}=K(X_{t},a_{t})$ for all feasible actions $a_{t}$
because all possible values of the action should be traversed and taken as the input of the function $C_{t}^{\textup{E}}(K(x,a))$
in evaluating $V_{t}^{\textup{E}}(x)$;
see Figure \ref{fig:extrapolation} for a graphical illustration.
As a compromise, one may use certain extrapolation methods 
to infer the value of $C_{t}^{\textup{E}}(\cdot)$ outside the region $\mathcal{D}$,
which incurs extra error and is hard to justify its legitimacy.

\item[(II)] \textbf{Cost of forward simulation}\quad
It is notable that, at time step $t$ of the above FSBU algorithm, 
a new random sample of the state process that is \textit{independent} of the sample at time step $t+1$ is simulated;
see Figure \ref{fig:forward_simulation} for a graphical illustration.
This is required in order to apply the nonparametric regression theory to establish the convergence result,
see Section 2.3 of \cite[p.~511]{Zanger2013} for instance.
On the contrary, using a single sample causes in-sample bias
because the numerical estimate of value function obtained at time step $t+1$ $V_{t+1}^{\textup{E}}(\cdot)$
is correlated with $\mathbf{X}_{t^{+}}^{M}$;
see, e.g., Section 3.1 of \cite{Choi2018} and the earlier Remark \ref{rem:notation_numerical_estimate}.
The total time cost in a forward simulation procedure of the above LSMC algorithm is of $O(T^2)$\footnote{Suppose the time cost of simulating a path over each time interval $[t,t+1]$ is $\mathcal{C}$. Then the time cost of simulating a whole path up to time step $n$ is about $n\times \mathcal{C}$, and the forward simulation in the whole LSMC algorithm has approximate time cost of $\mathcal{C}(1+2+\dots+T)=T(T+1)\mathcal{C}/2$, accordingly.}.
Simulating the whole path of state process can be time-consuming especially when one uses some approximation schemes to simulate general stochastic differential equations\footnote{It is the authors' experience that a single simulation of $10^5$ paths of the Heston model over a $10$-year period takes $365$ seconds by using the R package ``\texttt{msde}" on a MacBook Pro (2.8 GHz Intel Core i7).}.
Besides the issue of time cost, the memory cost in a single simulation is of $O\left(dT\right)$ with $T$ and $d$ being the number of time steps and dimensionality of the state process, respectively, which is sizable for a large $T$.

\item[(III)] \textbf{Choice of regression technique}\quad
Despite the voluminous literature on nonparametric regression,
the choice of the nonparametric regression method in Step 2.1 should be meticulous. 
In the above FSBU algorithm,
the sample size in the regression problem corresponds to the number of simulated paths and is generally recommended in the literature be chosen larger than one hundred thousand, which makes most regression methods computationally \textit{prohibitive}. To be specific, the local methods such as local-polynomial regression are clearly not wise choices as they require running a regression at each sample point.
It is worthy to point out that even computing a \textit{single} point in the sample $\mathbf{Y}_{t}^{M}$ is fairly time-consuming as it involves a local optimization problem (see Eq. \eqref{local_opt}).
Furthermore, the nuisance of high memory cost also burdens most nonparametric regression methods.
For example,
the kernel regression and Isotonic regression methods require storing all sample points
in order to recover the functional form of the regression function over some support,
and the memory cost is \textit{extraordinarily} large， accordingly.
The above two thorny issues are escalated by the fact that almost all nonparametric regression techniques involve a computationally-intensive cross-validation procedure to determine the tuning parameter (e.g., the bandwidth in local regression methods and the number of basis functions in global regression methods) in order to avoid overfitting or underfitting. 
\end{description}

\begin{figure}
\centering
\begin{tikzpicture}[>=stealth,thick]
\draw[->, thick] (-1,0) to (10,0);
\draw (-1,1pt) -- (-1,-3pt)
node[anchor=north] {0};
\draw (0,1pt) -- (0,-3pt)
node[anchor=north] {1};
\draw (8,1pt) -- (8,-3pt);
\draw (8,-0.3) node {$t+1$};
\draw (7,1pt) -- (7,-3pt)
node[anchor=north] {$t$};

\draw (10,-0.5) node {Time step};

\draw[->, thick,red] (0,1) to node [above, align=center, text width=2cm,black] {$\red \mathbf{X}_{1:t+1}^{M}$} (8,1);
\draw[->, dashed,thick,blue] (0,-1) to node [below, align=center, text width=2cm,black] {$\blue \mathbf{X}_{1:t}^{M}$} (7,-1);
\end{tikzpicture}
\captionsetup{width=0.9\textwidth}
\caption{A diagram for illustrating the forward simulation of the state process.
The solid line corresponds to a simulated sample of state process at time step $t$ of the LSMC algorithm. The dashed line corresponds to another \textit{independent} simulated sample of the state process at time step $t-1$.
}
\label{fig:forward_simulation}
\end{figure}
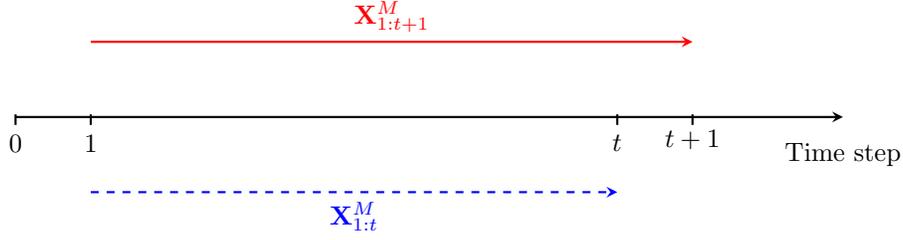

\subsubsection{Motivations}
In view of the previous items (I)--(III), 
the thrust behind this paper is to explore possible answers to the following questions:
\begin{description}
\item[(Q1)] \textit{How to avoid theoretically shaky extrapolation?}


\item[(Q2)] \textit{Is it possible to bypass the forward simulation in an LSMC algorithm?}

\item[(Q3)] \textit{Is there a regression method that is insensitive to tuning parameter?}

\end{description}
In terms of \textbf{(Q1)}, in the sequel section, 
we will construct an auxiliary stochastic control problem 
where the accompanying state process only takes values in a bounded set.
This construction sidesteps extrapolating the regression function
outside the region where the sample distributes.
In response to \textbf{(Q2)},
we will propose to directly simulate the post-action value of state process.
For \textbf{(Q3)},
we will introduce a shape-preserving sieve estimation method to infer the continuation function.
The resulting sieve estimate, on the one hand, is insensitive to the tuning parameter,
and on the other hand, preserves certain shape properties of the continuation function.

\section{Main Results}
\label{sec:results}
\subsection{Localization and Error Bound}
\label{sec:truncation}
As commented in the item ``Limitation of control randomization" in the last section,
it is necessary to know the value of the continuation function
over the whole range of post-action value of the state process 
which is wider than the set where the regression sample suffuse.
It is notable that the range of post-action value is unbounded
if the state process takes value in an unbounded set,
which is particularly the case in many finance applications.
Therefore, it is generally inevitable to infer the continuation function 
outside the support of the sample and
the error incurred by extrapolating the regression estimate is hard to quantify.

The aim of this subsection is to find a certain way to circumvent the unsound extrapolation in the implementation of an LSMC algorithm.
The pivotal idea is to first construct an auxiliary stochastic optimal control problem 
where the accompanying state process only takes values in a bounded set 
and then show the discrepancy between the auxiliary problem and the primal one is marginal in a certain sense.
To formalize the idea, we let $\mathcal{X}_{R}$ be a bounded subset of the set $\mathcal{X}$
where the subscript $R$ denotes a certain truncation parameter.
Further denote $\mathring{\mathcal{X}}_{R}$ (resp. $\partial \mathcal{X}_{R}$) as the interior (resp. boundary) of $\mathcal{X}_{R}$.
Given the initial state $X_0 \in \mathring{\mathcal{X}}_{R}$,
define the following stopping time:
\ba
\label{first_exit_time}
\tau^{R}:=\inf \left\{t\in \mathcal{T} \ \Big|\ X_t \notin \mathring{\mathcal{X}}_{R} \right\},
\ea
with the convention: $\tau^{R}=\infty$ if $X_t \in \mathring{\mathcal{X}}_{R}$ for all $t \in \mathcal{T}$.
Let $\col\left(\mathcal{X}_{R}\right)$ be the closure of the set $\mathcal{X}_{R}$
and assume it to be strictly convex.

We recursively define an auxiliary state process $X^{R}:=\left\{X_{t}^{R}\right\}_{t\in \mathcal{T}}$ as follows: 
\ba
\label{aux_state_process}
\begin{cases}
X_{0}^{R}&=X_{0},\\
X_{t}^{R}&=X_t\I_{\left\{\tau^{R}>t\right\}} + \mathcal{Q}\left(X_{\tau^{R} \wedge t}\right)\I_{\left\{\tau^{R}\leq t\right\}}, 
\ \ \textup{for}\ \ t=1,2,\dots,T,\\
\end{cases}
\ea
where $\mathcal{Q}(x)= \arg \inf_{y \in \textup{col}\left(\mathcal{X}_{R}\right)}\norm{y-x}$ 
with $\norm{\cdot}$ denoting the Euclidean $\ell_2$-norm.
Since $\col\left(\mathcal{X}_{R}\right)$ is a compact and strictly convex set, $\mathcal{Q}(x)$ is unique and lies 
on the boundary set $\partial \mathcal{X}_{R}$ for $x \notin \mathring{\mathcal{X}}_{R}$.

Below we give some interpretations regarding the  auxiliary state process defined in the above Eq. \eqref{aux_state_process}.
The primal state process $X$ coincides with the auxiliary state process $X^{R}$ until the stopping time $\tau^{R}$.
Once the primal state process passes through the interior of the truncated domain,
the auxiliary state process freezes at a certain point in the boundary set $\partial \mathcal{X}_{R}$ thereafter.
The evolution mechanisms of the primal and auxiliary state processes are illustrated in Figure \ref{fig:aux_state_process}.
The following proposition gives the transition equation of $X^{R}$.

\begin{proposition}
\label{prop:transition_eq_aux_process}
The auxiliary state process $X^{R}$ defined by Eq. \eqref{aux_state_process} admits the following transition equation across each time point: $X_{0}^{R}=X_{0}$ and 
\ba
\label{transition_aux_state_process}
X_{t+1}^{R}=X_{t}^{R}\I_{\left\{X_{t}^{R}\in \partial \mathcal{X}_{R}\right\}}+
\tilde{H}\Big(K\left(X_{t}^{R},a_{t}\right),\varepsilon_{t+1}\Big)
\I_{\left\{X_{t}^{R} \in  \mathring{\mathcal{X}}_{R} \right\}}, 
\ \ \textup{for}\ \ t=0,1,\dots,T-1,
\ea
where
\ba
\label{H}
\tilde{H}(k,e)=
\begin{cases}
\mathcal{Q}(H(k,e)), & \text{if}\ \ H(k,e) \notin \mathring{\mathcal{X}}_{R},\\
H(k,e), & \text{otherwise},
\end{cases}
\ea
and 
$K(\cdot,\cdot)$ is the transition equation relating the pre-action and post-action values of the primal state process 
defined in Eq. \eqref{transition_eq-2}.
\end{proposition}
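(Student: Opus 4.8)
The plan is to derive the transition equation \eqref{transition_aux_state_process} by a case analysis driven by whether the auxiliary state $X_t^R$ lies in $\mathring{\mathcal{X}}_R$ or on $\partial \mathcal{X}_R$, relating the two possibilities to the stopping time $\tau^R$ of \eqref{first_exit_time}. The first step is to verify the event identities $\{X_t^R \in \mathring{\mathcal{X}}_R\} = \{\tau^R > t\}$ and $\{X_t^R \in \partial \mathcal{X}_R\} = \{\tau^R \leq t\}$. On $\{\tau^R > t\}$ the first indicator in \eqref{aux_state_process} fires, so $X_t^R = X_t$, and \eqref{first_exit_time} forces $X_s \in \mathring{\mathcal{X}}_R$ for all $s \leq t$; in particular $X_t^R = X_t \in \mathring{\mathcal{X}}_R$. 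On $\{\tau^R \leq t\}$ we have $\tau^R \wedge t = \tau^R$, so $X_t^R = \mathcal{Q}(X_{\tau^R})$ with $X_{\tau^R} \notin \mathring{\mathcal{X}}_R$ by the definition of $\tau^R$; the strict-convexity property stated right after \eqref{aux_state_process} then places $\mathcal{Q}(X_{\tau^R}) \in \partial \mathcal{X}_R$. Because $\{\tau^R > t\}$ and $\{\tau^R \leq t\}$ partition the sample space while $\mathring{\mathcal{X}}_R$ and $\partial \mathcal{X}_R$ are disjoint, these two inclusions are in fact equalities, so exactly one indicator in \eqref{transition_aux_state_process} is active and it agrees with this dichotomy.

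On the event $\{X_t^R \in \mathring{\mathcal{X}}_R\} = \{\tau^R > t\}$ we have $X_t^R = X_t$, hence $K(X_t^R, a_t) = K(X_t, a_t)$, and I would further split on whether the primal process stays inside at the next step. If $\tau^R > t+1$, then $X_{t+1}^R = X_{t+1} = H\big(K(X_t, a_t), \varepsilon_{t+1}\big)$ by the decomposition \eqref{transition_eq-2}, and $X_{t+1} \in \mathring{\mathcal{X}}_R$ makes $\tilde{H}$ collapse to $H$ via its definition \eqref{H}. If instead $\tau^R = t+1$, then $X_{t+1}^R = \mathcal{Q}(X_{t+1}) = \mathcal{Q}\big(H(K(X_t, a_t), \varepsilon_{t+1})\big)$, and now $X_{t+1} \notin \mathring{\mathcal{X}}_R$ activates the first branch of \eqref{H}. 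Either way $X_{t+1}^R = \tilde{H}\big(K(X_t^R, a_t), \varepsilon_{t+1}\big)$, which is precisely the second summand of \eqref{transition_aux_state_process}.

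On the complementary event $\{X_t^R \in \partial \mathcal{X}_R\} = \{\tau^R \leq t\}$ the process has already been absorbed: since $\tau^R \leq t < t+1$, we have $\tau^R \wedge t = \tau^R \wedge (t+1) = \tau^R$, so $X_{t+1}^R = \mathcal{Q}(X_{\tau^R}) = X_t^R$, matching the first summand of \eqref{transition_aux_state_process}. Assembling the two cases --- and noting that the convention $\tau^R = \infty$ simply keeps every time step in the interior case --- yields the asserted recursion, with the base case $X_0^R = X_0$ being immediate from \eqref{aux_state_process}. I expect the main obstacle to be the bookkeeping around the stopping time rather than any deep estimate: concretely, pinning down the event equivalence of the first step and correctly isolating the boundary-crossing subcase $\tau^R = t+1$, which is exactly where the projection $\mathcal{Q}$ embedded in $\tilde{H}$ is needed to reconcile the recursive definition \eqref{aux_state_process} with the transition form \eqref{transition_aux_state_process}.
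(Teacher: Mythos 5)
Your proposal is correct and follows essentially the same route as the paper: the paper first proves a lemma establishing exactly your event identities ($\{\tau^R \leq t\} = \{X_t^R \in \partial\mathcal{X}_R\}$ and $\{\tau^R = t+1\} = \{X_t^R \in \mathring{\mathcal{X}}_R,\ S(X_t^R,a_t,\varepsilon_{t+1}) \notin \mathring{\mathcal{X}}_R\}$) and then derives \eqref{transition_aux_state_process} by splitting $\I_{\{\tau^R \leq t+1\}}$ into $\I_{\{\tau^R \leq t\}} + \I_{\{\tau^R = t+1\}}$, which is precisely your three-case decomposition written as indicator algebra rather than a pointwise case analysis.
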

The proof of the above proposition is relegated to Appendix \ref{app:proof_prop-1} for clarity of presentation.
The preceding Eq. \eqref{transition_aux_state_process} essentially states that 
$X^{R}$ is a Markov chain by itself, and accordingly, 
it is the sole state process of the auxiliary stochastic control model defined in the sequel.

\newcommand{\Lathrop}[7]{
\pgfmathsetseed{#7}%
\draw[#4] (0,0)
\foreach \x in {1,...,#6}
{   -- ++(#2,0.02+rand*#3)
}
coordinate (tempcoord) {};
\pgfmathsetmacro{\remaininglength}{(#1-#6)*#2}
\draw[#4] (tempcoord) -- ++ (\remaininglength,0) node[right] {#5};
\pgfmathsetmacro{\remaininglength}{#6*#2}
\draw[dashed] (tempcoord) --++ (-\remaininglength-0.5,0) node (end) {};
}

\newcommand{\Emmett}[7]{
\pgfmathsetseed{#7}%
\draw[#4] (0,0)
\foreach \x in {1,...,#1}
{   -- ++(#2,0.02+rand*#3)
}
\foreach \x in {1,...,#6}
{   -- ++(#2,-0.01+rand*#3)
}
node[right] {#5};
}

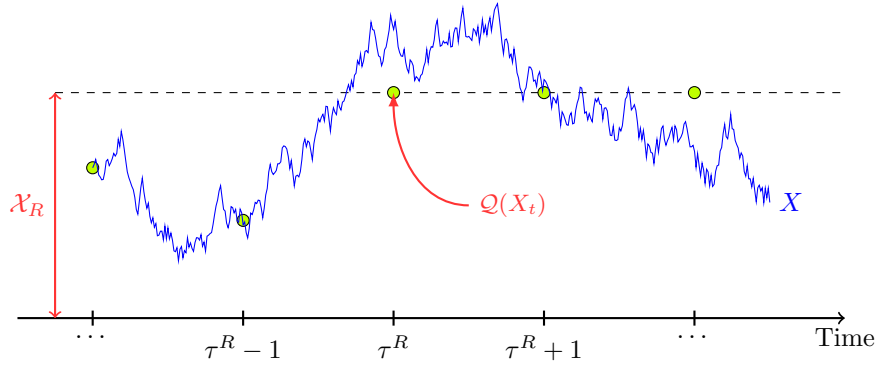
\begin{figure}
\centering
\begin{tikzpicture}
\draw[->,thick] (-1,-2) -- (10,-2) node [below] {Time};
\draw[dashed] (-0.5,1) to (10,1);
\draw[red!80,<->,thick] (-0.5,1) to node [left] {$\mathcal{X}_{R}$} (-0.5,-2);

\draw[thick] (0,-1.9) -- (0,-2.1) node [below] {$\dots$};
\draw[thick] (2,-1.9) -- (2,-2.1) node [below] {$\tau^{R}-1$};
\draw[thick] (4,-1.9) -- (4,-2.1) node [below] {$\tau^{R}$};
\draw[thick] (6,-1.9) -- (6,-2.1) node [below] {$\tau^{R}+1$};
\draw[thick] (8,-1.9) -- (8,-2.1) node [below] {$\dots$};
\draw [fill=lime](0,0) circle (.08);
\draw [fill=lime](2,-.7) circle (.08);
\draw [fill=lime](4,1) circle (.08);
\draw [fill=lime](6,1) circle (.08);
\draw [fill=lime](8,1) circle (.08);
\Emmett{200}{0.02}{0.2}{blue}{$X$}{250}{1};
\draw[-latex,thick,red!80](5,-.5)node[right]{\small $\mathcal{Q}(X_t)$}
	  to[out=180,in=270] (4,1);
\end{tikzpicture}
\captionsetup{width=0.9\textwidth}
\caption{Graphical illustration of the evolution mechanisms of $X$ and $X^{R}$. It is notable that $X$ might evolve continuously between two discrete time points $t$ and $t+1$. The stopping time $\tau^{R}$ corresponds to the first time point upon which $X_t$ stays outside of $\mathcal{X}_{R}$ among all discrete time points $\{0,1,\dots,T\}$. The circles correspond to a path of $X^{R}$.}
\label{fig:aux_state_process}
\end{figure}

Let $\mathcal{A}^{R}$ be the set of all admissible actions for the auxiliary state process which is defined as:
\bas
\mathcal{A}^{R}:=
\left\{{\sf a}=\{a_t\}_{t\in \mathcal{T}_{0}}\ \big|
\ a_{t}\ \textup{is}\ \mathcal{F}_{t}-\textup{measurable},
\ a_{t} \in A_{t}\left(X_{t}^{R}\right),
\ \textup{for} \ t \in \mathcal{T}_{0}\right\}.
\eas
Relative to the primal stochastic optimal control problem \eqref{stochastic_control_problem}, we consider the following auxiliary problem:
\ba
\label{aux_stochastic_control_problem}
\tilde{V}_{0}(X_{0})=\sup_{{\sf a}\in \mathcal{A}^{R}} 
\e \left[\sum_{t=0}^{T-1} \varphi^{t} 
f_{t}\left(X_t^{R},a_{t}\right)
+ \varphi^{T}f_T\left(X_{T}^{R}\right)\right],
\ea
where $X^{R}=\left\{X_{t}^{R}\right\}_{t\in \mathcal{T}}$ is defined recursively by Eq. \eqref{transition_aux_state_process}
for any given action ${\sf a}$.
Since the state process $X^{R}$ freezes once it reaches the boundary set $\partial \mathcal{X}_{R}$,
the value function in Eq. \eqref{aux_stochastic_control_problem} is given by
\ba
\label{aux_value_fun_boundary}
\tilde{V}_{t}(x) =\sum_{n=t}^{T-1}\varphi^{n-t} f_{n}\big(x;a_{n}^{*}(x)\big) + \varphi^{T-t}f_T(x),
\ \ \text{for}\ \ x \in \partial \mathcal{X}_{R},\ t\in \mathcal{T},
\ea 
with $a_{n}^{*}(x)\in \arg \max_{a \in A_n(x)}f_{n}(x;a)$.
Over the interior of the truncated domain,
the above value function $\tilde{V}_{0}(\cdot)$ can be solved in a similar backward recursion way as $V_{0}(\cdot)$ does, that is, 
\ba
\label{Bellman_eq-3}
\begin{cases}
\tilde{V}_{T}(x)&=f_T(x),\\
\tilde{V}_{t}(x) &=	  
\sup \limits_{a \in A_{t}(x)} \Big[
        f_{t}(x, a)+ \varphi 
         \tilde{C}_{t}\big(K(x,a)\big)\Big],
\ \ \text{for}\ x \in \mathring{\mathcal{X}}_{R},\ t=0,1,\dots,T-1,
\end{cases} 
\ea
where $\tilde{C}_{t}(\cdot)$ is defined in line with Eq. \eqref{aux_continuation_fun}
with $H(\cdot,\cdot)$ replaced by $\tilde{H}(\cdot,\cdot)$.
It is worth noting that, in evaluating $\tilde{C}_{t}\big(K(x,a)\big)$,
the knowledge of $\tilde{V}_{t+1}(\cdot)$ over $\partial \mathcal{X}_{R}$ might be in need,
and in such a situation, Eq. \eqref{aux_value_fun_boundary} is invoked.

We make some comparisons between Eq. \eqref{Bellman_eq-3}
and the Bellman equation \eqref{Bellman_eq-2} associated with the primal stochastic control model.
Firstly, in both equations, the state constraint $A_t(\cdot)$, transition equation between pre-action and post-action values $K(\cdot,\cdot)$, and reward functions are exactly the same.
Secondly, the value function $\tilde{V}_{t}(\cdot)$ is solely defined on a bounded set $\col \left(\mathcal{X}_{R}\right)$,
whilst $V_{t}(\cdot)$ is defined on the set $\mathcal{X}$ which might be unbounded in many financial applications
as the primal state process $X$ may correspond to a certain risky asset valued 
on the whole positive real line.

We will characterize the discrepancy between the value functions $\tilde{V}_{t}(\cdot)$ and $V_{t}(\cdot)$ in a certain sense.
To this end, it is necessary to impose some assumptions on the state process and the reward functions.
\begin{assumption}
\label{assum:tail_prob}
Let $X_{0}^{R}=X_{0} \in \mathring{\mathcal{X}}_{R}$.
There exists a measurable function $\mathcal{E}(\cdot,\cdot): \mathring{\mathcal{X}}_{R} \times \R_{>0} \longrightarrow [0,1]$ satisfying 
\ba
\label{tail_prob}
\inf_{{\sf a} \in \mathcal{A}}
\p \Big[ X_{t}=X_{t}^{R},\ \ \textup{for all}\ \ 1\leq t \leq T \Big]
\geq 1- \mathcal{E}(X_0,R).
\ea
\end{assumption}
$\mathcal{E}(X_0,R)$ in Eq. \eqref{tail_prob}
gives an upper bound for the probability that the auxiliary state process disagrees with the primal at some time 
before maturity regardless of the DM's action.
Since the primary difference between the auxiliary and primal value functions stems from the disparity between the associated state processes,
it is not surprising that the above inequality \eqref{tail_prob} plays an important role in characterizing the approximation error of $\tilde{V}_{t}(\cdot)$
as we will see later in the proof of Theorem \ref{thm:truncation_error_bound}.
The expression of $\mathcal{E}(X_0,R)$ should be specified for each specific application.

\begin{assumption}
\label{assum:growth_condition}
\begin{description}
\item[(i)]
There exists a measurable function $B(\cdot): \R^{d} \longrightarrow \R_{>0}$ and a generic constant $\zeta$ independent of $t$ and $R$ such that 
$\big|f_T(x)\big|^2 \leq B(x)$, $\sup_{a \in A_{t}(x)} \big|f_{t}(x,a)\big|^2 \leq B(x)$, 
\ba
\label{growth_condition}
\sup_{{\sf a} \in \mathcal{A}}
\e\left[B(X_{t+1})\right]
\leq \zeta,
\ \ \textup{and}\ \  
\sup_{{\sf a} \in \mathcal{A}}
\e\left[f_T(X_{T})\right]
\leq \zeta,
\ \ \textup{for all}\ \ t \in \mathcal{T}_{0}.
\ea

\vspace{-1ex}\item[(ii)]
There exists a measurable function $\xi(\cdot): \R_{>0} \longrightarrow \R_{>0}$ satisfying
\bas
\sup_{x \in \mathcal{X}_{R}} \left( \sup_{a \in A_{t}(x)}\big|f_{t}(x,a)\big|^2 \right)\leq \xi(R),
\ \ \textup{for all}\ \ t \in \mathcal{T}_{0},
\ \ \textup{and}\ \ 
\sup_{x \in \mathcal{X}_{R}} \big|f_T(x)\big|^2 \leq \xi(R).
\eas
\end{description}
\end{assumption}
In many applications $B(x)$ has a polynomial form 
and in such a situation,
the above assumption states that the reward functions are bounded by a certain polynomial from the above uniformly in $t$.
In the context of pricing financial products, 
this assumption says that the policy payoffs have a polynomial growth rate.

The following theorem quantifies the error stemming from using the auxiliary problem \eqref{aux_stochastic_control_problem} as a proxy for the primal stochastic control model \eqref{stochastic_control_problem}.
\begin{theorem}[Truncation Error Estimate]
\label{thm:truncation_error_bound}
Suppose Assumptions \ref{assum:moment_condition}, \ref{assum:tail_prob}, and \ref{assum:growth_condition} hold. 
Then
\ba
\label{truncation_error}
\left|V_{0}(X_0)-\tilde{V}_{0}(X_0)\right| \leq  T 
\sqrt{2\big(\xi(R) + \zeta \big) \mathcal{E}(X_0,R)}.
\ea
\end{theorem}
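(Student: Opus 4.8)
The plan is to squeeze the two value functions together by a symmetric pair of one-sided inequalities, namely $\tilde{V}_{0}(X_0)\le V_{0}(X_0)+B$ and $V_{0}(X_0)\le \tilde{V}_{0}(X_0)+B$ with $B:=T\sqrt{2(\xi(R)+\zeta)\,\mathcal{E}(X_0,R)}$, and then combine them. Write $J(\mathsf{a})$ and $\tilde{J}(\mathsf{a})$ for the objective functionals appearing in \eqref{stochastic_control_problem} and \eqref{aux_stochastic_control_problem}, so that $V_{0}=\sup_{\mathsf{a}\in\mathcal{A}}J(\mathsf{a})$ and $\tilde{V}_{0}=\sup_{\mathsf{a}\in\mathcal{A}^{R}}\tilde{J}(\mathsf{a})$. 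Set $\mathcal{G}:=\{X_t=X_t^{R}\ \text{for all}\ 1\le t\le T\}$; Assumption \ref{assum:tail_prob} supplies $\p[\mathcal{G}^{c}]\le \mathcal{E}(X_0,R)$ uniformly over $\mathsf{a}\in\mathcal{A}$. Since the two suprema range over different feasible sets ($A_t(X_t)$ versus $A_t(X_t^{R})$), the central device is an action correspondence: given $\mathsf{a}$ admissible for one problem, define $\hat{\mathsf{a}}$ by $\hat{a}_t:=a_t$ on $\{X_t=X_t^{R}\}$ and $\hat{a}_t:=a_t^{*}(X_t^{R})$ (any measurable feasible selection) on $\{X_t\ne X_t^{R}\}$. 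Because $\{X_t=X_t^{R}\}\in\mathcal{F}_t$ and $A_t(X_t)=A_t(X_t^{R})$ on that event, $\hat{\mathsf{a}}$ is admissible for the other problem, and, crucially, $\hat{a}_t=a_t$ on $\mathcal{G}$.

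With this matching in place, for a fixed action I would control the integrand of $J(\mathsf{a})-\tilde{J}(\hat{\mathsf{a}})$ (and of the reverse difference) term by term. Each summand $f_t(X_t,a_t)-f_t(X_t^{R},\hat{a}_t)$ vanishes on $\{X_t=X_t^{R}\}$ by construction, hence is supported on $\{X_t\ne X_t^{R}\}\subseteq\mathcal{G}^{c}$; the $t=0$ summand drops out since $X_0=X_0^{R}$, leaving exactly $T$ contributing terms. On each term I would apply Cauchy--Schwarz to peel off the indicator, bound $\p[X_t\ne X_t^{R}]\le \mathcal{E}(X_0,R)$, and then use $(\alpha-\beta)^2\le 2\alpha^2+2\beta^2$ together with Assumption \ref{assum:growth_condition}: the primal reward contributes $\e[|f_t(X_t,a_t)|^2]\le \e[B(X_t)]\le \zeta$, while the auxiliary reward, evaluated at $X_t^{R}\in\col(\mathcal{X}_{R})$, contributes the deterministic bound $\xi(R)$. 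This gives $\e\big[|f_t(X_t,a_t)-f_t(X_t^{R},\hat{a}_t)|\,\I_{\{X_t\ne X_t^{R}\}}\big]\le \sqrt{2(\zeta+\xi(R))\,\mathcal{E}(X_0,R)}$ for each term (the terminal term being analogous). Summing the $T$ terms and using $\varphi^{t}\le 1$ produces the bound $B$, and taking suprema over the respective admissible sets yields both one-sided inequalities and hence \eqref{truncation_error}.

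The main obstacle is not the algebraic estimate but making the action correspondence airtight. I must check that $\hat{\mathsf{a}}$ is genuinely admissible, i.e. measurability of $\hat{a}_t$ (which follows from $\{X_t=X_t^{R}\}\in\mathcal{F}_t$ together with a measurable selection of $a_t^{*}(\cdot)$) and feasibility $\hat{a}_t\in A_t(X_t^{R})$. The most delicate point is that matching the actions \emph{precisely} on $\{X_t=X_t^{R}\}$, rather than merely before the exit time $\tau^{R}$, is exactly what forces each summand's support into $\{X_t\ne X_t^{R}\}$; this matters because Assumption \ref{assum:tail_prob} controls $\p[\mathcal{G}^{c}]$ but not the larger event $\{\tau^{R}\le T\}$, on which the auxiliary state has frozen yet may coincidentally agree with the primal state on the boundary. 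I would also record a minor technical verification that the pointwise bound $\xi(R)$ of Assumption \ref{assum:growth_condition}(ii), stated over $\mathcal{X}_{R}$, extends to $\col(\mathcal{X}_{R})$ so as to cover the frozen boundary values taken by $X_t^{R}$.
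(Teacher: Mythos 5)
Your estimation chain is exactly the paper's: peel off the indicator by Cauchy--Schwarz, use $(\alpha-\beta)^2\le 2\alpha^2+2\beta^2$ with the bounds $\zeta$ and $\xi(R)$ from Assumption \ref{assum:growth_condition}, invoke Assumption \ref{assum:tail_prob} for the mismatch probability, and count $T$ surviving terms; and your symmetric action-coupling is the same device the paper uses (it packages it as Lemma \ref{lemma:201809051441} and Corollary \ref{coro:20180914}, which rewrite $\tilde{V}_0$ as a supremum over $\mathcal{A}$ so that the \emph{same} action appears in both functionals). However, there is a genuine gap, and it sits precisely where you write ``$X_t^{R}$'' as though it denoted a single process. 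Two distinct processes are in play: the localization of $X(\mathsf{a})$ defined by Eq.~\eqref{aux_state_process}, which is the pair to which Assumption \ref{assum:tail_prob} refers (there the primal and auxiliary processes are driven by the \emph{same} action), and the auxiliary process generated by the recursion \eqref{transition_aux_state_process} under the modified action $\hat{\mathsf{a}}$, which is the process that actually enters the functional $\tilde{J}(\hat{\mathsf{a}})$ you compare against (and which also appears in the very definition of admissibility for $\mathcal{A}^{R}$). Your matching events $\{X_t=X_t^{R}\}$ must be read with the second process, otherwise ``$\hat{a}_t=a_t$ on that event'' is not by construction; but $\p[\mathcal{G}^{c}]\le\mathcal{E}(X_0,R)$ is supplied only for the first. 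Hence the inclusion your estimate rests on, $\{X_t(\mathsf{a})\neq X_t^{R}(\hat{\mathsf{a}})\ \text{for some}\ t\}\subseteq\mathcal{G}^{c}$, is not tautological, and as written the invocation of Assumption \ref{assum:tail_prob} is unjustified.

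To close the gap one must prove that $X_t^{R}(\hat{\mathsf{a}})$ coincides almost surely (or at least on $\mathcal{G}$) with the localization of $X(\mathsf{a})$. This is exactly the content of the paper's Lemma \ref{lemma:201809051441}, proved by forward induction using Proposition \ref{prop:transition_eq_aux_process}: whenever the relevant states agree the two actions coincide by construction, so both recursions produce the same next state, and once the auxiliary state reaches $\partial\mathcal{X}_{R}$ both versions freeze. The induction does go through for your matching rule, so your argument is completable; but the identity is the one non-routine step of the whole proof --- it is what the paper isolates as its key lemma before running the estimate --- and your proposal assumes it silently while flagging only measurability and feasibility as the things to check. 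For what it is worth, your two side remarks are correct: matching on $\{X_t=X_t^{R}\}$ rather than on $\{\tau^{R}>t\}$ is indeed what keeps the summands supported in $\mathcal{G}^{c}$ in a pairwise comparison (the paper sidesteps this by proving Corollary \ref{coro:20180914} first, after which the differences vanish on agreement automatically because the action is common), and the extension of the bound $\xi(R)$ from $\mathcal{X}_{R}$ to $\col(\mathcal{X}_{R})$ is a real, if minor, point that the paper also glosses over.
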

The proof of the above theorem is relegated to Appendix \ref{app:proof_thm1}.
The inequality \eqref{truncation_error} can be understood as follows.
The term $\big(\xi(R) + \zeta \big) \mathcal{E}(X_0,R)$ corresponds to an upper bound for the discrepancy between the reward functions
of the two stochastic control models \eqref{stochastic_control_problem} and \eqref{aux_stochastic_control_problem} at each time step.
Since such a difference primarily stems from replacing the primal state process $X$ by $X^{R}$,
it is not surprising that the term $\mathcal{E}(X_0,R)$ appears in the error estimate.
Furthermore, the two terms $\xi(R)$ and $\zeta$ correspond to certain upper bounds of
the magnitudes of the reward terms $f_{t}^2\left(X_{t}^{R},a_{t}\right)$ and $f_{t}^2\left(X_{t},a_{t}\right)$, respectively,
and therefore a square root arises in the inequality \eqref{truncation_error}. 
Finally, the discrepancy between the two value functions is amplified as the time horizon is prolonged, 
which is reflected by the existence of factor $T$ in the above error estimate.

\subsection{A Backward Simulation and Backward Updating Algorithm}
\label{sec:algorithm}
\subsubsection{Simulation of post-action value}
In this subsection we propose an LSMC algorithm which simulates
the state process without referring to the optimal action.
Recall from Step 2.1 of the FSBU algorithm in Section \ref{sec:LSMC_review} that the ultimate goal of simulating
the state process is generating a random sample of the post-action value of the state process 
which acts as a crucial input for the regression step.
This naturally inspires us to directly simulate the post-action value $X_{t^{+}}$
from an artificial probability distribution. 
The term ``artificial" stresses the fact that
such a distribution might not coincide with the distribution of $X_{t^{+}}$ under the optimal action process.

\begin{figure}
\begin{adjustbox}{max totalsize={0.7\textwidth}{.5\textheight},center}
\begin{tikzpicture}
\draw [fill=black!20!red!20!white]  plot[smooth, tension=.7] coordinates {(-4,2.5) (-3,3) (-2,2.8) (-0.8,2.5) (-0.5,1.5) (0.5,0) (0,-2)(-1.5,-2.5) (-4,-2) (-3.5,-0.5) (-5,1) (-4,2.5)};
\draw [fill=black!20!blue!20!white]  plot[smooth, tension=.7] coordinates {(-3,2.2) (-3.5,2.2) (-4,1.5) (-3,0) (-3,-1) (-2.5,-1.7) (-1.5,-2) (-0.5,-1) (-0.5,0) (-1,1.5)  (-3,2.2)};
\node at (-2,0.5) {$\mathcal{K}_{t,R}$};
\node at (-2,2.5) {$\widehat{\mathcal{K}}_{t,R}$};
\draw [thick, red, fill=black!20!blue!20!white]  plot[smooth, tension=.7] coordinates {(3,2.5) (2.5,2.2) (2,1.5) (3,0) (3,-1) (3.5,-1.7)  (5.5,-1) (5.5,0) (5,1.5) (3.5,2.5) (3,2.5)};
\draw[-latex,thick,blue!60] (-1,.5) to node[above,midway,black]{$\tilde{H}(\cdot,\varepsilon_{t+1})$} (3.5,.5);
\draw[-latex,thick,red] (-1,2) to [out=60,in=120]  node[above,midway,black]{$\tilde{H}(\cdot,\varepsilon_{t+1})$} (2.5,2.2);
\node at (4,.5) {$\mathcal{X}_{R}$};
\draw[-latex](5,2.5)node[right]{\small $\partial \mathcal{X}_{R}$}
	  to[out=120,in=30] (3.5,2.5);
\end{tikzpicture}
\end{adjustbox}
\captionsetup{width=0.9\textwidth}
\caption{A graphical illustration for the relationships between $\mathcal{K}_{t,R}, \widehat{\mathcal{K}}_{t,R}$, 
and $\mathcal{X}_{R}$.}
\label{fig:simu_range_post_action}
\end{figure}
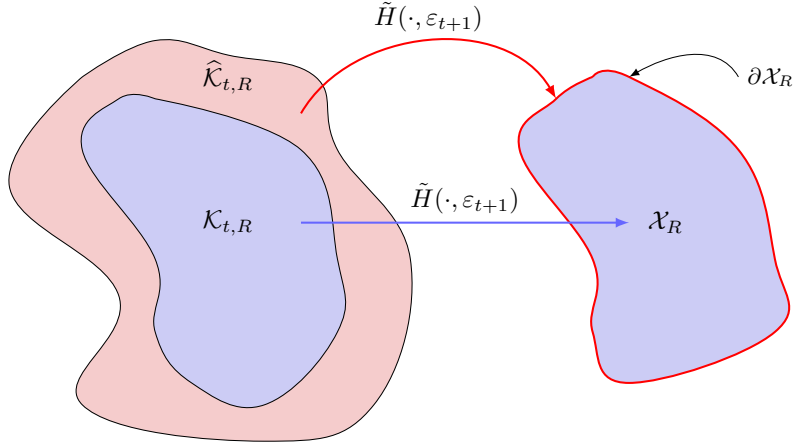

Since the value function $\tilde{V}_{t}(\cdot)$ is explicitly given by Eq. \eqref{aux_value_fun_boundary} over $\partial \mathcal{X}_{R}$,
the primary goal of our proposed LSMC algorithm is to get a numerical estimate for the value function
over the open set $\mathring{\mathcal{X}}_{R}$.
In view of this, we may circumscribe the support of the artificial probability distribution 
that the post-action values are simulated from.
First note that the range of post-action value of the auxiliary state process denoted by
$\widetilde{\mathcal{K}}_{t,R}$ is given by
\bas
\widetilde{\mathcal{K}}_{t,R}:= 
\bigcup_{x \in \mathring{\mathcal{X}}_{R}}
\left(\bigcup_{a \in A_{t}(x) }\big\{ K(x,a)\big\}\right),
\ \ \textup{for}\ \ t\in \mathcal{T}_{0}.
\eas
Consider the following subset:
\ba
\label{simu_range_post_action}
\widehat{\mathcal{K}}_{t,R}:=
\left\{k \in \widetilde{\mathcal{K}}_{t,R}\ \Big|
\ \tilde{H}(k,e_1)=\tilde{H}(k,e_2) \in  \partial \mathcal{X}_{R},
\ \forall e_{1}\ \textup{and}\ e_2 \in \range\left(\varepsilon_{t+1}\right)
\right\},
\ea
for $t\in \mathcal{T}_{0}$,
where $\range\left(\varepsilon_{t+1}\right)$ is the set of all values the random innovation $\varepsilon_{t+1}$ might take
and $\tilde{H}(\cdot,\cdot)$ is the transition equation relating the post-action value at time step $t$ to the state variable at time step $t+1$ which is given in Eq. \eqref{H}.

The preceding equation states that
$X_{t+1}$ will stop at a certain point in the boundary set $\partial \mathcal{X}_{R}$ 
if $X_{t^{+}}^{R}:=K\left(X_{t}^{R},a_{t}\right)$ lies in the set $\widehat{\mathcal{K}}_{t,R}$;
see Figure \ref{fig:simu_range_post_action} for a graphical illustration.
To make the matter more concrete, let us consider the example of pricing variable annuities
(see, e.g. \cite{Huang2016} and \cite{Shen2017})
where $X_{t^{+}}^{R}$ corresponds the post-withdrawal value of the investment account.
If the investment account is depleted after the policyholder's withdrawal 
(i.e., $X_{t^{+}}^{R}=0$),
it remains exhausted forever (i.e., $X_{n}^{R}=0$ for $n=t+1,\dots,T$).
In such an example, $\widehat{\mathcal{K}}_{t,R}$ is a singleton $\{0\}$.
In view of the above discussion and Eq. \eqref{aux_value_fun_boundary},
for any $k \in \widehat{\mathcal{K}}_{t,R}$,
we observe
\ba
\label{continuation_fun_after_absorbing}
\tilde{C}_{t}(k)= \e \left[\tilde{V}_{t+1}\big(\tilde{H}(k,\varepsilon_{t+1})\big)\right] 
= \tilde{V}_{t+1}\big(\tilde{H}(k,e)\big)
\ea
which has a value independent of $e \in \range(\varepsilon_{t+1})$
and is given by Eq. \eqref{aux_value_fun_boundary}.
Therefore, at time step $t$,
it suffices to get a regression estimate for the continuation function $\tilde{C}_{t}(\cdot)$ on the set 
$\mathcal{K}_{t,R}:=\widetilde{\mathcal{K}}_{t,R} \backslash \widehat{\mathcal{K}}_{t,R}$.

\subsubsection{The algorithm}
Now we present the Backward Simulation and Backward Updating (BSBU) algorithm as follows.
\begin{enumerate}
\item
\textbf{Initiation:}\quad 
Set $\tilde{V}_{T}^{\textup{E}}(x)=f_T(x)$ for $x \in \col\left(\mathcal{X}_{R}\right)$.
For $t=T-1,T-2,\dots,0$, do the two steps below.

\item
\textbf{Backward Simulation:}
\begin{enumerate}
\item[2.1] \textbf{Simulation of post-action value} \quad
Generate a sample of the post-action values denoted by 
\bas
\mathbf{X}_{t^{+}}^{M}:=\left\{X_{t^{+}}^{(m)},\ m=1,2,\dots,M\right\}
\eas
from a probability distribution ${\sf Q}_{t,R}$ with support $\mathcal{K}_{t,R}$.

\item[2.2] \textbf{Simulation of the state process} \quad
Construct the sample of the state process at time step $n+1$ according to
\ba
\label{sample_post_action-2}
\mathbf{X}_{t+1}^{M}:=\left\{X_{t+1}^{(m)}=\tilde{H}\left(X_{t^{+}}^{(m)},\varepsilon_{t+1}^{(m)}\right),\ m=1,2,\dots, M\right\}.
\ea
with $\left\{\varepsilon_{t+1}^{(m)}, m=1,2,\dots,M\right\}$ being a sample of the random innovations.
\end{enumerate}

\item \textbf{Backward Updating:}
\begin{enumerate}
\item[3.1] \textbf{Data preparation} \quad
Given a numerical estimate of value function at time step $t+1$, denoted by $\tilde{V}_{t+1}^{\textup{E}}(\cdot)$, construct the sample
\ba
\label{response_variable}
\mathbf{Y}_{t+1}^{M}:=\left\{\tilde{V}_{t+1}^{\textup{E}}\left(X_{t+1}^{(m)}\right),m=1,2,\dots, M\right\}.
\ea

\item[3.2] \textbf{Regression} \quad
Take $\mathbf{Y}_{t+1}^{M}$ and $\mathbf{X}_{t^{+}}^{M}$ as the samples of response variable and regresssor, respectively, and employ a certain non-parametric regression to obtain a regression estimate 
$\tilde{C}_{t}^{\textup{E}}(\cdot)$ over the set $\mathcal{K}_{t,R}$.
For $k\in \widehat{\mathcal{K}}_{t,R}$, we set $\tilde{C}_{t}^{\textup{E}}(k)=\tilde{C}_{t}(k)$ 
with $\tilde{C}_{t}(\cdot)$ given by Eq. \eqref{continuation_fun_after_absorbing}. 

\item[3.3] \textbf{Optimization} \quad
An estimate for the value function at time step $t$ is given by:
\ba
\label{Bellman_eq-4}
\tilde{V}_{t}^{\textup{E}}(x) =
\sup \limits_{a \in A_{t}(x)} 
\Big[f_{t}(x, a)+ \varphi \tilde{C}_{t}^{\textup{E}}\big(K(x,a)\big)\Big], 
\ \ \textup{for}\ \ x\in \mathring{\mathcal{X}}_{R}.
\ea
For $x \in \partial \mathcal{X}_{R}$, we set $\tilde{V}_{t}^{\textup{E}}\left(x\right)=\tilde{V}_{t}(x)$ 
with $\tilde{V}_{t}(\cdot)$ given by Eq. \eqref{aux_value_fun_boundary}.
\end{enumerate}

\end{enumerate}
In Step 3.2, we prescribe $\tilde{C}_{t}(k)$ for the value of $\tilde{C}_{t}^{\textup{E}}(k)$ when $k \in \widehat{\mathcal{K}}_{t,R}$
because $K\left(X_{t}^{(m)},a\right)$ might fall in the set $\widehat{\mathcal{K}}_{t,R}$.
Similarly, in Step 3.3, Eq. \eqref{aux_value_fun_boundary} is invoked to evaluate $\tilde{V}_{t}^{\textup{E}}\left(x\right)$
for $x \in \partial \mathcal{X}_{R}$ as $X_{t}^{(m)}$ generated by Eq. \eqref{sample_post_action-2} 
may lie on $\partial \mathcal{X}_{R}$, the boundary set of the truncated domain.
The backward information propagation in the above BSBU algorithm 
is illustrated in Figure \ref{fig:propagation-1}.

\begin{figure}
\centering
\begin{tikzpicture}[>=stealth,thick]
\draw (-2,0) node (e) {$\cdots$};
\draw (0,0) node (a) {$\tilde{V}_{t+1}^{\textup{E}}(\cdot)$};
\draw[->, thick, red] (e) to (a);
\draw (4,0) node (b) {$\tilde{C}_{t}^{\textup{E}}(\cdot)$};
\draw[->, thick, red] (a) to node [above, align=center, text width=2cm,black] {Regression} (b);
\draw (8,0) node (c) {$\tilde{V}_{t}^{\textup{E}}(\cdot)$};
\draw[->, thick, red] (b) to node [above, align=center, text width=2cm,black] {Eq. \eqref{Bellman_eq-4}} (c);
\draw (10,0) node (d) {$\dots$};
\draw[->, thick, red] (c) to (d);
\end{tikzpicture}
\captionsetup{width=0.9\textwidth}
\caption{A diagram for backward information propagation in the BSBU algorithm.}
\label{fig:propagation-1}
\end{figure}
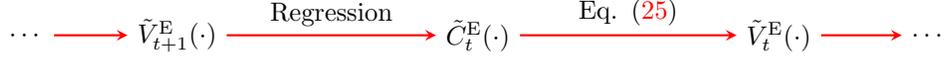

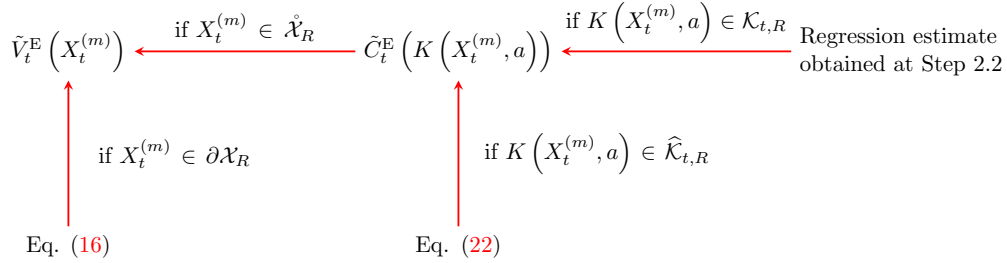
\begin{figure}
\centering
\begin{adjustbox}{max totalsize={0.9\textwidth}{.7\textheight},center}
\begin{tikzpicture}[>=stealth,thick]
\draw (0,0) node (a) {$\tilde{V}_{t}^{\textup{E}}\left(X_{t}^{(m)}\right)$};
\draw (6,0) node (b) {$\tilde{C}_{t}^{\textup{E}}\left(K\left(X_{t}^{(m)},a\right)\right)$};
\draw[<-, thick, red] (a) to node [above, align=center, text width=3cm,black] {if $X_{t}^{(m)} \in \mathring{\mathcal{X}}_{R}$} (b);
\draw (0,-3) node (c) {Eq. \eqref{aux_value_fun_boundary}};
\draw[<-, thick, red] (a) to node [right, align=center, text width=3cm,black] {if $X_{t}^{(m)} \in \partial \mathcal{X}_{R}$} (c);
\draw (13,0) node (d) [text width= 3.5cm]{Regression estimate obtained at Step 2.2};
\draw[->, thick, red] (d) to node [above, align=center, text width=3.5cm,black] {if $K\left(X_{t}^{(m)},a\right) \in \mathcal{K}_{t,R}$} (b);
\draw (6,-3) node [align=center, text width=3cm] (e) {Eq. \eqref{continuation_fun_after_absorbing}};
\draw[->, thick, red] (e) to node [right, align=center, text width=4cm,black] 
{if $K\left(X_{t}^{(m)},a\right) \in \widehat{\mathcal{K}}_{t,R}$} (b);
\end{tikzpicture}
\end{adjustbox}
\captionsetup{width=0.9\textwidth}
\caption{A diagram for the information propagation in evaluating {\small $ \tilde{V}_{t}^{\textup{E}}\left(X_{t}^{(m)}\right)$}.}
\label{fig:information_propagation-2}
\end{figure}

\subsubsection{Discussions}
Comparing the above BSBU algorithm and the FSBU counterpart in Section \ref{sec:LSMC_review}, 
we have the following observations.
\begin{enumerate}
\item 
Firstly, the primary difference of the two algorithms lies in how to generate the post-action values of state process, i.e., $\mathbf{X}_{t^{+}}^{M}$.
The control randomization method is a forward simulation scheme
while the BSBU algorithm directly generates post-action value from a certain prior distribution.
Indeed, the FSBU algorithm can be viewed as a special BSBU algorithm
if ${\sf Q}_{t,R}$ is chosen as the probability distribution of the post-action value from a control randomization procedure.
In general, both methods do not yield the distribution of $X_{t^{+}}$ driven by the optimal action,
and thus, there is no loss to directly generate $\mathbf{X}_{t^{+}}^{M}$ from a prior distribution ${\sf Q}_{t,R}$.

\item
Secondly, the BSBU method has the advantage of reducing memory and time costs.
On the one hand, one does not need to store the sample of whole trajectories at each time step in the BSBU algorithm.
On the other hand, 
the total time cost of simulating the state process is of $O(T)$ in the BSBU algorithm,
while it is of $O\left(T^2\right)$ in the FSBU counterpart; 
see the item ``Cost of forward simulation" in Section \ref{sec:LSMC_review}.

\item
Thirdly, the BSBU algorithm circumvents extrapolating the numerical estimates of the continuation function and the value function.
It is notable that $\tilde{C}_{t}^{\textup{E}}(\cdot)$ and $\tilde{V}_{t}^{\textup{E}}(\cdot)$ are obtained over the sets
$\widetilde{\mathcal{K}}_{t,R}$ and $\textup{col}\left(\mathcal{X}_{R}\right)$, respectively, at time step $t$;
see Steps 2.2-2.3 of the above BSBU algorithm.
At the time step $t-1$, the BSBU algorithm does not require the knowledge of the value function (resp. continuation function) 
outside $\textup{col}\left(\mathcal{X}_{R}\right)$ (resp. $\widetilde{\mathcal{K}}_{t,R}$)
in computing the regression data $\mathbf{Y}_{t}^{M}$; see Figure \ref{fig:information_propagation-2} for a graphical illustration. 
This nice property inherits from the construction of the auxiliary state process $X^{R}$
whose values are confined to a bounded set.
In the FSBU algorithm, however,
the state process is not restrained and $K\left(X_{t}^{(m)},a\right)$ might lie outside 
the regression domain for $\tilde{C}_{t}^{\textup{E}}(\cdot)$.
In such a situation, extrapolating the numerical solution causes extra error which is hard to control.
\end{enumerate}

\subsection{Sieve Estimation Method}
\label{sec:sieve_estimation}
In this subsection, we discuss the details of the regression method
used to estimate the continuation function in our BSBU algorithm.
\subsubsection{Selection criteria for regression method}
In Section \ref{sec:LSMC_review} we have discussed the potential issue which may be associated with a regression method
when used for the estimation of the continuation function of a stochastic control problem;
see the item ``Choice of regression method".
Based on the discussion, we propose the following criteria for the choice of regression method 
in estimating continuation function.
\begin{description}
\vspace{-1ex}\item[\bf (C1)]
{\bf Small memory cost}\quad 
The regression problem embedded in an LSMC algorithm usually exhibits extraordinarily large sample size.
Thus, an appropriate regression method should have small memory requirement.
This criterion excludes the kernel method (\cite{Nadaraya1964} and \cite{Watson1964}), 
local-polynomial regression method (\cite{Fan1996}),
and Isotonic regression method (\cite{Robertson1988}) 
which require storing all sample points in the memory
in order to compute the regression function at any point in the domain.

\vspace{-1ex}\item[\bf (C2)]
{\bf Computationally cheap}\quad 
In almost all nonparametric regression methods, a certain parameter 
(referred to as \textit{tuning} parameter in statistical literature)
is used to avoid undesirable overfitting or underfitting of the regression model.
Determining the optimal value of such a tuning parameter is usually computationally intensive.
Therefore, an ideal regression method should be insensitive to the tuning parameter.
\end{description}
In view of the above two criteria,
we have a limited number of suitable choices 
despite the voluminous nonparametric regression methods in the literature.
In the following, we discuss a class of regression methods referred to as the \textit{sieve estimation method}
which include the least-squares method of \cite{Longstaff2001} as a special case.

\subsubsection{Shape-preserving sieve estimation}
We give a brief introduction to the sieve estimation method; 
refer to \cite{Chen2007} for a comprehensive review.
Suppose we have a sample of independent and identically distributed (i.i.d.) random pairs $\left\{\left(U^{(m)},Z^{(m)}\right)\right\}_{m=1}^{M}$
where $Z^{(m)}$ is a $\R^{r}$-valued random vector with compact support $\mathcal{Z}$ 
and $U^{(m)}$ is a univariate random variable.
Define the function $g(\cdot): \mathcal{Z} \longrightarrow \R$ as 
\ba
\label{regression_fun}
g\left(z\right) = \e \left[\left. U^{(m)} \right|Z^{(m)}=z\right]
\ea
which is independent of $m$.
In the context of our BSBU algorithm,
$U^{(m)}$ and $Z^{(m)}$ correspond to $\tilde{V}_{t+1}\left(X_{t+1}^{(m)}\right)$
and  $X_{t^{+}}^{(m)}$, respectively,
and the parallel function $g(\cdot)$ is the continuation function $\tilde{C}_{t}(\cdot)$.

The sieve estimation method strives to estimate the functional form of $g(\cdot)$
by solving the following optimization problem:
\ba
\label{optimization_sieve}
\hat{g}(\cdot):= \arg \min_{h(\cdot) \in \mathcal{H}_{J}} 
\frac{1}{M} \sum_{m=1}^{M}\left[U^{(m)}- h \left(Z^{(m)}\right)\right]^2,
\ea
where $\mathcal{H}_{J}$ is a finite-dimensional functional space depending on a certain parameter $J$
and is called as \textit{sieve space}.
Intuitively, the ampler the sieve space is,
the smaller the ``gap" between the $\mathcal{H}_{J}$ and the function $g(\cdot)$ would be.
The price to pay is that larger estimation error is incurred for a richer sieve space due to limited sample size $M$.
Therefore, one has to balance such a trade-off by controlling the complexity of the sieve space
and this is achieved by tuning the parameter $J$. 
To make the matter more concrete, we consider two examples of the sieve space in the sequel.
\begin{example}[Linear Sieve Space]
Let $\left\{\phi_{j}(\cdot): \mathcal{Z} \longrightarrow \R\right\}_{j\in \N}$ 
be a sequence of basis functions indexed by $j \in \N$.
Consider the sieve space defined by
\ba
\label{linear_sieve_space}
\mathcal{H}_{J}= \left\{h(\cdot):\quad  
h(z)=\sum_{j=0}^{J} \beta_{j} \phi_{j}(z), \ \ \beta_{j} \in \R \right\}.
\ea
\end{example}
The above set $\mathcal{H}_{J}$ is essentially a linear span of finitely many basis functions
and is referred to as \textit{linear sieve space} in the statistical literature.

In the present context of stochastic control,  
the regression function $g(\cdot)$ corresponds to the continuation function
and it exhibits some shape properties such as monotonicity in many applications;
see \cite{Del2012} for pricing American option
and \cite{Huang2016} for valuing equity-linked insurance product, among others.
In view of this, it is natural to expect the element in the sieve space satisfies such shape constraints,
which in turn preserves the financial interpretations of the numerical result.
This can be achieved by considering a special linear sieve space in the following example.

\begin{example}[Shape-Preserving Sieve Space]
Let $\left\{\phi_{j}(\cdot): \mathcal{Z} \longrightarrow \R\right\}_{j\in \N}$ 
be a sequence of basis functions indexed by $j \in \N$.
Denote $\bm{\beta}_{J}=\left(\beta_{0},\dots,\beta_{J}\right)^{\T}$ with 
$\beta_{j} \in \R,\ j=0,1,\dots,J$.
Consider the sieve space defined by
\ba
\label{shape_sieve_space}
\mathcal{H}_{J}= \left\{h(\cdot):\quad  
h(z)=\sum_{j=0}^{J} \beta_{j} \phi_{j}(z), 
\ \ \mathbf{A}_{J} \bm{\beta}_J \geq \bm{0}_{b(J)}\right\},
\ea
where $b(\cdot): \N \longrightarrow \N$ is some integer-valued function,
$\mathbf{A}_{J}$ is a $b(J)$-by-$(J+1)$ matrix, and $\bm{0}_{b(J)}$ is a $b(J)$-by-1 null vector.
\end{example}
\cite{Wang2012-1,Wang2012-2} show that 
each element in the sieve space in Eq. \eqref{shape_sieve_space} 
is a convex, concave, or monotone function (with respect to each coordinate)
with a special the choice of matrix $\mathbf{A}_{J}$
given that $\phi_{j}(\cdot),\ j=0,1,\dots,J,$ are Bernstein polynomials.
Some forms of $\mathbf{A}_{J}$ are relegated to Appendix \ref{app:constraint_matrix}.

For a linear sieve space $\mathcal{H}_{J}$ defined either in Eq. \eqref{linear_sieve_space} or Eq. \eqref{shape_sieve_space},
the solution of the preceding optimization problem \eqref{optimization_sieve}
is given by the following form:
\ba
\label{regression_estimate}
\hat{g}(z)
= \hat{\bm{\beta}}^{\T} \bm{\phi}(z),
\ \ \textup{for}\ \ z \in \mathcal{Z},
\ea
where $\bm{\phi}(z):=\left(\phi_{1}(z),\dots,\phi_{J}(z)\right)^{\T}$ 
and $\hat{\bm{\beta}}$ is the optimizer of the following optimization problem: 
\ba
\label{sieve_estimator}
\min\limits_{\bm{\beta} \in \R^{J}}
\frac{1}{M} \sum_{m=1}^{M}\left[U^{(m)}- \bm{\beta}^{\T} \bm{\phi} \left(Z^{(m)}\right)\right]^2,
\ \ \textup{subject to}\ \ \bm{\beta}^{\T} \bm{\phi}(\cdot) \in \mathcal{H}_{J}.
\ea
The dependency of $\hat{\bm{\beta}}$ and $\bm{\phi}(\cdot)$ on $J$ is suppressed for brevity.
In general, one has to solve a constrained quadratic programming problem to obtain $\hat{\bm{\beta}}$.

\subsubsection{Discussions}
One clear merit of the above linear sieve estimation is that one only needs to store the vector 
$\hat{\bm{\beta}}$ for future evaluation of the regression function $\hat{g}(\cdot)$ at any point in the domain
because basis functions $\bm{\phi}(\cdot)$ are explicitly known at the first hand.
This makes the linear sieve estimation method tailored to our present problem in terms of the criterion {\bf (C1)}.

For the criterion {\bf (C2)}, it is well documented in statistical literature that
when the true regression function $g(\cdot)$ satisfies certain shape constraints,
the shape-preserving estimate $\hat{g}(\cdot)$ obtained by \eqref{sieve_estimator} 
with $\mathcal{H}_{J}$ given by Eq. \eqref{shape_sieve_space}
is \textit{insensitive} to the tuning parameter $J$; 
see, e.g., \cite{Meyer2008} and \cite{Wang2012-1,Wang2012-2}. 
However, this is legitimate only when the true conditional mean function $g(\cdot)$ 
exhibits such convexity, concavity or monotonicity property.
For the general case when there is no prior shape information of $g(\cdot)$, 
one has to use the sieve space \eqref{linear_sieve_space}
and the regression estimate is sensitive to the choice of $J$.
Under such a situation, $J$ should be determined in a data-driven manner.
In Appendix \ref{app:CV},
we present some common methods of selecting $J$ discussed in the statistical literature.

Finally, the convergence of the sieve estimate $\hat{g}(\cdot)$ to the conditional mean function $g(\cdot)$
is ensured under some technical conditions.
These conditions are summarized in Assumption \ref{assum:sieve_estimation}
which is relegated to Appendix \ref{app:assum_sieve} 
for the clarity of presentation.

\subsection{Convergence Analysis of BSBU Algorithm}
\label{sec:conv_BSBU}
Now we are ready to conduct convergence analysis of the BSBU algorithm proposed in Section \ref{sec:algorithm}.
For the regression method employed in the algorithm,
we restrict our attention to the linear sieve estimator given by Eqs. \eqref{regression_estimate}
and \eqref{sieve_estimator} in the last subsection.

A complete convergence analysis of the BSBU algorithm should take account of three types of errors:
\begin{description}
\item[\bf (E1)] \textbf{Truncation Error}\quad 
The truncation error is caused by taking $\tilde{V}_{0}(X_{0})$ as a proxy for $V_{0}(X_{0})$.

\item[\bf (E2)] \textbf{Sieve Estimation Error}\quad
At each step of the BSBU algorithm,
the sieve estimation method is employed to get an estimate for the continuation function.
The associated sieve estimation error stems from two resources: 
(a) the bias caused by using a finite-dimensional sieve space $\mathcal{H}_{J}$ to approximate continuation function;
and (b) the statistical error in estimating coefficients of basis functions
under a limited sample size of $M$. 

\item[\bf (E3)] \textbf{Accumulation Error}\quad
The primal goal of the regression step in the BSBU algorithm is to estimate the continuation function 
of the auxiliary stochastic control problem, i.e., $\tilde{C}_{t}(\cdot)=\e \left[\tilde{V}_{t+1}(X_{t+1})\big| X_{t^{+}}=\cdot\ \right]$.
Thus, in principle, one should generate a random sample
\bas
\left\{
\left(\tilde{V}_{t+1}\left(X_{t+1}^{(m)}\right), X_{t^{+}}^{(m)}\right)
\right\}_{m=1}^{M}
\eas
based on which the sieve estimation method can be employed to get a regression estimate.
However, $\tilde{V}_{t+1}(\cdot)$ is not known exact at each time step of the algorithm
and is replaced by its numerical estimate $\tilde{V}_{t+1}^{\textup{E}}(\cdot)$; see Step 2.2 of the BSBU algorithm in Section \ref{sec:algorithm}.
Therefore, the the algorithm error accumulated from time step $T-1$ down to $t+1$
triggers a new type of error in addition to {\bf (E1)} and {\bf (E2)}.
\end{description}

{\bf (E1)} has been investigated in Theorem \ref{thm:truncation_error_bound}.
The discrepancy between $\tilde{V}_{0}(X_0)$ and $\tilde{V}_{0}^{\textup{E}}(X_{0})$
is contributed by {\bf (E2)} and {\bf (E3)}.
Distinguishing these two types of error plays a crucial role in our convergence analysis and 
this is inspired by \cite{Belomestny2010}.
Our main convergence result is summarized in the following theorem.

\begin{theorem}[BSBU Algorithm Error]
\label{thm:LSMC_error}
Suppose that 
\begin{description}
\item[(i)]
Assumptions \ref{assum:moment_condition}--\ref{assum:growth_condition} 
and Assumption \ref{assum:eigen_psi} in Appendix \ref{app:proof_thm2} hold;

\vspace{-1ex}\item[(ii)]
Assumption \ref{assum:sieve_estimation} in Appendix \ref{app:sieve} holds for
$U^{(m)}=V_{t+1}\left(X_{t+1}^{(m)}\right)$ 
and $Z^{(m)}=X_{t^{+}}^{(m)}$ uniformly in $t\in \mathcal{T}_{0}$,
where $X_{t^{+}}^{(m)}$ and $X_{t+1}^{(m)}$ are given in Steps 2.1 and 2.2 of the BSBU algorithm.
\end{description}
Then, there exists a constant $\psi$ such that
\ba
\label{LSMC_error}
    \left|\tilde{V}_{0}(X_{0})-\tilde{V}_{0}^{\textup{E}}(X_{0})\right|
=   O_{\p}\left(\sqrt{\psi^{T-1}\left(J/M + \rho_{J}^{2}\right)} \right),
\ \ \textup{as}\ \ M \longrightarrow \infty,
\ea
with ``Big O p" notation $O_{\p}(\cdot)$ defined in Definition \ref{def:big_O_P} of Appendix \ref{app:proof_thm2}.
\end{theorem}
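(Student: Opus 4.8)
The plan is to establish a backward recursion that propagates the error from time step $t+1$ to time step $t$, isolating the sieve estimation error \textbf{(E2)} at each step and tracking how it accumulates \textbf{(E3)} through the backward induction. Let me define the one-step error at time $t$ as $\Delta_t := \sup_{x \in \col(\mathcal{X}_R)} \big|\tilde{V}_t(x) - \tilde{V}_t^{\textup{E}}(x)\big|$ (or, more tractably, an $L^2$-type norm with respect to the simulation measure ${\sf Q}_{t,R}$, which is what the sieve theory naturally controls). The terminal condition gives $\Delta_T = 0$ since $\tilde{V}_T^{\textup{E}}(x) = f_T(x) = \tilde{V}_T(x)$ exactly. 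The heart of the argument is a recursive inequality of the form $\Delta_t \leq \sqrt{\psi}\,\Delta_{t+1} + (\text{fresh sieve error at step } t)$, which upon unrolling from $T-1$ down to $0$ yields the claimed $\psi^{T-1}$ amplification factor multiplying the per-step error $\sqrt{J/M + \rho_J^2}$.

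First I would decompose the step-$t$ error using the Lipschitz-type stability of the optimization operator. Since $\tilde{V}_t^{\textup{E}}(x) = \sup_{a} [f_t(x,a) + \varphi \tilde{C}_t^{\textup{E}}(K(x,a))]$ and $\tilde{V}_t(x) = \sup_a [f_t(x,a) + \varphi \tilde{C}_t(K(x,a))]$, and since the supremum is a non-expansive operation, I get the pointwise bound $\big|\tilde{V}_t(x) - \tilde{V}_t^{\textup{E}}(x)\big| \leq \varphi \sup_{a \in A_t(x)} \big|\tilde{C}_t(K(x,a)) - \tilde{C}_t^{\textup{E}}(K(x,a))\big|$, which reduces control of the value-function error to control of the continuation-function error over the set $\mathcal{K}_{t,R}$. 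The next step is to split this continuation error into an \emph{oracle} piece and an \emph{accumulation} piece by introducing the intermediate quantity $C_t^{\textup{oracle}}$, the sieve estimate one \emph{would} obtain using the true response $\tilde{V}_{t+1}(X_{t+1}^{(m)})$ rather than $\tilde{V}_{t+1}^{\textup{E}}(X_{t+1}^{(m)})$. The oracle piece is exactly the sieve estimation error \textbf{(E2)} governed by Assumption \ref{assum:sieve_estimation}, contributing the $\sqrt{J/M + \rho_J^2}$ term, where $J/M$ is the variance and $\rho_J^2$ is the squared sieve approximation bias. The accumulation piece compares two sieve estimates built from the same regressors $X_{t^+}^{(m)}$ but different responses differing by at most $\Delta_{t+1}$ in sup-norm; since the sieve projection is a bounded linear operator (a least-squares projection onto $\mathcal{H}_J$), this difference is controlled by $\Delta_{t+1}$ times the operator norm of the projection, and here is where the constant $\psi$ enters, arising from a bound on the eigenvalues of the empirical Gram matrix $\frac{1}{M}\sum_m \bm{\phi}(X_{t^+}^{(m)})\bm{\phi}(X_{t^+}^{(m)})^{\T}$ — precisely what Assumption \ref{assum:eigen_psi} is designed to supply.

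Combining these gives the recursion $\Delta_t \leq \sqrt{\psi}\,\Delta_{t+1} + O_{\p}\big(\sqrt{J/M + \rho_J^2}\big)$, and iterating from $\Delta_T = 0$ produces $\Delta_0 \leq \big(\sum_{t=0}^{T-1} (\sqrt{\psi})^t\big) O_{\p}\big(\sqrt{J/M + \rho_J^2}\big)$, whose dominant term scales like $\sqrt{\psi^{T-1}}$ times the per-step error, matching Eq. \eqref{LSMC_error}. The main obstacle, I expect, is rigorously handling the accumulation piece: the responses $\tilde{V}_{t+1}^{\textup{E}}(X_{t+1}^{(m)})$ are \emph{random functions} correlated with earlier simulation stages (see Remark \ref{rem:notation_numerical_estimate}), so one cannot directly invoke i.i.d. sieve theory on them. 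The backward simulation design of BSBU mitigates this — because fresh post-action samples are drawn from ${\sf Q}_{t,R}$ at each step, the regressors $X_{t^+}^{(m)}$ at step $t$ are independent of the randomness frozen into $\tilde{V}_{t+1}^{\textup{E}}(\cdot)$ — so I would condition on the $\sigma$-algebra generated by all simulations from step $T-1$ down to $t+1$, treat $\tilde{V}_{t+1}^{\textup{E}}(\cdot)$ as a fixed (measurable) function under this conditioning, and apply the sieve convergence result conditionally before taking expectations. Making this conditioning argument precise, and verifying that the $O_{\p}$ bounds pass cleanly through it uniformly in $t$, is the delicate technical core; the propagation of the $\sqrt{\psi}$ factor and the geometric-series summation are then routine.
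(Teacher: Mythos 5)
Your proposal follows essentially the same architecture as the paper's proof: what you call the ``oracle'' estimate (the sieve fit computed with the true responses $\tilde{V}_{t+1}\left(X_{t+1}^{(m)}\right)$) is exactly the paper's \emph{pseudo estimate} $\bar{\bm{\beta}}_{t}$ (the paper reserves ``oracle'' for the best sieve approximant $\tilde{\bm{\beta}}_{t}$); your oracle piece is handled by Lemma \ref{lemma:201809281926}, your accumulation piece by Lemma \ref{lemma:201809282016}, and your recursion with geometric amplification is Lemma \ref{lemma:L2_gap}, run in the empirical $L^2$ norm over the simulated points rather than in sup norm (your hedge toward an $L^2$-type metric is the right instinct, since sup-norm sieve rates would carry extra $\Upsilon(J)$ factors). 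Two corrections are worth making. First, you misattribute the role of Assumption \ref{assum:eigen_psi}: the constant $\psi$ in the paper arises from the eigenvalue bounds on the Gram matrix $\hat{\Phi}_{t}=M^{-1}\sum_{m}\bm{\phi}\left(X_{t^{+}}^{(m)}\right)\bm{\phi}^{\T}\left(X_{t^{+}}^{(m)}\right)$, which are supplied by Condition (iv) of Assumption \ref{assum:sieve_estimation} (via Lemma \ref{lemma:201810021158}), not by Assumption \ref{assum:eigen_psi}. The latter concerns a different matrix, $\hat{\Psi}_{t}$, built from the composite functions $\mathbf{h}_{t}(x)=\big(\sup_{a\in A_{t}(x)}\phi_{j}\big(K(x,a)\big)\big)_{j}$, and it is needed for a step your proposal glosses over: closing the recursion requires converting a coefficient-error bound $\norm{\hat{\bm{\beta}}_{t}-\bar{\bm{\beta}}_{t}}$ into an empirical $L^2$ bound on $\tilde{V}_{t}^{\textup{E}}-\tilde{V}_{t}$ evaluated at the \emph{next} stage's sample points, and because of the supremum over actions inside the Bellman operator this passes through $\mathbf{h}_{t}$ and $\lambda_{\max}\big(\hat{\Psi}_{t}\big)$, not through $\bm{\phi}$ alone. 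Without this device the recursion does not close at the claimed rate.

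Second, the ``delicate technical core'' you identify in your last paragraph — conditioning on the later-stage randomness so as to apply i.i.d.\ sieve theory to the estimated responses — is unnecessary, and indeed your own decomposition dissolves it. In the paper (and implicitly in your second paragraph), statistical sieve theory is applied \emph{only} to the pseudo estimate, whose responses are a fixed deterministic function of the fresh i.i.d.\ sample; the accumulation piece is bounded by a purely algebraic comparison of two least-squares problems sharing the same design matrix (suboptimality of the feasible point $\bar{\bm{\beta}}_{t}$, or equivalently your non-expansive-projection argument, plus the Gram eigenvalue bounds). No conditional limit theory is invoked anywhere, which is precisely the elegance of the pseudo-estimate device borrowed from Belomestny (2010). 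In fact your non-expansiveness observation for the constrained projection is slightly cleaner than the paper's own treatment of this step, since it yields the comparison $\norm{P\hat{\bm{\delta}}}\leq\norm{\mathbf{V}_{t+1}-\hat{\mathbf{V}}_{t+1}}$ directly without the residual term the paper must estimate separately.
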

The above theorem basically states that the numerical solution $\tilde{V}_{0}^{\textup{E}}(X_0)$
converges to $\tilde{V}_{0}(X_0)$ in probability as both the number of basis functions $J$ 
and number of simulated paths $M$ approach infinity at the rate specified by Condition (v) in Assumption \ref{assum:sieve_estimation}.
Since Theorem \ref{thm:truncation_error_bound} shows that 
the discrepancy between $\tilde{V}_{0}(X_0)$ and $V_{0}(X_{0})$ shrinks as $R$ increases,
the numerical estimate $\tilde{V}_{0}^{\textup{E}}(X_0)$ is a legitimate approximation for $V_{0}(X_0)$
when $R$, $J$, and $M$ are considerable.
The R.H.S. of Eq. \eqref{LSMC_error} reveals that the overall BSBU algorithm error arises from 
the two resources discussed in the previous item {\bf (E2)},
which are indicated by the terms $\rho_{J}$ and $J/M$, respectively.
Furthermore, Eq. \eqref{LSMC_error} also shows that 
such a regression error is escalated by a factor $\psi$ at each time step, 
which reflects the error accumulation from time step $T-1$ down to time step $0$
and is in line with the earlier discussion in the item {\bf (E3)}.


\section{Application: Pricing Equity-linked Insurance Products}
\label{sec:VA}
In this section, we apply the BSBU algorithm to the pricing of equity-linked insurance products.
This pricing problem is an appropriate example to show the limitations of the FSBU algorithm commented in Section \ref{sec:LSMC_review}.
For the convenience of illustration, 
the contract we study here is a simplified version of variable annuities (VAs);
for the discussions on more generic policies, we refer to
\cite{Azimzadeh2015}, \cite{Huang2016}, \cite{Huang2017}, and \cite{Shen2017}, among others.

\subsection{Contract Description}
We give a brief introduction to the VA.
VAs are equity-linked insurance products issued by insurance companies.
At the inception of the contract, 
the policyholder (PH) pays a lump sum $W_0$ to the insurer which is invested into a certain risky asset.
The PH is entitled to withdraw any portion of the investment before the maturity.
She also enjoys certain guaranteed payments regardless of the performance of the investment account.
Therefore, the insurer provides downside protection for a potential market decline.
As a compensation, the insurer deducts insurance fees from the investment account
and trade available securities to hedge his risk exposure.
Thus, no-arbitrage pricing has been the dominating paradigm for pricing VAs in the literature.
The primary challenge of this pricing problem stems from the uncertainty of the PH's withdrawal behavior.
This is conventionally resolved by studying the optimal
withdrawal strategy of the PH, 
which naturally leads to a stochastic control problem;
see \cite{Dai2008}, \cite{Chen2008}, \cite{Huang2016}, and many others.

\subsection{Model Setup}
\label{sec:model_setup_VA}
In the following, we exemplify the model setup of Section \ref{sec:framework} in the present pricing problem.
The lattice $\mathcal{T}$ corresponds to the collection of all available withdrawal dates.
The first decision variable $\tau_t$ represents
the PH's decision to initialize the withdrawal or not
by taking values $1$ and $0$, respectively.
As we will see later, the payoff functions depend on the timing of the first withdrawal of the PH.
Therefore, a state variable $\{I_t\}_{t\in \mathcal{T}}$ is introduced to record the first-withdrawal-time, 
and its evolution mechanism is prescribed as follows: $I_{0}=0$, and
\ba
\label{trans_I}
I_{t+1}=S_{t}^{I}(I_{t},\tau_{t}):=
\begin{cases}
t, \ \, &\mbox{if}\ \, I_{t}=0\ \,\mbox{and}\ \, \tau_{t}=1,\\
I_{t}, \ \, &\mbox{otherwise},
\end{cases}
\ea
for $t\in \mathcal{T}_{0}$.
The feasible set of $\tau_t$ is a singleton $\{1\}$ if the withdrawal has been initialized, i.e., $I_t>0$;
otherwise, it is $\{0,1\}$.

Denote $(a)_{+}:=\max\{a,0\}$ and $a \vee b := \max\{a,b\}$.
The second state variable corresponds to the investment account and it evolves according to
\ba
\label{trans_W}
\begin{cases}
W_{0}&=W_{0},\\
W_{t+1}&=\underbrace{\big(W_{t}-\gamma_{t}\big)_{+}}_{\textup{post-withdrawal value}} \cdot
\varepsilon_{t+1},\ \ \gamma_{t}\in \big[0,W_{t} \vee G(I_{t})P_{0}\big],
\ \ t\in \mathcal{T}_{0},
\end{cases}
\ea
where $\gamma_{t}$ is the withdrawal amount of the PH at time $t$, 
$\varepsilon_{t+1}$ is the absolute return of the underlying asset over $[t,t+1]$,
and $G(I_t)$ is a certain percentage depending on the first-withdrawal-time $I_t$.
The above equation implies that
the PH can withdraw up to the amount of $G(I_{t})P_{0}$ 
even if the investment account is depleted, i.e., $W_{t}=0$.
The jump mechanism of the investment account across each withdrawal date is illustrated in Figure \ref{fig:investment_account}.

\newcommand{\nEmmett}[7]{
\pgfmathsetseed{#7}%
\draw[#4] (4.4,0)
\foreach \x in {1,...,#1}
{   -- ++(#2,0.02+rand*#3)
}
\foreach \x in {1,...,#6}
{   -- ++(#2,-0.01+rand*#3)
}
node[right] {#5};
}

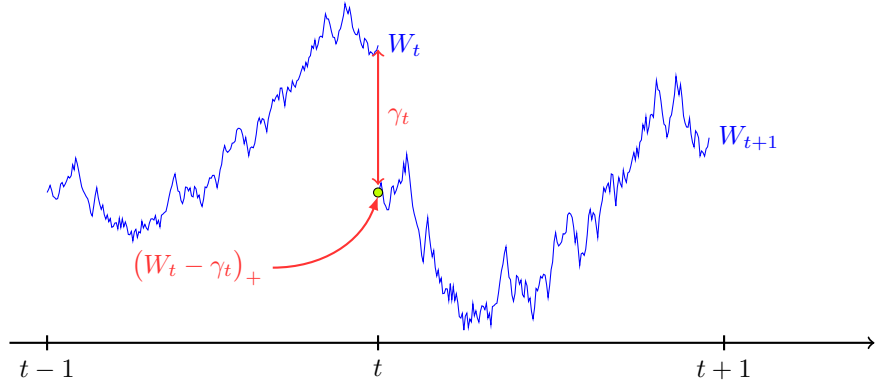
\begin{figure}
\centering
\begin{tikzpicture}
\draw[->,thick] (-0.5,-2) -- (11,-2);
\Emmett{200}{0.02}{0.15}{blue}{$W_{t}$}{20}{1};
\nEmmett{200}{0.02}{0.25}{blue}{$W_{t+1}$}{20}{1};
\draw[red!80,<->,thick] (4.4,0.1) to node [right] {$\gamma_{t}$} (4.4,1.9);
\draw[thick] (0,-1.9) -- (0,-2.1) node [below] {$t-1$};
\draw[thick] (4.4,-1.9) -- (4.4,-2.1) node [below] {$t$};
\draw[thick] (9,-1.9) -- (9,-2.1) node [below] {$t+1$};
\draw [fill=lime](4.4,0) circle (.06);
\draw[-latex,thick,red!80](3,-1)node[left]{$\big(W_{t}-\gamma_{t}\big)_{+}$}
	  to[out=0,in=250] (4.4,-0.05);
\end{tikzpicture}
\captionsetup{width=0.9\textwidth}
\caption{Jump mechanism of the investment account across a withdrawal date.}
\label{fig:investment_account}
\end{figure}

Now, the state process and the DM's action are
$X=\left\{X_{t}=(W_{t},I_{t})^{\T}\right\}_{t\in\mathcal{T}}$ and 
${\sf a}=\left\{a_{t}=(\gamma_t,\tau_t)^{\T}\right\}_{t \in \mathcal{T}}$, respectively,
with the superscript ``$\T$" denoting vector transpose.
In accordance with Eqs. \eqref{trans_I} and \eqref{trans_W},
the accompanying transition equation is
$X_{t+1}=H\big(K(X_{t},a_{t}),\varepsilon_{t+1}\big)$, where 
\ba
\label{post-action_value}
K(X_{t},a_{t})=\left(\big(W_{t}-\gamma_{t}\big)_{+},\
S_{t}^{I}(I_{t},\tau_{t})
\right)^{\T},\ \ 
H\big(k,\varepsilon_{t+1}\big)=\big(k_1\varepsilon_{t+1}, k_2\big)^{\T}，
\ea
with $k=(k_1,k_2)^{\T}\in [0,\infty)\times \mathcal{T}_{0}$.
The dependency of $K(\cdot,\cdot)$ on $t$ is suppressed for notational brevity.

Next, we discuss the feasible set of the PH's action. 
In principle, the withdrawal amount $\gamma_{t}$ takes values in a continum 
$\big[0,W_{t} \vee G(I_{t})P_{0}\big]$ (see Eq. \eqref{trans_W}).
However, it can be shown that the optimal withdrawal amount is limited to three choices:
1) $\gamma_{t}=0$, 
2) $\gamma_{t}=G(I_{t})P_{0}$,
and 3) $\gamma_{t}=W_{t}$
under certain contract specifications; 
see \cite{Azimzadeh2015}, \cite{Huang2016}, \cite{Huang2017}, and \cite{Shen2017}.
Via a similar argument adopted by the above references,
one may show that this conclusion still holds for the contract considered here.
Therefore, we restrict the feasible set of action $a_{t}$ into the following discrete set:
\ba
\label{feasible_set}
A_{t}(X_{t})=
\begin{cases}
\big\{(0,0)^{\T}, \left(G(I_{t})P_{0},1\right)^{\T},  \left(W_{t},1\right)^{\T}\big\},
&\textup{if}\ I=0,
\ \ \textup{(withdrawal has not been initialized)}\\
\big\{(0,1)^{\T}, \left(G(I_{t})P_{0},1\right)^{\T},  \left(W_{t},1\right)^{\T}\big\},
&\textup{if}\ I>0,
\ \ \ \ \textup{(withdrawal has been initialized)}\\
\end{cases}
\ea
for $t=1,2,\dots,T-1$. As a convention, the PH is not allowed to withdraw at inception,
and thus $A_{0}(X_0)=\emptyset$.

We proceed by specifying the reward functions which corresponds to the policy payoffs in the present context.
Before maturity, the cash inflow of the PH is her withdrawal amount subject to some penalty:
\bas
f_{t}(X_{t},a_{t})=\gamma_{t}-\kappa\big(\gamma_{t}-G(I_{t})P_{0}\big)_{+},
\ \ \gamma_{t} \in \big[0,W_{t} \vee G(I_{t})P_{0}\big],\ \ 
t=1,2,\dots,T-1,
\eas
with $\kappa \in [0,1]$ being the penalty rate. 
In other words, the withdrawal amount in excess of
the guaranteed amount is subject to a proportional penalty.
Conventionally, $f_{0}(\cdot,\cdot)\equiv 0$.
At maturity, the policy payoff is the remaining value of the investment account, 
i.e., $f_{T}(X_{T})=W_{T}$.

Finally, we give the interpretation of value function in the present context.
$V_{t}(x)=V_{t}\left((W,I)^{\T}\right)$ with $I>0$ (resp., $I=0$) 
corresponds to the no-arbitrage price of the contract at withdrawal date $t$
given that the investment account has a value of $W$ and
the first withdrawal is triggered at withdrawal date $I$
(resp., no withdrawal has been taken).

\subsection{A BSBU Algorithm for the Pricing Problem}
The state process $X$ generally takes value in the unbounded set $\mathcal{X}=[0,\infty)\times \mathcal{T}_{0}$.
We consider a truncated domain:
$\mathcal{X}_{R}=[0,R)\times \mathcal{T}_{0}$ with $R>0$.
Consequently, we may define the auxiliary state process $X^{R}$ as in Eq. \eqref{transition_aux_state_process}.
The range of the post-action value is given by 
$\widetilde{\mathcal{K}}_{t,R}=\widehat{\mathcal{K}}_{t,R}\cup \mathcal{K}_{t,R}$,
where $\widehat{\mathcal{K}}_{t,R}=\{0,R\}\times \{0,1,\dots, t\}$ 
and $\mathcal{K}_{t,R}=(0,R)\times \{0,1,\dots, t\}$, respectively.
This is in line with Eq. \eqref{simu_range_post_action}.

Now we are almost ready to employ the BSBU algorithm developed in Section \ref{sec:algorithm}
to solve the present pricing problem.
It is worth noting that a discrete state variable $I_t$ appears in the present context
and the continuation function, in general, is not continuous with respect to the post-action value accompanying this state variable,
i.e., $k_2$; see Eq. \eqref{post-action_value}.
Consequently, Condition (iii) of Assumption \ref{assum:sieve_estimation} might not hold here; see Appendix \ref{app:assum_sieve}.
However, for each given value of $k_2$, 
the continuation function is still continuous with respect to $k_1$,
the post-action value associated with the investment account value.
And therefore, one may repeat Step 3.2 of the BSBU algorithm for every distinct value of $k_2$.
It is easy to see the convergence of the resulting BSBU algorithm is not influenced by this modification.

Finally, it remains to specify how to simulate the post-action value of the state process
in order to pave the way to implementing the BSBU algorithm.
In the sequel section, we will address this issue in details and, 
in particular, we will compare the control randomization method with our artificial simulation method.

\section{Numerical Experiments}
\label{sec:num_experiment}
This section devotes to conducting numerical experiments to show the merits of the BSBU algorithm
in the context of pricing the variable annuity product addressed in the last section.
\subsection{Parameter Setting}
We first present the parameter setting for our numerical experiments.
We consider $T=12$ time steps and
the time interval between two consecutive withdrawal date is assumed to be $\delta =1/12$.
This corresponds to a contract with one-year maturity and monthly withdrawal frequency.
The PH's initial investment is assumed to be one unit, i.e., $W_{0}=1$.
The guaranteed payment percentage is prescribed as follows:
\bas
G(I_{t})=
\begin{cases}
0.03,&\text{if}\ \ 0\leq I_{t} \leq 3,\\
0.05,&\text{if}\ \ 4\leq I_{t} \leq 7,\\
0.07,&\text{if}\ \ 8\leq I_{t} \leq 11.
\end{cases}
\eas
In other words, the PH enjoys a larger amount of guaranteed payment
if she postpones the initiation of the withdrawal.
As a result, the value function/continuation is not continuous with respect to the state variable $I_{t}$.

Let $r$ and $q$ be the annualized risk-free rate and insurance fee rate, respectively.
We assume the absolutely return $\varepsilon_{t+1}$ of the underlying fund follows
a log-normal distribution with $\e[\log\varepsilon_{t+1}]=\big(r-q-\sigma^2/2\big)\delta$
and $\var[\log\varepsilon_{t+1}]=\sigma^2\delta$ under a risk-neutral pricing measure.
This implicitly assumes the underlying fund evolves according to a Geometric Brownian Motion 
with annualized volatility rate $\sigma$.
Finally, the discounting rate is given by $\varphi = e^{-r \delta}$.
All of the above market and contract parameters are summarized in Table \ref{tab:para}.

\begin{table}[ht]
\begin{center}
\caption{Parameters used for numerical experiments.}
\label{tab:para}
\vspace{.5ex}
\renewcommand{\arraystretch}{1.2}
\begin{tabular*}{0.9\textwidth}{l @{\extracolsep{\fill}} l}
\hline\\[-8pt]
Parameter &   Value\\[5pt]
\hline\\[-8pt]
Volatility rate $\sigma$  &   $0.15$ \\
Risk-free rate $r$     &   $0.03$ \\
Insurance fee rate $q$     &   $0.01$ \\
Number of time steps $T$    &     12\\    
Length of time interval $\delta$    &    1/12\\  
Discouting factor $\varphi=e^{-r\delta}$    & 0.9975    \\ 
Initial purchase payment $W_0$ & 1\\ 
Withdrawal penalty $\kappa$ & 0.8 \\
Guaranteed payment percentage $G(I)$ & $0\leq I \leq 3: 3\%,\ 4\leq I \leq 7: 5\%$  \\
&$8\leq I \leq 11: 7\%$\\[5pt]
\hline
\end{tabular*}
\end{center}
\end{table}

Finally, we discuss the choice of truncation parameter $R$.
Under the present context,
it is easy to see that the function $\mathcal{E}(X_0,R)$ in Assumption \ref{assum:tail_prob} 
is bounded from above by the tail probability of the continuous running maximum of a geometric Brownian Motion.
To be specific, we have
\bas
\mathcal{E}(X_0,R) 
&\leq& \p \left(W_0 \max_{t\in[0,\delta T]} \left(e^{(r-q-0.5\sigma^2)t + \sigma \mathcal{B}_{t}}\right)\geq R\right),\\
&=& \p \left(\max_{t\in[0,\delta T]} \left((r-q-0.5\sigma^2)t/\sigma + \mathcal{B}_{t}\right)\geq (1/\sigma)\log \big(R/W_0\big)\right)\\
&=& 1-\mathcal{N}\left(\frac{(1/\sigma)\log \big(R/W_0\big)-\alpha \delta T}{\sqrt{\delta T}}\right)
+ \left(\frac{R}{W_0}\right)^{2(\alpha/\sigma)}
\mathcal{N}\left(\frac{-(1/\sigma)\log \big(R/W_0\big)-\alpha \delta T}{\sqrt{\delta T}}\right)
\eas
with $\alpha:=(r-q-0.5\sigma^2)/\sigma$,
where $\mathcal{B}_{t}$ is a standard Brownian Motion,
$\mathcal{N}(\cdot)$ is the cumulative distribution function of a standard normal distribution,
and the last equality follows by the Reflection Principle (see, e.g., \citet[pp. 297]{Shreve2004}).
Let $R=4$. 
Then the R.H.S. of the above inequality approximately equals to $2 \times 10^{-20}$
under the parameter setting in Table \ref{tab:para}.
It is also easy to see that $\xi(R)$ is quadratic in $R$ in the present example,
and therefore, the truncation error is marginal according to the error bound in Eq. \eqref{truncation_error}. 
In view of this, we fix $R=4$ in all subsequent numerical experiments.

\subsection{Forward Simulation v.s. Artificial Simulation}
Next, we would like to show the limitations of the forward simulation based on control randomization
in generating random samples of the state process. 
Below, we present some prevalent control randomization methods.
\begin{description}
    \item[\bf (CR0)] Given a simulated $X_{t}^{(m)}:=\left(W_{t}^{(m)},I_{t}^{(m)}\right)^{\T}$,
    the PH's action $a_{t}^{(m)}$ is simulated from a degenerated distribution 
    with one single point mass at $\left(G(I_{t})P_{0},1\right)^{\T}$.

    \item[\bf (CR1)] Given $X_{t}^{(m)}$,
    the DM's action $a_{t}^{(m)}$ is simulated from a discrete uniform distribution 
    with support set $\big\{(0,0)^{\T}, \left(G(I_{t})P_{0},1\right)^{\T},  \left(W_{t},1\right)^{\T}\big\}$
    if $I_{t}^{(m)}=0$; 
    and 
    $\big\{(0,0)^{\T}, \left(G(I_{t})P_{0},1\right)^{\T},  \left(W_{t},1\right)^{\T}\big\}$, 
    otherwise. 
    
    \item[\bf (CR2)] Given $X_{t}^{(m)}$,
    the DM's action $a_{t}^{(m)}$ is simulated from a discrete uniform distribution 
    with support set $\big\{(0,0)^{\T}, \left(G(I_{t})P_{0},1\right)^{\T}\big\}$
    if $I_{t}^{(m)}=0$; 
    and $\big\{(0,0)^{\T}, \left(G(I_{t})P_{0},1\right)^{\T}\big\}$, 
    otherwise.
\end{description}
Given the above rules of generating the PH's action,
one may simulate the state process in a forward manner in accordance with Steps 2.1 and 2.2 of the FSBU algorithm;
see Section \ref{sec:LSMC_review}.

{\bf (CR0)} is first proposed by \cite{Huang2016} in the context of pricing Guaranteed Lifelong Withdrawal Benefit,
a particular type of variable annuity policy.
It initializes the withdrawal at $t=1$
and the resulting simulated the state variable $I_{t}^{(m)}$ 
(resp., its accompanying post-action value $S_{t}^{I}\left(I_{t}^{(m)},a_t^{(m)}\right)$) 
equals a fixed value for all $t=1,2,\dots,T-1$
although $I_{t}$ (resp., $S_{t}^{I}\left(I_{t},a_t\right)$), in principle, can take any value in 
$\{0,1,\dots,t-1\}$ (resp., $\{0,1,\dots,t\}$).
A consequential annoying issue is that the obtained estimate for the value function/continuation function
is invariant to the first-withdrawal-time $I_{t}$,
which is not sensible since the later the PH initializes the withdrawal
the larger guaranteed amount $G(I_{t})$ she could enjoy in remaining contract life.

{\bf (CR1)} uniformly simulates the PH's action from its feasible set $A_t(X_{t})$; see Eq. \eqref{feasible_set}.
By virtue of this, there always exists some paths with $I_{t}^{(m)}=0$ 
which corresponds to the scenario that the withdrawal has not been initialized.
This in turn guarantees that, in principle, $I_{t}^{(m)}$ 
(resp., $S_{t}^{I}\left(I_{t}^{(m)},a_t^{(m)}\right)$)
can take any value in $\{0,1,\dots,t-1\}$ (resp., $\{0,1,\dots,t\}$).
However, this strategy is also not satisfactory: 
an overwhelming portion of paths are absorbed by the state $W_{t}=0$,
i.e., the depletion of investment account,
and very sparse sample points of the investment account are positive.
This is graphically illustrated in the top panel of Figure \ref{fig:sample_path_invest}
where $1000$ sample paths are plotted for the clarity of presentation. 
So it is not hard to expect that the accuracy of the regression estimate 
is severely impaired over $\mathcal{K}_{t,R}$.

To alleviate the serious problem mentioned above,
{\bf (CR2)} discards the strategy of depleting the investment account, i.e., $\left(W_{t},1\right)^{\T}$, in simulating the PH's action.
And therefore, the simulated investment account value $W_{t}^{(m)}$ 
can spread over a wider range than that accompanying {\bf (CR1)};
see the bottom panel of Figure \ref{fig:sample_path_invest}.
This phenomenon is more palpable from the histograms of $W_{T-1}^{(m)}$
which are collected by Figure \ref{fig:hist_invest}.
Nevertheless, {\bf (CR2)}'s performance in simulating the $I_{t}^{(m)}$ is undesirable:
Figure \ref{fig:hist_withdrawal_time} shows that a substantial portion of 
sample points of the first-withdrawal-time $I_{t}^{(m)}$ are concentrated in first few values that $I_{t}$ can take.
To understand the crux, we note that at the first possible withdrawal date, 
one-half of sample paths exhibit the initiation of the withdrawal; 
among the remaining paths, one-half of them witness the withdrawal in the consecutive withdrawal date.
Therefore, the portion of positive $I_{t}^{(m)}$ declines at an exponential rate as $t$ increases
which is in line with Figure \ref{fig:hist_withdrawal_time}.
In view of this, it can be expected that the consequential numerical estimate for the
value function sustains significant error at state $x=(W,I)^{\T}$ with a large $I$.

Overall, none of the above rules {\bf (CR0)}--{\bf (CR2)} gives agreeable performance.
It is hard to figure out an ideal way to randomize the PH's action 
which can sidestep the thorny issues mentioned above.
This shows one drawback of binding together control randomization and forward simulation
in addition to the issue of computational cost;
see also the item ``Limitation of control randomization" of Section \ref{sec:LSMC_review}.

To circumvent the annoying problems mentioned above,
in the sequel numerical experiments, we simulate the post-action value of the state process 
at each time step as follows: 
$X_{t^{+}}^{(m)}:=\left(W_{t^{+}}^{(m)}, I_{t^{+}}^{(m)}\right)$
where $W_{t^{+}}^{(m)}$ and $I_{t^{+}}^{(m)}$ are simulated from two independent uniform distributions
with support sets $(0,R)$ and $\{0,1,\dots,t\}$, respectively.
This ensures the post-action value evenly distributed over $\widehat{\mathcal{K}}_{t,R}$.

\begin{figure}[ht]
\centering
\includegraphics[width=.8\textwidth]{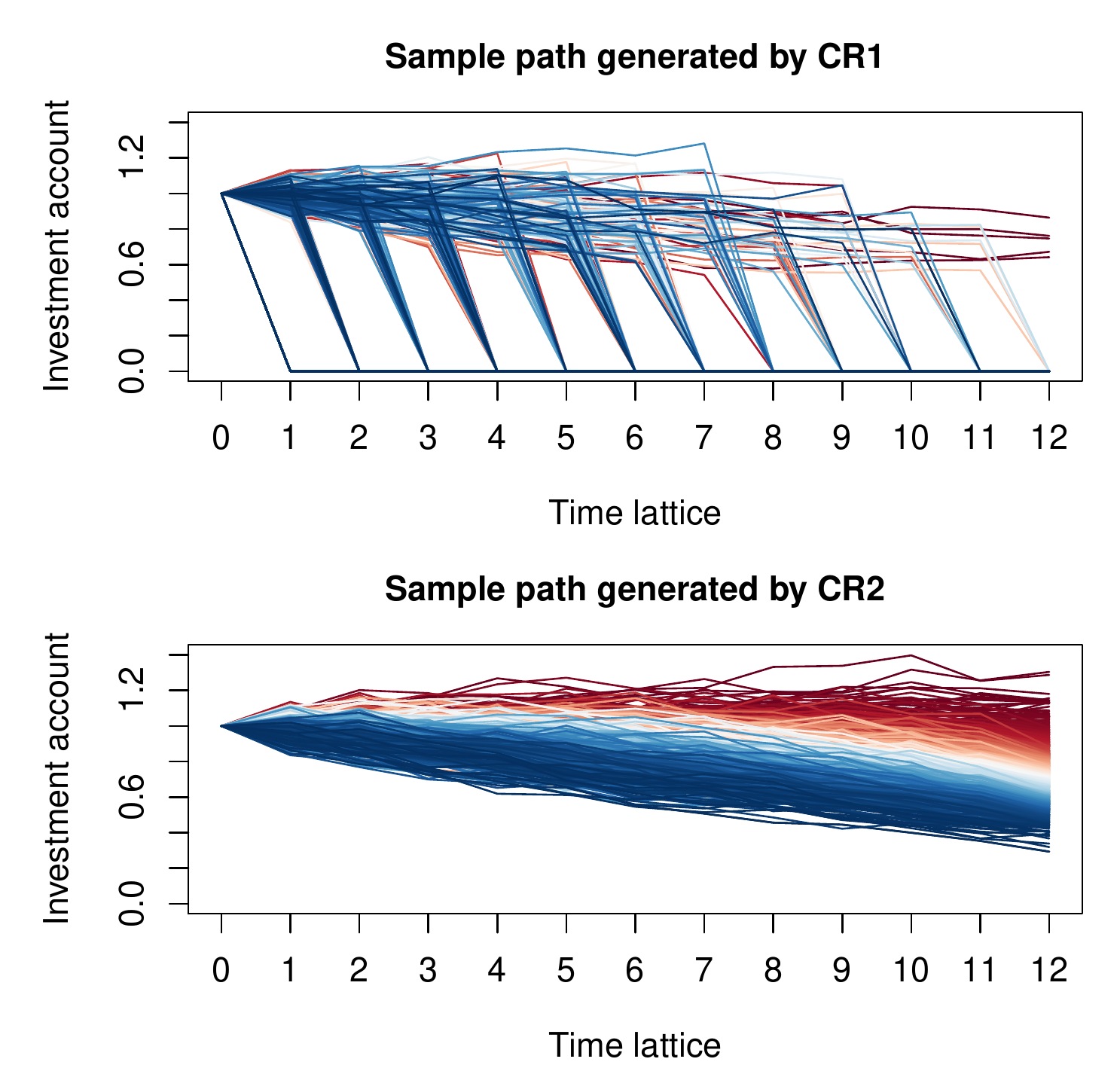}
\captionsetup{width=0.9\textwidth}
\caption{Sample paths of the investment account generated by control randomization methods {\bf (CR1)} and {\bf (CR2)}.}
\label{fig:sample_path_invest}
\end{figure}

\begin{figure}[hbt!]
\centering
\includegraphics[width=.8\textwidth]{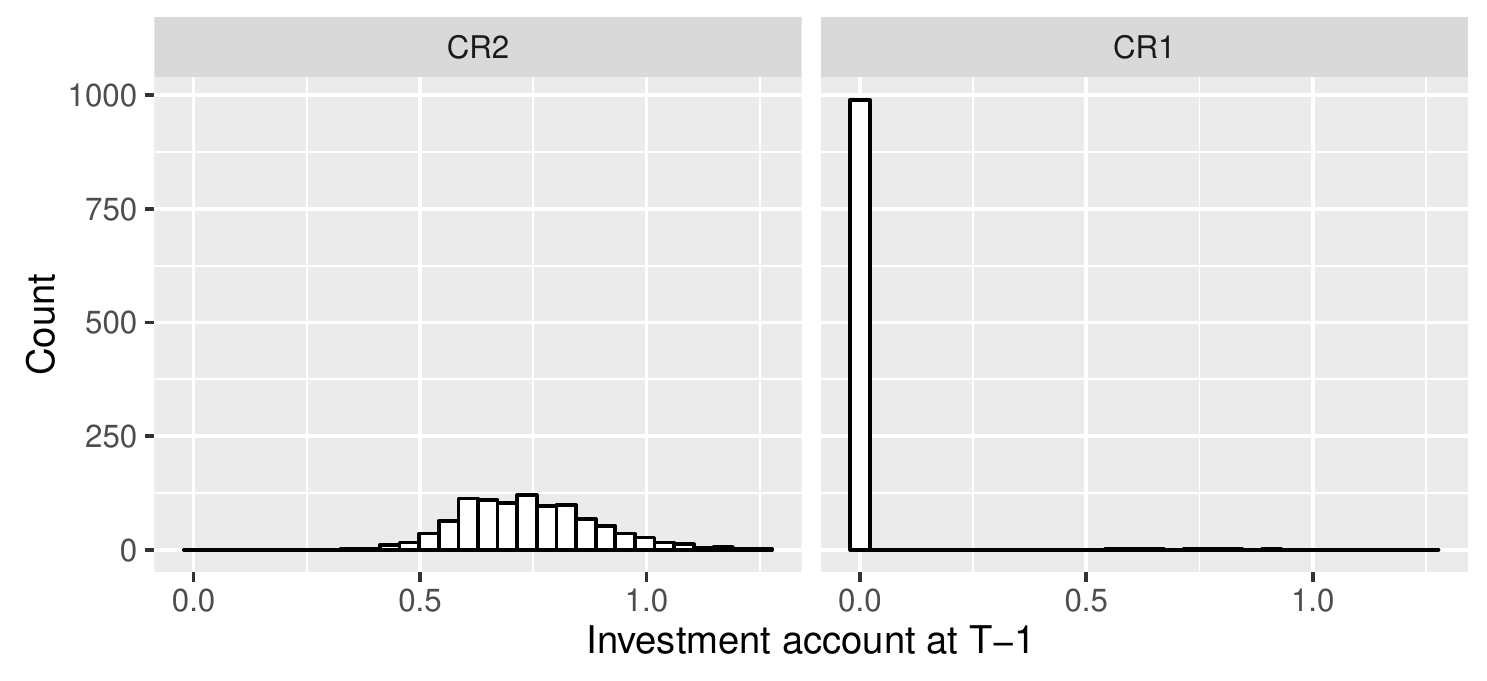}%
\captionsetup{width=0.9\textwidth}
\caption{Histograms of $W_{T-1}^{(m)}$ generated by control randomization methods {\bf (CR1)} and {\bf (CR2)}.}
\label{fig:hist_invest}
\end{figure}

\begin{figure}[hbt!]
    \centering
    \includegraphics[width=0.6\textwidth,height=.5\textwidth]{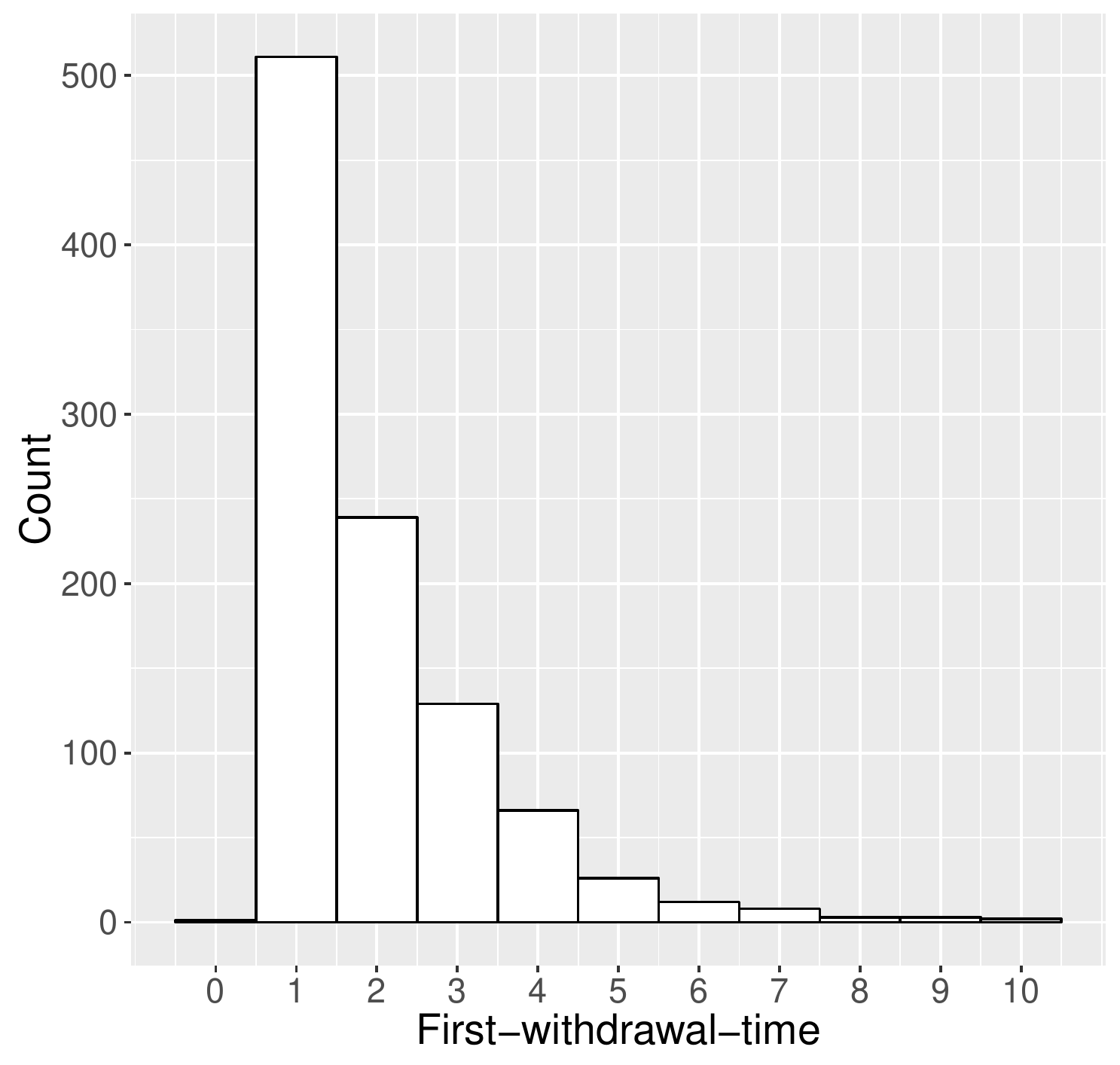}
    \captionsetup{width=0.9\textwidth}
    \caption{Histogram of $I_{11}^{(m)}$ generated by control randomization method {\bf (CR2)}.}
    \label{fig:hist_withdrawal_time}
\end{figure}

\subsection{Raw Sieve Estimation v.s. Shape-Preserving Sieve Estimation}
In the sequel, we conduct several numerical experiments to compare 
the regression estimates for the continuation function produced by two regression methods:
the raw sieve estimation (RSE) method and the shape-preserving sieve estimation (SPSE) method.
The RSE and SPSE are essentially the linear sieve estimation method discussed in 
Section \ref{sec:sieve_estimation} with sieve spaces 
\eqref{linear_sieve_space} and \eqref{shape_sieve_space}, respectively.
It is easy to show that the continuation function $k_1 \longmapsto \tilde{C}_{t}^{\textup{E}}(k)$ is monotone
and therefore we incorporate this shape constraint in the SPSE method.
The expression of the accompanying matrix $\mathbf{A}_{J}$ is given in Appendix \ref{app:constraint_matrix}.
The RSE method is equivalent to the least-squares method commonly adopted in the literature;
see, for instance, \cite{Longstaff2001}.
The aim of this subsection is to show the advantage of incorporating shape constraints 
in the regression step of an LSMC algorithm.

In the first numerical experiment, we compare the SPSE and RSE for the continuation function at each time step.
For the fairness of the comparison, for both methods,
we take $\bm{\phi}(\cdot)=\big(\phi_{0}(\cdot),\dots,\phi_{J}(\cdot)\big)^{\T}$ 
as a vector of univariate Bernstein polynomials up to order $J=20$ in both sieve estimation methods.
Figure \ref{fig:cont_fun-1} collects the plots of regression estimates as a function of $k_1$ at odd time steps with $k_2=0$.
To better show the subtle difference between the estimates produced by SPSE and RSE,
the plots are restricted on the interval $[0,1]$.
From Figure \ref{fig:cont_fun-1}, 
we can see that the discrepancy between the regression estimates accompanying SPSE and RSE is 
not conspicuous at large time step but becomes more significant as the time step goes down.
Despite the continuation function, in principle, is a monotone function with respect to $k_1$,
the dotted lines in Figure \ref{fig:cont_fun-1} show that
its regression estimate produced by the RSE does not inherit this monotonicity
and loses certain economic interpretations, accordingly.
This issue is more serious at smaller time steps as shown by the bottom panel of Figure \ref{fig:cont_fun-1}.
This is not surprising because once the monotonicity is lost at a certain time step,
the regression estimate obtained in the consecutive time step will be influenced,
which in turn exaggerates the problem.
In contrast, as depicted by the solid lines in Figure \ref{fig:cont_fun-1},
the SPSE method always preserves the monotonicity of the continuation function
and therefore the corresponding regression estimates are more economically sensible.
This shows the first advantage of the SPSE method in terms of preserving certain shape properties of the continuation function.


\begin{figure}[hbt!]
    \centering
    \includegraphics[width=1.0\textwidth]{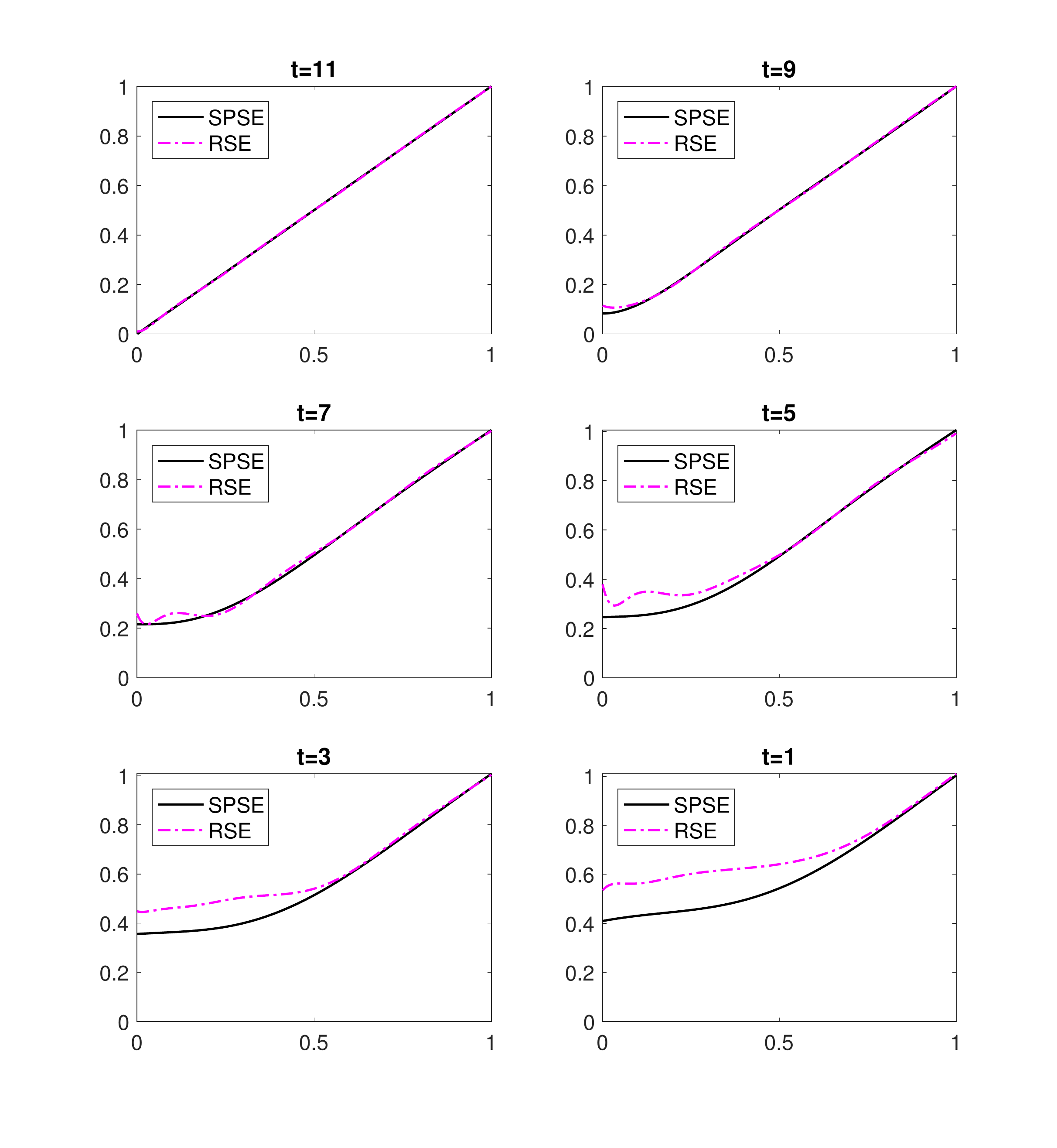}
    \captionsetup{width=0.9\textwidth}
    \caption{Regression estimates of SPSE and RSE for $k_1 \longmapsto \tilde{C}_{t}^{\textup{E}}(k_1,0)$ over $[0,1]$.
     $M=10^5$ sample paths are generated and $J=20$ basis functions are used.}
    \label{fig:cont_fun-1}
\end{figure}

Next, we implement the BSBU algorithm to compute $\tilde{V}_{0}^{\textup{E}}(X_0)$ with $X_{0}=(1,0)^{\T}$
where the SPSE and RSE are employed, respectively.
These estimates approximate the no-arbitrage price of the VA policy at the inception
and therefore are of most interest in the present context;
see the last paragraph of Section \ref{sec:model_setup_VA}.
It is worth noting that $\tilde{V}_{0}^{\textup{E}}(X_0)$ is random due to the randomness of the simulated sample;
see also Remark \ref{rem:notation_numerical_estimate}.
And therefore, we repeat the BSBU algorithm 40 times in order to
study the stability of $\tilde{V}_{0}^{\textup{E}}(X_0)$ under a finite sample size.
Table \ref{tab:price} summarizes the mean and standard deviation of $\tilde{V}_{0}^{\textup{E}}(X_0)$
under different pairs of $M$ and $J$.
The ``S.d." column of the table discloses that the standard deviation 
accompanying the SPSE is nearly one half of that 
associated with the RSE under all numerical settings.
For the numerical settings 0-2 of Table \ref{tab:price},
Figure \ref{fig:hist_price} delineates the corresponding density plots of the 40 estimates.
By comparing the left and right panels of Figure \ref{fig:hist_price},
we can easily perceive that the numerical estimates accompanying SPSE are less volatile
as reflected by the more spiked shape of the corresponding density plots.
This observation is consistent with Table \ref{tab:price}.
To sum up, the SPSE surpasses the RSE in terms of smaller standard deviation
of the resulting numerical estimate for the optimal value function at the initial state.

\begin{table}[ht]
\begin{center}
\captionsetup{width=0.9\textwidth}
\caption{Mean and standard deviation of $\tilde{V}_{0}^{\textup{E}}(X_0)$ produced by SPSE and RSE methods.
The results are obtained by repeating the BSBU algorithm 40 times.}
\label{tab:price}
\vspace{0.5ex}
\renewcommand{\arraystretch}{1.2}
\begin{tabular*}{0.9\textwidth}{c @{\extracolsep{\fill}} ccccc}
\multicolumn{1}{c}{\multirow{2}{*}{Setting}}&
\multicolumn{1}{c}{\multirow{2}{*}{$(M,J)$}}   &\multicolumn{2}{c}{SPSE} &\multicolumn{2}{c}{RSE}\\
\cline{3-6}\multicolumn{1}{c}{}& \multicolumn{1}{c}{}
&  \multicolumn{1}{c}{Mean}     & \multicolumn{1}{c}{S.d.}
&  \multicolumn{1}{c}{Mean}  & \multicolumn{1}{c}{S.d.}\\
\hline
0&$(1\times10^5, 15)$          &$0.9940$	&${\bf 0.0040}$	&$1.0045$	&$0.0091$	\\
1&$(1\times10^5, 20)$          &$0.9916$	&${\bf 0.0035}$	&$1.0028$	&$0.0070$	\\
2&$(1\times10^5, 25)$          &$0.9969$	&${\bf 0.0031}$	&$1.0029$	&$0.0056$	\\
3&$(2\times10^5, 20)$          &$0.9913$	&${\bf 0.0025}$	&$1.0012$	&$0.0058$	\\
4&$(4\times10^5, 20)$          &$0.9910$	&${\bf 0.0015}$	&$0.9983$	&$0.0034$	\\
\hline
\end{tabular*}
\end{center}
\end{table}

\begin{figure}[hbt!]
    \centering
    \includegraphics[width=0.8\textwidth]{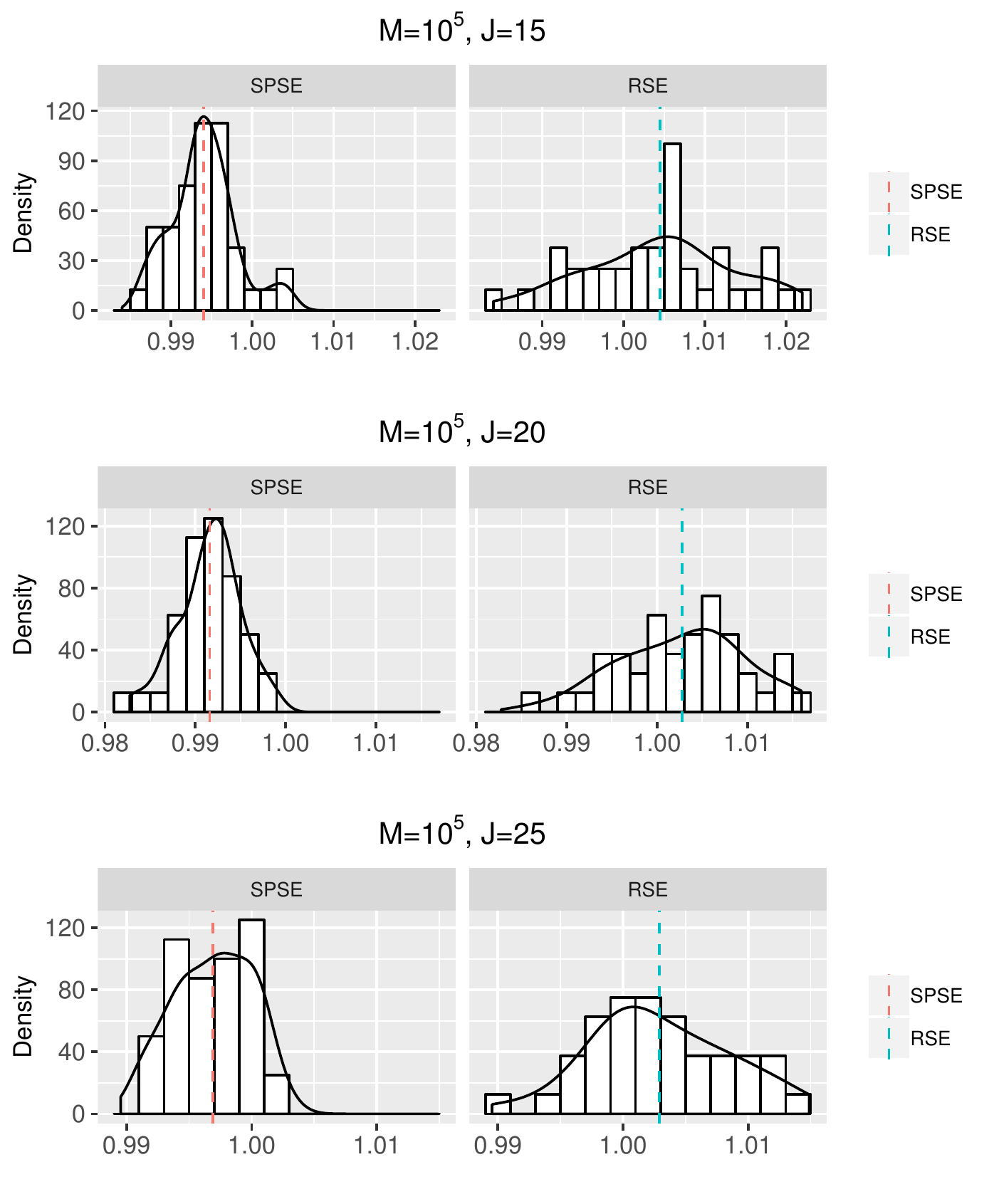}
    \captionsetup{width=0.9\textwidth}
    \caption{Density plots of $\tilde{V}_{0}^{\textup{E}}(X_0)$ produced by SPSE and RSE methods
    under 40 repeats of the BSBU algorithm.
    The dotted line corresponds to the sample mean.}
    \label{fig:hist_price}
\end{figure}

From the settings 0-2 of Table \ref{tab:price}, 
we also observe that for both methods,
the change of the mean of the numerical estimate is not substantial
as the number of basis functions $J$ hikes from $15$ to $25$.
In the settings 1, 3 and 4 of Table \ref{tab:price},
we fix $J=20$ and increase the number of simulated paths $M$
from $10^5$ to $4\times 10^5$.
We witness that standard deviation decreases as the number of simulated paths climbs.
This descending trend is also confirmed by the box plots depicted in Figure \ref{fig:box_price}:
the height of the box shrinks as the number of simulated paths hikes.
All of these show the convergence of the BSBU algorithm
which is in line with the convergence result established in Section \ref{sec:conv_BSBU}.

\begin{figure}[hbt!]
    \centering
    \includegraphics[width=0.8\textwidth]{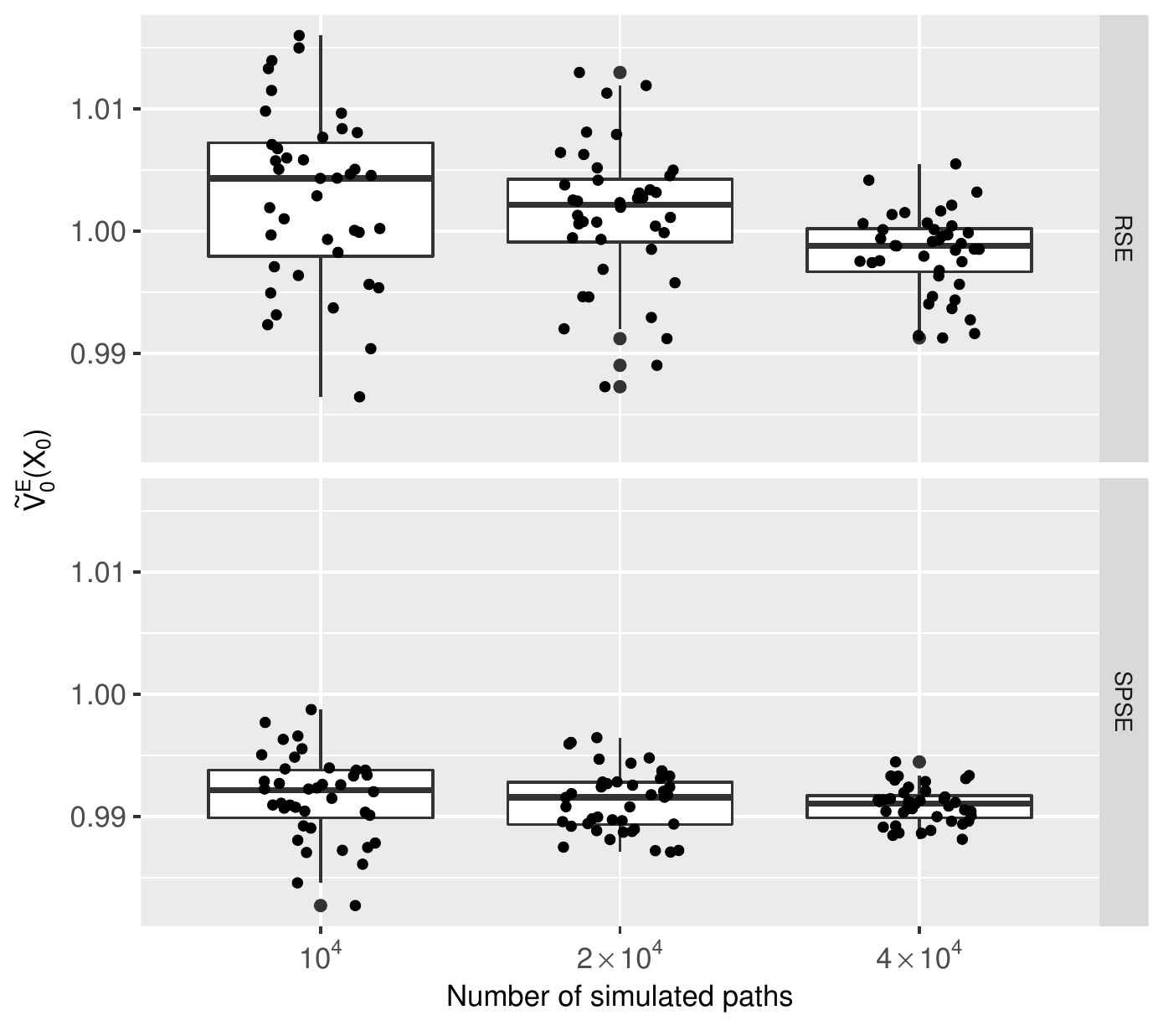}
    \captionsetup{width=0.9\textwidth}
    \caption{Box plots of $\tilde{V}_{0}^{\textup{E}}(X_0)$ produced by SPSE and RSE methods
    under 40 repeats of the BSBU algorithm. 
    $J$ is fixed as $20$ and $M$ varies from $10^5$ to $4\times 10^5$.}
    \label{fig:box_price}
\end{figure}

Overall, the advantages of the SPSE over the RSE are extant at least in two-fold.
Firstly, the SPSE produces economically sensible regression estimates by inheriting certain
shape properties of the true continuation function.
Secondly, the consequential estimate for the optimal value function
accompanying the SPSE method is less volatile than that produced by the RSE method
under a finite number of simulated sample paths.

\section{Conclusion}
\label{sec:conclusion_BSBU}
This paper develops a novel LSMC algorithm,
referred to as Backward Simulation and Backward Updating (BSBU) algorithm, 
to solve discrete-time stochastic optimal control problems.
We first introduce an auxiliary stochastic control problem where the state process only takes value in a compact set.
This enables the BSBU algorithm to successfully sidestep extrapolating value function estimate.
We further show the optimal value function of the auxiliary problem is a legitimate approximation for that of the original problem
with an appropriate choice of the truncation parameter.
To circumvent the drawbacks of forward simulation and control randomization,
we propose to directly simulate the post-action value of the state process from an artificial probability distribution.
The pivotal idea behind this artificial simulation method is that
the continuation function is solely determined by the distribution of random innovation term.
Moreover, motivated by the shape information of the continuation function,
we introduce a shape-preserving sieve estimation technique to alleviate the computational burden of tuning parameter selection involved in the regression step of an LSMC algorithm.
Furthermore, convergence result of the BSBU algorithm is established by resorting to 
the theory of nonparametric sieve estimation.
Finally, we confirm the merits of the BSBU algorithm through 
an application to pricing equity-linked insurance products 
and the corresponding numerical experiments.


\bibliographystyle{plainnat}
\bibliography{ref}

\newpage
\appendix
{
\numberwithin{equation}{section}
\section{Supplements for Sieve Estimation Method}
\label{app:sieve}
\subsection{Forms of Matrix $\mathbf{A}_{J}$}
\label{app:constraint_matrix}
To make the paper self-contained, we collect several forms of the constraint matrix $\mathbf{A}_{J}$ 
in \eqref{shape_sieve_space} which ensures monotonicity, convexity, or concavity
of the sieve estimate \eqref{regression_estimate};
see \cite{Wang2012-1,Wang2012-2} for a justification.
For the simplicity of notation, we only address the case $r=1$.

\begin{description}
\item {\bf Monotonicity}\quad 
Suppose the conditional mean $g(\cdot)$ defined in Eq. \eqref{regression_fun} is monotone.
Then the corresponding monotonicity-preserved sieve estimate $\hat{g}(\cdot)$ is obtained from 
\eqref{regression_estimate} with $\mathcal{H}_J$ given by Eq. \eqref{shape_sieve_space}
and 
\bas
\mathbf{A}_{J}=
\left(
\begin{matrix}
-1&1&0&\cdots&0\\
0&-1&1&1&\cdots\\
 &    &\ddots&&\\
0&\cdots&0&-1&1\\
\end{matrix}
\right)_{J\times(J+1)}.
\eas

\item {\bf Convexity/Concavity}\quad 
If $g(\cdot)$ is convex, we choose the matrix $\mathbf{A}_J$ as 
\bas
\mathbf{A}_{J}=
\left(
\begin{matrix}
1&-2&1&0&\cdots&0\\
0&1&-2&1&\cdots&0\\
 &    &\ddots&&&\\
0&\cdots&0&1&-2&1\\
\end{matrix}
\right)_{(J-1)\times(J+1)}.
\eas
Moreover, the matrix $\mathbf{A}_J$ accompanying a concave $g(\cdot)$
is obtained by taking negative of the above matrix.

\item {\bf Convexity and Monotonicity}\quad 
If $g(\cdot)$ is convex and monotone, the corresponding $\mathbf{A}_J$ is given by
\bas
\mathbf{A}_{J}=
\left(
\begin{matrix}
-1&1&0&\cdots&\cdots&0\\
1&-2&1&0&\cdots&0\\
0&1&-2&1&\cdots&0\\
 &&\ddots&&&\\
0&\cdots&0&1&-2&1\\
\end{matrix}
\right)_{J\times(J+1)}.
\eas
\end{description}   

\subsection{A Data-driven Choice of $J$}
\label{app:CV}
Below we present some common methods of choosing the number of basis functions $J$
in a sieve estimation method; see, e.g., \cite{Li1987}.
\begin{description}
\item[]
{\bf Mallows's $C_{p}$}\quad 
For a discrete set $\mathcal{J} \subseteq \mathbb{N}$,
$J$ is determined by solving the following minimization problem: 
\bas
\hat{J}=\arg \min_{J \in \mathcal{J}} 
\frac{1}{M} \sum_{m=1}^{M}\left[U^{(m)}- \hat{g} \left(Z^{(m)}\right)\right]^2
+ 2 \hat{\sigma}^2 \big(J/M\big),
\eas
where $\hat{g}(\cdot)$ is given in \eqref{regression_estimate} and
$
\hat{\sigma}^2:=M^{-1}\sum_{m=1}^{M} \left[U^{(m)}- \hat{g} \left(Z^{(m)}\right)\right]^2
$
which is an estimate for the variance of residual term.

\item[]
{\bf Generalized cross-validation}\quad 
$J$ is determined by
\bas
\hat{J}=\arg \min_{J \in \mathcal{J}} 
\frac{M^{-1} \sum_{m=1}^{M}\left[U^{(m)}- \hat{g} \left(Z^{(m)}\right)\right]^2}
{\left(1-\big(J/M\big)\right)^2},
\eas
with $\hat{g}(\cdot)$ given in \eqref{regression_estimate}.

\item[]
{\bf Leave-one-out cross-validation}\quad 
Select $J$ to minimize
\bas
\textup{CV}(J):=
\frac{1}{M} \sum_{m=1}^{M}\left[U^{(m)}- \hat{g}_{-m} \left(Z^{(m)}\right)\right]^2,
\eas
where $\hat{g}_{-m}(\cdot)$ is similarly obtained by Eq. \eqref{regression_estimate} 
with the sample point $\left(U^{(m)},Z^{(m)}\right)$ removed.
\end{description}
It is worth stressing that, among the above three selection methods,
the leave-one-out cross-validation method is most computationally expensive
as one has to compute the regression estimate $\hat{g}_{-m}(\cdot)$ $M$ times
in a single evaluation of $\textup{CV}(J)$.
This is clearly computationally prohibitive when $M$ is considerable,
which is particularly the case in the present context of the LSMC algorithm.
The Mallows's $C_{p}$ criterion and generalized cross-validation method
are relatively less cumbersome but still undesirable under a sizable $M$.
These show the importance of avoiding such a tuning parameter selection procedure
which is one important thrust behind proposing shape-preserved sieve estimation method.

\subsection{Technical Assumption of Sieve Estimation Method}
\label{app:assum_sieve}
We impose the following assumption accompanying the sieve estimation method discussed in Section \ref{sec:sieve_estimation} 
which follows from  \cite{Newey1997}.
\begin{assumption}
\label{assum:sieve_estimation}
\begin{description}
\item[(i)] $\left\{\left(U^{(m)}, Z^{(m)}\right)\right\}_{m=1}^{M}$ are i.i.d. 
and $Z^{(m)}$ has compact support $\mathcal{Z}$. 
Furthermore, $\var\left[\left.U^{(m)}\right|Z^{(m)}=\cdot\ \right]$ is bounded over $\mathcal{Z}$.

\item[(ii)]
There exists a sequence $\Upsilon(J)$ such that $\norm{\bm{\phi}}_{\infty} \leq \Upsilon(J)$
with $\norm{\cdot}$ denoting the supremum norm of a continuous function over $\mathcal{Z}$.

\item[(iii)]
For the sieve space $\mathcal{H}_{J}$ defined either in Eq. \eqref{linear_sieve_space} or Eq. \eqref{shape_sieve_space},
there exists a $(J+1)$-by-1 vector $\tilde{\bm{\beta}}$ and a sequence $\rho_{J}$ such that
$\rho_{J} \longrightarrow 0$ as $J \longrightarrow \infty$, and
\ba
\label{oracle}
\inf_{h(\cdot)\in \mathcal{H}_{J}}
\norm{h-g}_{\infty}
=\norm{\tilde{\bm{\beta}}^{\T}\bm{\phi}-g}_{\infty}
=O\left(\rho_{J}\right),
\ea
where we remind that $g(\cdot):=\e \left[\left.U^{(m)}\right|Z^{(m)}=\cdot \ \right]$.

\item[(iv)]  
Let $\Phi:=\e \left[\bm{\phi}\left(Z^{(m)}\right)\bm{\phi}^{\T}\left(Z^{(m)}\right)\right]$.
There exists a positive constant $\underline{c}_{\Phi}$ independent of $J$ such that
$
0< \underline{c}_{\Phi} \leq \lambda_{\min}\left(\Phi\right) 
\leq \lambda_{\max}\left(\Phi\right)\leq \bar{c}_{\Phi}<\infty,
$
with $\lambda_{\min}\left(\Phi\right)$ 
and $\lambda_{\max}\left(\Phi\right)$
denoting the smallest and largest eigenvalues of $\Phi$, respectively.

\item[(v)]
As $M \longrightarrow \infty$, $J \longrightarrow \infty$, and $\Upsilon^2(J)J/M \longrightarrow 0$.

\end{description}
\end{assumption}
We give some comments on the above technical conditions.
\begin{enumerate}
\item 
The i.i.d. condition in Part (i) of the above assumption
discloses the necessity of generating an independent sample at each time step in an LSMC algorithm;
see also the discussion in the earlier item ``Cost of forward simulation" of Section \ref{sec:LSMC_review}.
Part (i) further requires $Z^{(m)}$ has a compact support, 
which is conventional in literature see, e.g., \cite{Newey1997} and \cite{Chen2007}.
In the context of BSBU algorithm,
this shows that restraining the state process into a bounded domain
is not only beneficial in eliminating undesirable extrapolation 
but also indispensable in guaranteeing the convergence of the regression estimate to the continuation function.
This has also been pointed out in the literature,
see, e.g., \cite{Stentoft2004} and \cite{Zanger2013}.

\item
Part (ii) specifies how the magnitude of $\bm{\phi}(\cdot)$ is amplified
as the number of basis function functions grows up. 
In particular, \cite{Newey1997} shows that $\Upsilon(J)=O\left(\sqrt{J}\right)$ for B-splines
and $\Upsilon(J)=O(J)$ for power series;
for the cases of other types of basis functions, we refer to \cite{Chen2007}.

\item
Part (iii) states that there exists a function 
$\tilde{\bm{\beta}}^{\T}\bm{\phi}(\cdot)$ 
in the sieve space $\mathcal{H}_{J}$ 
that ``best" approximates the conditional mean function $g(\cdot)$
under the supremum norm; see Figure \ref{fig:sieve_estimation} for a graphical illustration.
The existence of vector $\tilde{\bm{\beta}}$ 
(referred to as \textit{oracle})
is guaranteed by the convexity of sieve space $\mathcal{H}_J$.
For the sieve space \eqref{shape_sieve_space}, 
the existence of $\rho_{J}$ relies on the convexity, concavity or monotonicity of the function $g(\cdot)$
which follows by the Property 3.2 of \cite{Wang2012-2}.
Figure \ref{fig:sieve_estimation} depicts the relationship between
$\tilde{\bm{\beta}}^{\T}\bm{\phi}(\cdot)$ and $g(\cdot)$:
their discrepancy vanishes as $J$ increases
and, for a fixed $J$, the sieve estimate $\hat{\bm{\beta}}^{\T}\bm{\phi}(\cdot)$
converges to $\tilde{\bm{\beta}}^{\T}\bm{\phi}(\cdot)$
as the sample size $M$ approaches infinity.
Therefore, one may view the sieve estimation as a two-stage approximation for 
the conditional mean function $g(\cdot)$.

\item
The condition in Part (iv) ensures the design matrix of the regression problem is nonsingular
with a high probability and does not blow up as $J$ approaches infinity.
Finally, Part (v) prescribes the growth rates of $J$ and $M$
in order to avoid overfitting or underfitting.
\end{enumerate}

\begin{figure}
\centering
\begin{adjustbox}{max totalsize={.8\textwidth}{.8\textheight},center}
\tikzset{set/.style={draw,circle,inner sep=0pt,align=center}}
\begin{tikzpicture}
\draw [fill=blue](7,0) circle (.08);
\node at (7.8,0) {$g(\cdot)$};
\node at (1,3) {{\bf Sieve Spaces $\mathcal{H}_{J}$}};
\node[set,text width=1cm] (J1) at (0,0) {$J=1$};
\draw [fill=lime](.5,0) circle (.08);
\draw[-latex,thick,red!80](6,-1)node[right]{$\tilde{\bm{\beta}}^{\T}\bm{\phi}(\cdot)$} to[out=180,in=270] (0.5,0);
\node[set,text width=3cm] (J2) at (0.5,0)  {};
\node at (1.2,0) {$J=2$};
\draw [fill=lime](2,0) circle (.08);
\draw[-latex,thick,red!80](6,-1) to[out=180,in=270] (2,0);
\node[set,text width=5cm] (J3) at (1,0)  {};
\node at (2.7,0) {$J=3$};
\draw [fill=lime](3.5,0) circle (.08);
\draw[-latex,thick,red!80](6,-1)
to[out=180,in=270] (3.5,0);
\draw[red,<->,thick] (3.6,0) to
node [above, align=center,black] {$\rho_{J} \rightarrow 0,\ \textup{as}\ J \rightarrow \infty$}
(6.9,0);
\end{tikzpicture}
\end{adjustbox}
\captionsetup{width=0.9\textwidth}
\caption{A diagram illustrating the relationship between $\tilde{\bm{\beta}}^{\T}\bm{\phi}(\cdot)$ and $g(\cdot)$.}
\label{fig:sieve_estimation}
\end{figure}
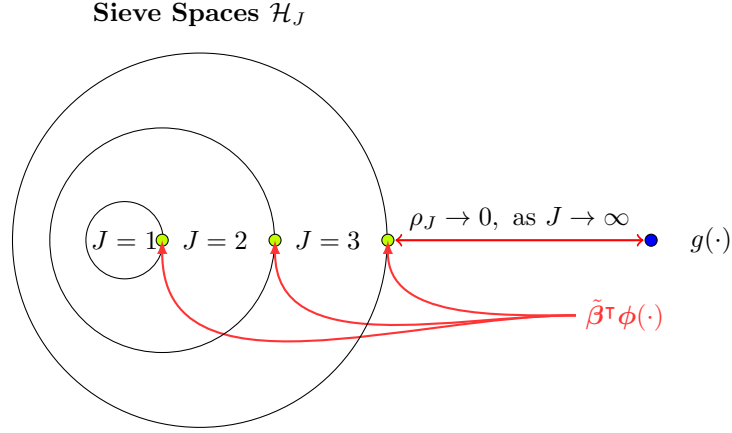

\section{Proofs of Statements}
\label{app:proofs}

\subsection{Proof of Proposition \ref{prop:transition_eq_aux_process}}
\label{app:proof_prop-1}
\subsubsection{Preliminary}
\begin{lemma}
\label{lemma:201809051430}
For any $\mathcal{F}$-adapted process ${\sf a}=\{a_{t}\}_{t\in \mathcal{T}_{0}}$, the following statements hold:
\begin{description}
\item[(i)]
$\left\{\tau^{R} \leq t\right\}=\left\{X_{t}^{R} \in \partial \mathcal{X}_{R}\right\}$
for $t=1,2,\dots,T$;

\item[(ii)]
$\left\{\tau^{R}=t+1\right\}=
\left\{X_{t}^{R} \in \mathring{\mathcal{X}}_{R},
\ S\left(X_{t}^{R}, a_{t}, \varepsilon_{t+1}\right)\notin \mathring{\mathcal{X}}_{R}\right\}$
for $t=0,1,\dots,T-1$,
\end{description}
where $\tau^{R}$ and $X_{t}^{R}$ are defined in Eqs. 
\eqref{first_exit_time} and \eqref{aux_state_process}, respectively.
\end{lemma}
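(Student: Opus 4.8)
The plan is to prove both identities directly from the defining recursion \eqref{aux_state_process} of the auxiliary process, leaning on the fact—already recorded in the text immediately after \eqref{aux_state_process}—that $\mathcal{Q}(x)$ lands on $\partial\mathcal{X}_{R}$ whenever $x\notin\mathring{\mathcal{X}}_{R}$, thanks to the compactness and strict convexity of $\textup{col}(\mathcal{X}_{R})$. Part (i) carries the conceptual weight; part (ii) will follow from it in a few lines.

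First I would establish part (i) by locating $X_{t}^{R}$ on each cell of the partition $\{\tau^{R}>t\}\cup\{\tau^{R}\leq t\}$. On $\{\tau^{R}>t\}$, the recursion \eqref{aux_state_process} gives $X_{t}^{R}=X_{t}$; and since $\tau^{R}>t$ means, by the definition of the exit time \eqref{first_exit_time}, that $X_{s}\in\mathring{\mathcal{X}}_{R}$ for every $s\leq t$, in particular $X_{t}^{R}=X_{t}\in\mathring{\mathcal{X}}_{R}$. On $\{\tau^{R}\leq t\}$ we have $\tau^{R}\wedge t=\tau^{R}$, so $X_{t}^{R}=\mathcal{Q}(X_{\tau^{R}})$; because $\mathcal{T}$ is finite the infimum in \eqref{first_exit_time} is attained, hence $X_{\tau^{R}}\notin\mathring{\mathcal{X}}_{R}$, and the projection property then yields $X_{t}^{R}\in\partial\mathcal{X}_{R}$. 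This produces the two inclusions $\{\tau^{R}>t\}\subseteq\{X_{t}^{R}\in\mathring{\mathcal{X}}_{R}\}$ and $\{\tau^{R}\leq t\}\subseteq\{X_{t}^{R}\in\partial\mathcal{X}_{R}\}$.

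Next I would upgrade these inclusions to the equality claimed in (i). Since $\{\tau^{R}>t\}$ and $\{\tau^{R}\leq t\}$ partition $\Omega$ while $\mathring{\mathcal{X}}_{R}$ and $\partial\mathcal{X}_{R}$ are disjoint, any $\omega$ with $X_{t}^{R}\in\partial\mathcal{X}_{R}$ cannot satisfy $X_{t}^{R}\in\mathring{\mathcal{X}}_{R}$, hence cannot lie in $\{\tau^{R}>t\}$ and must lie in $\{\tau^{R}\leq t\}$; this gives the reverse inclusion and therefore $\{\tau^{R}\leq t\}=\{X_{t}^{R}\in\partial\mathcal{X}_{R}\}$. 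For part (ii) I would then argue both directions using (i). On $\{\tau^{R}=t+1\}$ we have $\tau^{R}>t$, so $X_{t}^{R}=X_{t}\in\mathring{\mathcal{X}}_{R}$ and $S(X_{t}^{R},a_{t},\varepsilon_{t+1})=S(X_{t},a_{t},\varepsilon_{t+1})=X_{t+1}\notin\mathring{\mathcal{X}}_{R}$. Conversely, if $X_{t}^{R}\in\mathring{\mathcal{X}}_{R}$ then part (i) forces $\tau^{R}>t$, so $X_{t}^{R}=X_{t}$ and $S(X_{t}^{R},a_{t},\varepsilon_{t+1})=X_{t+1}$; the hypothesis $X_{t+1}\notin\mathring{\mathcal{X}}_{R}$ then gives $\tau^{R}\leq t+1$, which together with $\tau^{R}>t$ pins down $\tau^{R}=t+1$.

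The step I expect to be the main obstacle is the passage from inclusions to equalities in part (i): it is the one place where one must combine three facts at once—that the two time-events partition $\Omega$, that the interior and boundary are disjoint, and that the projection $\mathcal{Q}$ is genuinely boundary-valued off $\mathring{\mathcal{X}}_{R}$. The last of these is exactly where the compactness and strict convexity of $\textup{col}(\mathcal{X}_{R})$ enter, so I would be careful to invoke that hypothesis explicitly rather than treat the boundary-valuedness of $\mathcal{Q}$ as self-evident. Everything else is bookkeeping with the definitions.
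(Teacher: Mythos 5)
Your proof is correct and follows essentially the same route as the paper's: both arguments rest on the observations that $X_{t}^{R}=X_{t}\in\mathring{\mathcal{X}}_{R}$ on $\{\tau^{R}>t\}$ and $X_{t}^{R}=\mathcal{Q}(X_{\tau^{R}\wedge t})\in\partial\mathcal{X}_{R}$ on $\{\tau^{R}\leq t\}$ (using the boundary-valuedness of $\mathcal{Q}$ off $\mathring{\mathcal{X}}_{R}$), upgraded to equalities via disjointness of interior and boundary, with part (ii) then deduced from part (i) by identifying $\{X_{t}^{R}\in\mathring{\mathcal{X}}_{R}\}$ with $\{\tau^{R}>t\}$ and substituting $X_{t}^{R}=X_{t}$. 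The only difference is presentational — you prove two inclusions on the partition cells where the paper manipulates set identities directly — so the mathematical content coincides.
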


\begin{proof}[Proof of Lemma \ref{lemma:201809051430}]
\begin{description}
\item[(i)]
According to Eq. \eqref{aux_state_process},
we observe 
\bas
    \left\{X_{t}^{R} \in \partial \mathcal{X}_{R}\right\}
&=&   \left\{X_{t} \in \partial \mathcal{X}_{R}, \tau^{R} > t \right\}
        \cup \left\{\mathcal{Q}\left(X_{\tau^{R}\wedge t}\right) \in \partial \mathcal{X}_{R}, \tau^{R} \leq t \right\}\\
&=&  \left\{\mathcal{Q}\left(X_{\tau^{R}\wedge t}\right) \in \partial \mathcal{X}_{R}, \tau^{R} \leq t \right\},
\eas
where the second identity is by the definition of the stopping time $\tau^{R}$
and the fact that $\partial \mathcal{X}_{R} \cap \mathring{\mathcal{X}}_{R}=\emptyset$.
To show the statement in Part (i) of Lemma \ref{lemma:201809051430}, it suffices to prove 
$\left\{\tau^{R} \leq t\right\} \subseteq \left\{\mathcal{Q}\left(X_{\tau^{R}\wedge t}\right)  \in \partial \mathcal{X}_{R}\right\}$.
Indeed, $\tau^{R} \leq t$ implies $X_{\tau^{R}\wedge t} \notin \mathring{\mathcal{X}}_{R}$,
and thus $\mathcal{Q}\left(X_{\tau^{R}\wedge t}\right)  \in \partial \mathcal{X}_{R}$.

\item[(ii)]
In view of Part (i) and Eq. \eqref{aux_state_process}, we obtain 
\bas
\left\{X_{t}^{R} \in \mathring{\mathcal{X}}_{R}\right\} 
=\left\{X_{t}^{R} \in \partial \mathcal{X}_{R}\right\}^{\textup{c}}
=\left\{\tau^{R} > t\right\}
\subseteq \left\{X_{t}^{R}=X_t\right\}.
\eas
Therefore, we obtain
\bas
\left\{X_{t}^{R} \in \mathring{\mathcal{X}}_{R},
\ S\left(X_{t}^{R}, a_{t}, \varepsilon_{t+1}\right)\notin \mathring{\mathcal{X}}_{R}\right\}
&=& \left\{X_{t}^{R} \in \mathring{\mathcal{X}}_{R},\ X_{t}^{R}=X_t,\
S\left(X_{t}^{R}, a_{t}, \varepsilon_{t+1}\right)\notin \mathring{\mathcal{X}}_{R},\ \tau^{R} > t\right\}\\
&=& \left\{X_{t} \in \mathring{\mathcal{X}}_{R},\ 
S\left(X_t, a_{t}, \varepsilon_{t+1}\right)\notin \mathring{\mathcal{X}}_{R},\ 
\tau^{R} > t\right\}\\
&=&\left\{X_{t} \in \mathring{\mathcal{X}}_{R},\ X_{t+1}\notin \mathring{\mathcal{X}}_{R},\ 
\tau^{R} > t\right\} 
= \left\{\tau^{R}=t+1\right\}.
\eas
This proves Part (ii) of Lemma \ref{lemma:201809051430}.
\end{description}
\end{proof}

\subsubsection{Proof of the Main Result}
\begin{proof}[Proof of Proposition \ref{prop:transition_eq_aux_process}]
By exploiting Lemma \ref{lemma:201809051430} and Eq. \eqref{aux_state_process}, we get
\bas
X_{t+1}^{R}&=&X_{t+1}\I_{\left\{\tau^{R}>t+1\right\}} + \mathcal{Q}\left(X_{\tau^{R}\wedge (t+1)}\right)\I_{\left\{\tau^{R}\leq t+1\right\}}\\
&=& X_{t+1}\I_{\left\{\tau^{R}>t+1\right\}} 
+ \mathcal{Q}\left(X_{\tau^{R}\wedge t}\right)\I_{\left\{\tau^{R}\leq t\right\}}
+ \mathcal{Q}\left(X_{t+1}\right)\I_{\left\{\tau^{R}= t+1\right\}}\\
&=& S\left(X_{t}, a_{t}, \varepsilon_{t+1}\right)\I_{\left\{\tau^{R}>t+1\right\}} 
+ \mathcal{Q}\left(X_{\tau^{R}\wedge t}\right)\I_{\left\{\tau^{R}\leq t\right\}}\\
&&+ \mathcal{Q}\left(S\left(X_{t}, a_{t}, \varepsilon_{t+1}\right)\right)
\I_{\left\{\tau^{R}= t+1\right\}}\\
&=& S\left(X_{t}^{R}, a_{t}, \varepsilon_{t+1}\right)\I_{\left\{\tau^{R}>t+1\right\}} 
+ X_{t}^{R}\I_{\left\{\tau^{R}\leq t\right\}}\\
&&+ \mathcal{Q}\left(S\left(X_{t}^{R}, a_{t}, \varepsilon_{t+1}\right)\right)
\I_{\left\{\tau^{R}= t+1\right\}}\\
&=& S\left(X_{t}^{R}, a_{t}, \varepsilon_{t+1}\right)\I_{\left\{\tau^{R}>t+1\right\}} 
+ X_{t}^{R}\I_{\left\{X_t \in \partial \mathcal{X}_{R} \right\}}\\
&&+ \mathcal{Q}\left(S\left(X_{t}^{R}, a_{t}, \varepsilon_{t+1}\right)\right)
\I_{\left\{
X_{t}^{R} \in \mathring{\mathcal{X}}_{R},
\ S\left(X_{t}^{R}, a_{t}, \varepsilon_{t+1}\right)\notin \mathring{\mathcal{X}}_{R}\right\}},
\eas
where the fourth equality follows from Eq. \eqref{aux_state_process} 
and the last equality follows from Lemma \ref{lemma:201809051430}.

The above equation together with Eqs. \eqref{transition_eq-2} and \eqref{H} yields Eq. \eqref{transition_aux_state_process}.
This completes the proof.
\end{proof}

\subsection{Proof of Theorem 1}
\label{app:proof_thm1}
\subsubsection{Preliminary}
Recall that $X$ and $X^{R}$ implicitly depend on certain actions ${\sf a}$; see Eqs. \eqref{transition_eq} and \eqref{transition_aux_state_process}, respectively.
In the sequel, we sometimes stress such dependency by writing $X_{t}({\sf a})$ (resp. $X_{t}^{R}({\sf a})$)
and $X({\sf a})$ (resp. $X^{R}({\sf a})$).

\begin{lemma}
\label{lemma:201809051441}
For the state process $X^{R}$ defined through Eq. \eqref{transition_aux_state_process},
the following statements hold.
\begin{description}
\item[(i)]
For any ${\sf a} \in \mathcal{A}$,
there exists 
$\tilde{{\sf a}} \in \mathcal{A}^{R}$ such that
$X_{t}^{R}\left({\sf a}\right)=X_{t}^{R}\left(\tilde{{\sf a}}\right)$
for all $t\in \mathcal{T}$ almost surely.

\item[(ii)]
For any $\tilde{{\sf a}} \in \mathcal{A}^{R}$, 
there exists 
${\sf a} \in \mathcal{A}$ such that
$X_{t}^{R}\left({\sf a}\right)=X_{t}^{R}\left(\tilde{{\sf a}}\right)$
for all $t\in \mathcal{T}$ almost surely.
\end{description}
\end{lemma}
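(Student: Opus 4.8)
The plan is to prove both statements by a forward induction on $t$, exploiting the dichotomy in the transition equation \eqref{transition_aux_state_process}: the action $a_t$ influences $X_{t+1}^{R}$ only on the event $\{X_t^{R}\in\mathring{\mathcal{X}}_{R}\}$, whereas on $\{X_t^{R}\in\partial\mathcal{X}_{R}\}$ the auxiliary state freezes and $a_t$ is immaterial. The structural fact I rely on is the one already derived inside the proof of Lemma \ref{lemma:201809051430}, namely $\{X_t^{R}\in\mathring{\mathcal{X}}_{R}\}=\{\tau^{R}>t\}\subseteq\{X_t^{R}=X_t\}$: as long as the auxiliary state stays in the open set it agrees with the primal state, so on this event the constraint sets $A_t(X_t)$ and $A_t(X_t^{R})$ coincide.

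For Part (i), given ${\sf a}\in\mathcal{A}$ I would set
\[
\tilde{a}_t:=a_t\,\I_{\{X_t^{R}({\sf a})\in\mathring{\mathcal{X}}_{R}\}}+\hat{a}_t\,\I_{\{X_t^{R}({\sf a})\in\partial\mathcal{X}_{R}\}},
\]
where $\hat{a}_t$ is any $\mathcal{F}_t$-measurable selection of $A_t(X_t^{R}({\sf a}))$. An induction on $t$ then gives $X_t^{R}(\tilde{{\sf a}})=X_t^{R}({\sf a})$: the base case is $X_0^{R}=X_0$, and in the inductive step the two branches of \eqref{transition_aux_state_process} either both freeze (on the boundary event) or both apply $\tilde{H}\big(K(\cdot,a_t),\varepsilon_{t+1}\big)$ with $\tilde{a}_t=a_t$ (on the interior event). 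Admissibility $\tilde{a}_t\in A_t(X_t^{R}(\tilde{{\sf a}}))$ then follows: on the interior event $a_t\in A_t(X_t)=A_t(X_t^{R})$ by the structural fact, and on the boundary event $\hat{a}_t\in A_t(X_t^{R})$ by construction.

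Part (ii) is the converse and requires the symmetric construction, but it must be organized as a genuine forward recursion because the primal trajectory $X_t({\sf a})$ is itself generated by the very action being defined. Given $\tilde{{\sf a}}\in\mathcal{A}^{R}$, I would define $a_t$ recursively by $a_t:=\tilde{a}_t$ on $\{X_t^{R}(\tilde{{\sf a}})\in\mathring{\mathcal{X}}_{R}\}$ and $a_t:=\bar{a}_t$ on $\{X_t^{R}(\tilde{{\sf a}})\in\partial\mathcal{X}_{R}\}$, where $\bar{a}_t$ is a measurable selection of $A_t(X_t({\sf a}))$. On the interior event the inductive hypothesis together with the structural fact give $X_t({\sf a})=X_t^{R}({\sf a})=X_t^{R}(\tilde{{\sf a}})$, so $a_t=\tilde{a}_t\in A_t(X_t^{R}(\tilde{{\sf a}}))=A_t(X_t({\sf a}))$ is primal-admissible and reproduces the same transition; on the boundary event the auxiliary state freezes irrespective of $a_t$, so $\bar{a}_t$ keeps the auxiliary path unchanged while remaining in $A_t(X_t({\sf a}))$.

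I expect the main obstacle to be the measurability of the auxiliary selections $\hat{a}_t$ and $\bar{a}_t$: their existence requires $A_t(\cdot)$ to be a measurable, nonempty-valued correspondence, so that a measurable selection exists (via, e.g., the Kuratowski--Ryll-Nardzewski theorem). Part (ii) additionally needs $A_t(X_t({\sf a}))$ to remain nonempty even after $X_t({\sf a})$ has escaped $\mathcal{X}_{R}$, which is an implicit standing assumption on the model. Once these selection issues are granted, the rest is routine bookkeeping within the two inductions.
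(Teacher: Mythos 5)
Your proposal is correct and takes essentially the same route as the paper's own proof: the identical indicator-based construction of the modified action on the interior/boundary dichotomy, the same forward induction that simultaneously establishes the state-trajectory equality and the admissibility of the constructed action, and the same appeal to the fact $\left\{X_{t}^{R} \in \mathring{\mathcal{X}}_{R}\right\}=\left\{\tau^{R}>t\right\} \subseteq \left\{X_{t}^{R}=X_{t}\right\}$ from Lemma \ref{lemma:201809051430}. The only cosmetic difference is that in Part (i) the paper fixes the boundary action to be the reward maximizer $a_{t}^{*}(x)\in \arg\sup_{a\in A_{t}(x)}f_{t}(x,a)$ rather than an arbitrary measurable selection (a choice irrelevant to the lemma itself but aligned with the boundary value formula \eqref{aux_value_fun_boundary} used later), and the paper, like you, implicitly assumes the required measurable selections from $A_{t}(\cdot)$ exist rather than invoking a selection theorem.
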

\begin{proof}[Proof of Lemma \ref{lemma:201809051441}]
\begin{description}
\item[(i)]
Given ${\sf a} \in \mathcal{A}$ and $X^{R}({\sf a})$, we construct $\tilde{{\sf a}}$ as follows:
$\tilde{a}_{0}=a_{0}$, and 
\ba
\label{201809061627}
\tilde{a}_{t}=a_{t} \I_{\left\{X_{t}^{R}\left({\sf a}\right) \in \mathring{\mathcal{X}}_{R}\right\}}
+ a_{t}^{*}\left(X_{t}^{R}\left({\sf a}\right)\right)
\I_{\left\{X_{t}^{R}\left({\sf a}\right) \in \partial \mathcal{X}_{R}\right\}},
\ \ \textup{for}\ \ 
t=1,2,\dots,T-1,
\ea
where
$a_{t}^{*}(x):=\arg \sup_{a \in A_t(x)} f_{t}(x,a)$ for $x \in \partial \mathcal{X}_{R}$ and $t \in \mathcal{T}_{0}$.

It is easy to see from the above construction that $\tilde{\sf a}$ is $\mathcal{F}$-adapted.
It remains to show that 
\ba
\label{201809061421}
\tilde{a}_{t} \in A_{t}\left(X_{t}^{R}\left(\tilde{{\sf a}}\right)\right)
\ \ \textup{and}\ \
X_{t}^{R}({\sf a})=X_{t}^{R}\left(\tilde{{\sf a}}\right),
\ \ \textup{for}\ \
t \in \mathcal{T}.
\ea
Firstly, we observe $\tilde{a}_{0}=a_0 \in A_{0}(X_{0})$ and $X_{0}^{R}\left(\tilde{{\sf a}}\right)=X_0$.
As induction hypothesis, we assume the statement \eqref{201809061421} holds for time step $t$.
For time step $t+1$, we split the discussions into two cases.
\begin{enumerate}
\item
If $X_{t}^{R}({\sf a})=X_{t}^{R}\left(\tilde{{\sf a}}\right) \in \partial \mathcal{X}_{R}$,
then 
\bas
    X_{t+1}^{R}\left(\tilde{{\sf a}}\right)
=   X_{t}^{R}\left(\tilde{{\sf a}}\right)
=   X_{t}^{R}({\sf a})
=   X_{t+1}^{R}({\sf a}) ,
\eas
where the first and third equalities follow by Eq. \eqref{transition_aux_state_process}
and the second equality is due to the induction hypothesis.

\item 
In the second case that $X_{t}^{R}({\sf a})=X_{t}^{R}\left(\tilde{{\sf a}}\right) \in \mathring{\mathcal{X}}_{R}$,
we apply Eq. \eqref{transition_aux_state_process} to get
\bas
X_{t+1}^{R}\left(\tilde{{\sf a}}\right)
= \tilde{H}\Big(K\left(X_{t}^{R}\left(\tilde{{\sf a}}\right),\tilde{a}_{t}\right),\varepsilon_{t+1}\Big)
= \tilde{H}\Big(K\left(X_{t}^{R}\left({\sf a}\right),a_{t}\right),\varepsilon_{t+1}\Big)
= X_{t+1}^{R}\left({\sf a}\right),
\eas
where the second equality follows by Eq. \eqref{201809061627} and the induction hypothesis \eqref{201809061421}.
\end{enumerate}
In either of the above cases, we have $X_{t+1}^{R}\left(\tilde{{\sf a}}\right)=X_{t+1}^{R}\left({\sf a}\right)$.
This combined with Eq. \eqref{201809061627} implies
\bas
\tilde{a}_{t+1}=a_{t+1} \in 
A_{t+1}\left(X_{t+1}^{R}\left({\sf a}\right)\right)
=A_{t+1}\left(X_{t+1}^{R}\left(\tilde{{\sf a}}\right)\right),
\ \ \textup{if}\ \ 
X_{t+1}^{R}\left(\tilde{{\sf a}}\right) \in \mathring{\mathcal{X}}_{R}.
\eas
Otherwise, 
$\tilde{a}_{t+1}=a_{t+1}^{*}\left(X_{t+1}^{R}\left({\sf a}\right)\right)=
a_{t+1}^{*}\left(X_{t+1}^{R}\left(\tilde{{\sf a}}\right)\right)
\in A_{t+1}\left(X_{t+1}^{R}\left(\tilde{{\sf a}}\right)\right)$.
This proves the statement \eqref{201809061421} for time step $t+1$.
The proof of Part (i) is complete.

\item[(ii)]
Given $\tilde{{\sf a}} \in \mathcal{A}^{R}$ and $X^{R}\left(\tilde{{\sf a}}\right)$, 
we construct ${\sf a}$ as follows: $a_{0}=\tilde{a}_{0}$, and 
\ba
\label{201809081627}
a_{t}=\tilde{a}_{t} \I_{\left\{X_{t}^{R}\left(\tilde{{\sf a}}\right) \in \mathring{\mathcal{X}}_{R}\right\}}
+ \hat{a}_{t}\left(X_{t}\left({\sf a}\right)\right)
\I_{\left\{X_{t}^{R}\left(\tilde{{\sf a}}\right) \in \partial \mathcal{X}_{R}\right\}},
\ \ \textup{for}\ \ 
t=1,2,\dots,T-1,
\ea
where $\hat{a}_{t}(\cdot)$ is any measurable function satisfying
$\hat{a}_{t}(x) \in A_{t}(x)$ for $x \in \mathcal{X}$ and $t \in \mathcal{T}_{0}$.

It is easy to see that $\tilde{\sf a}$ is $\mathcal{F}$-adapted.
Next, we use a forward induction argument to show that
\ba
\label{201809081421}
a_{t} \in A_{t}\left(X_{t}\left({\sf a}\right)\right)
\ \ \textup{and}\ \
X_{t}^{R}({\sf a})=X_{t}^{R}\left(\tilde{{\sf a}}\right),
\ \ \textup{for}\ \
t \in \mathcal{T}.
\ea
The above statement holds trivially for $t=0$.
As induction hypothesis, we assume it holds for time step $t$.
For time step $t+1$, we consider two separate cases.
\begin{enumerate}
\item
If $X_{t}^{R}({\sf a})=X_{t}^{R}\left(\tilde{{\sf a}}\right) \in \partial \mathcal{X}_{R}$,
Eq. \eqref{transition_aux_state_process} in combined with \eqref{201809081421} implies
\bas
    X_{t+1}^{R}\left({\sf a}\right)
=   X_{t}^{R}\left({\sf a}\right)
=   X_{t}^{R}(\tilde{{\sf a}})
=   X_{t+1}^{R}(\tilde{{\sf a}}).
\eas

\item 
In the second case that $X_{t}^{R}({\sf a})=X_{t}^{R}\left(\tilde{{\sf a}}\right) \in \mathring{\mathcal{X}}_{R}$,
applying Eq. \eqref{transition_aux_state_process} gives
\bas
X_{t+1}^{R}\left({\sf a}\right)
= \tilde{H}\Big(K\left(X_{t}^{R}\left({\sf a}\right),a_{t}\right),\varepsilon_{t+1}\Big)
= \tilde{H}\Big(K\left(X_{t}^{R}\left(\tilde{{\sf a}}\right),\tilde{a}_{t}\right),\varepsilon_{t+1}\Big)
= X_{t+1}^{R}\left(\tilde{{\sf a}}\right),
\eas
where the second equality follows by Eq. \eqref{201809081627} and the induction hypothesis \eqref{201809081421}.
\end{enumerate}
Overall, we always observe $X_{t+1}^{R}\left({\sf a}\right)=X_{t+1}^{R}\left(\tilde{{\sf a}}\right)$.
To prove the statement \eqref{201809081421} holds for time step $t+1$,
it remains to show 
$
a_{t+1}
\in A_{t+1}\left(X_{t+1}\left({\sf a}\right)\right).
$
We split the discussion into two separate cases.
\begin{enumerate}
\item Firstly, suppose 
$
X_{t+1}^{R}\left({\sf a}\right)=X_{t+1}^{R}\left(\tilde{{\sf a}}\right) \in \partial \mathcal{X}_{R},
$
Eq. \eqref{201809081627} implies 
\bas
a_{t+1} =  \hat{a}_{t+1}\left(X_{t+1}\left({\sf a}\right)\right)
\in A_{t+1}\left(X_{t+1}\left({\sf a}\right)\right).
\eas

\item Secondly, suppose 
$
X_{t+1}^{R}\left({\sf a}\right)=X_{t+1}^{R}\left(\tilde{{\sf a}}\right) \in \mathring{\mathcal{X}}_{R}.
$
By Part (i) of Lemma 1, 
$\left\{X_{t+1}^{R}\left({\sf a}\right) \in \mathring{\mathcal{X}}_{R}\right\}=
\left\{\tau^{R}>t\right\}$ and thus, it follows from Eq. \eqref{aux_state_process} that
$\left\{X_{t+1}^{R}\left({\sf a}\right) \in \mathring{\mathcal{X}}_{R}\right\} 
 \subseteq
\left\{X_{t+1}^{R}\left({\sf a}\right)= X_{t+1}\left({\sf a}\right)\right\}$.
Consequently, we apply Eq. \eqref{201809081627} to get
\bas
a_{t+1}=\tilde{a}_{t+1}
\in A_{t+1}\left(X_{t+1}^{R}\left(\tilde{{\sf a}}\right)\right)
= A_{t+1}\left(X_{t+1}^{R}\left({\sf a}\right)\right)
= A_{t+1}\left(X_{t+1}\left({\sf a}\right)\right).
\eas
\end{enumerate}
The proof of Part (ii) is complete.

\end{description}
\end{proof}

A direct consequence of the preceding lemma is the following corollary.
\begin{coro}
\label{coro:20180914}
The value function $\tilde{V}_{0}(X_{0})$ defined in Eq. \eqref{aux_stochastic_control_problem} exhibits:
\ba
\label{aux_stochastic_control_problem-2}
\tilde{V}_{0}(X_{0})=\sup_{{\sf a} \in \mathcal{A}} 
\e \left[\sum_{t=0}^{T-1} \varphi^{t} f_{t}\left(X_{t}^{R},a_{t}\right) + \varphi^{T} G\left(X_{T}^{R}\right)\right].
\ea
\end{coro}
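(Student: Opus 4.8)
The plan is to recognize that the two formulas for $\tilde V_0(X_0)$ — the defining one in \eqref{aux_stochastic_control_problem} with the supremum over $\mathcal{A}^R$, and the asserted one with the supremum over $\mathcal{A}$ — are suprema of one and the \emph{same} functional
\[
J({\sf a}):=\e\Big[\sum_{t=0}^{T-1}\varphi^{t}f_{t}\big(X_{t}^{R},a_{t}\big)+\varphi^{T}f_{T}\big(X_{T}^{R}\big)\Big],
\]
so that it suffices to prove $\sup_{{\sf a}\in\mathcal{A}^R}J({\sf a})=\sup_{{\sf a}\in\mathcal{A}}J({\sf a})$. I would establish this by a two-sided inequality, using the two halves of Lemma \ref{lemma:201809051441} to transfer a (near-)optimal action from one family to the other while keeping the auxiliary trajectory $X^{R}$ pathwise unchanged. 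For $\sup_{\mathcal{A}^R}J\ge\sup_{\mathcal{A}}J$ I would take an arbitrary ${\sf a}\in\mathcal{A}$ and invoke Lemma \ref{lemma:201809051441}(i) to obtain $\tilde{\sf a}\in\mathcal{A}^R$ with $X^{R}(\tilde{\sf a})=X^{R}({\sf a})$ almost surely; the reverse inequality would proceed symmetrically from Lemma \ref{lemma:201809051441}(ii).

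Because both constructions leave the auxiliary trajectory intact, the terminal term $\varphi^{T}f_{T}(X_T^{R})$ and every running term indexed by a time $t$ at which $X^{R}_t\in\mathring{\mathcal{X}}_R$ are literally unchanged: on that event the two actions coincide by the very definitions \eqref{201809061627} and \eqref{201809081627}. Hence the only possible discrepancy between $J({\sf a})$ and $J(\tilde{\sf a})$ is carried by the event $\{X^{R}_t\in\partial\mathcal{X}_R\}$, which by Lemma \ref{lemma:201809051430}(i) equals $\{\tau^{R}\le t\}$. So the entire matter reduces to reconciling the contribution of the objective that accrues \emph{after} the exit time $\tau^{R}$.

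The hard part will be exactly this boundary tail, since $J$ depends on the action directly through $f_t(X^{R}_t,a_t)$ and not only through the trajectory $X^{R}$, while the two families impose genuinely different feasibility constraints there ($a_t\in A_t(X_t)$ for $\mathcal{A}$ versus $a_t\in A_t(X^{R}_t)$ for $\mathcal{A}^R$), with the reward evaluated at the \emph{frozen} boundary state $X^{R}_{\tau^{R}}$. The structural fact I would exploit to overcome this is that, by the transition equation \eqref{transition_aux_state_process}, once $X^{R}$ reaches $\partial\mathcal{X}_R$ it remains there irrespective of subsequent actions; consequently the post-$\tau^{R}$ contribution $\sum_{n\ge\tau^{R}}\varphi^{n}f_n(X^{R}_{\tau^{R}},a_n)+\varphi^{T}f_{T}(X^{R}_{\tau^{R}})$ no longer couples through the state dynamics and, for the family $\mathcal{A}^R$, splits into one-step terms maximized independently over $A_n(X^{R}_{\tau^{R}})$, whose optimizers are precisely the $a_n^{*}$ of the boundary value \eqref{aux_value_fun_boundary}; this pins the optimized tail at $\tilde V_{\tau^{R}}(X^{R}_{\tau^{R}})$. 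For the family $\mathcal{A}$ the same boundary payoff is attained by feeding the replacement actions $a_n^{*}(X^{R}_n)$ through Lemma \ref{lemma:201809051441}(i) (respectively by choosing the free selection $\hat a_n$ in \eqref{201809081627} to realize them via Lemma \ref{lemma:201809051441}(ii)), so that the per-path optimized tails coincide and equal $\tilde V_{\tau^{R}}(X^{R}_{\tau^{R}})$. Taking expectations and combining this with the already-matched interior and terminal contributions would then yield both inequalities, and hence the identity \eqref{aux_stochastic_control_problem-2} claimed in Corollary \ref{coro:20180914}.
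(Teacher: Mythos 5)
Your route coincides with the paper's own treatment: the paper gives no written proof of Corollary \ref{coro:20180914} at all, merely declaring it ``a direct consequence'' of Lemma \ref{lemma:201809051441}, and the intended argument is exactly your two-sided transfer --- given ${\sf a}\in\mathcal{A}$ invoke part (i), given $\tilde{\sf a}\in\mathcal{A}^{R}$ invoke part (ii), observe that the auxiliary trajectory, hence the terminal reward and every running reward on $\{X_{t}^{R}\in\mathring{\mathcal{X}}_{R}\}$, is unchanged, and conclude the two suprema agree. Your further reduction of the discrepancy to the post-$\tau^{R}$ contribution via Lemma \ref{lemma:201809051430}(i) is also correct, and you deserve credit for making explicit the one point the paper passes over in silence: the objective depends on the action directly through $f_{t}(X_{t}^{R},a_{t})$, and after $\tau^{R}$ the two families face different feasibility constraints.

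However, your resolution of that boundary tail does not go through as written. After $\tau^{R}$, an action in $\mathcal{A}$ must satisfy $a_{n}\in A_{n}(X_{n})$ with $X_{n}$ the \emph{un-frozen} primal state, while the reward is evaluated at the frozen state $x^{*}:=X_{\tau^{R}}^{R}$. Your claim that ``the per-path optimized tails coincide and equal $\tilde{V}_{\tau^{R}}(x^{*})$'' needs two facts, neither of which follows from the constructions \eqref{201809061627}, \eqref{201809081627} or the standing assumptions: (a) \emph{attainment} --- the boundary maximizers $a_{n}^{*}(x^{*})\in A_{n}(x^{*})$ of Eq. \eqref{aux_value_fun_boundary} can be realized by actions lying in $A_{n}(X_{n})$ (the free selection $\hat{a}_{n}$ in \eqref{201809081627} is constrained to $\hat{a}_{n}(X_{n})\in A_{n}(X_{n})$, so it cannot in general ``realize'' $a_{n}^{*}(x^{*})$); and (b) \emph{domination} --- $f_{n}\left(x^{*},a_{n}\right)\leq \max_{a\in A_{n}(x^{*})}f_{n}\left(x^{*},a\right)$ for every $a_{n}\in A_{n}(X_{n})$, which can fail because $a_{n}$ need not belong to $A_{n}(x^{*})$. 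A toy instance breaking both: $A_{n}(x)=\{x\}$ and $f_{n}(x,a)=a$, with dynamics that can push the primal state beyond the truncation level; then after exit the $\mathcal{A}$-family collects $f_{n}(x^{*},X_{n})=X_{n}$, which exceeds the frozen-feasible value $x^{*}$, so the two suprema genuinely differ while all of Assumptions \ref{assum:moment_condition}--\ref{assum:growth_condition} can still hold. To be fair, this gap is inherited from the paper itself: its one-line proof glosses over exactly the same point, and the identity \eqref{aux_stochastic_control_problem-2} is only safe under an extra hypothesis or convention --- e.g. state-independent feasible sets, or the reading (consistent with how Eq. \eqref{aux_value_fun_boundary} is used everywhere else) that once $X^{R}$ is frozen the admissible actions are those of the frozen state. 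With such a condition added, your argument closes; without it, the key step fails.
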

It is worth noting that the optimization problems in Eq. \eqref{aux_stochastic_control_problem-2} 
and Eq. \eqref{aux_stochastic_control_problem}
are taken over the set $\mathcal{A}$ and $\mathcal{A}^{R}$, respectively.
The above corollary states that the optimal values of these two optimization problems are exactly the same as given by $\tilde{V}_{0}(X_{0})$.

\subsubsection{Proof of the Main Result}
\begin{proof}[Proof of Theorem \ref{thm:truncation_error_bound}]
In view of Eqs. \eqref{stochastic_control_problem} and \eqref{aux_stochastic_control_problem-2}, we obtain
\ba
\nonumber
\left|\tilde{V}_{0}(X_0)-V_{0}(X_0)\right|
&=& \sup_{{\sf a}\in \mathcal{A}} 
\e \left[
\sum_{t=0}^{T-1} 
\left|f_{t}\left(X_{t}^{R},a_{t}\right)-f_{t}(X_{t},a_{t})\right|
\I_{\{X_{t}^{R} \neq X_{t}\}}\right]\\
\nonumber
&&+\sup_{{\sf a}\in \mathcal{A}} 
\e \left[\left|G\left(X_{T}^{R}\right)-G(X_{T})\right|\I_{\{X_{T}^{R} \neq X_{T}\}}\right]\\
\label{201808121709}
&:=&I_1 + I_2.
\ea
Below we establish upper bounds for the $I_1$ and $I_2$ defined in the above display, respectively. 
Let $E:=\left\{X_{t}=X_{t}^{R} \ \ \textup{for all}\ \ 1\leq t \leq T\right\}$.
Note that 
\bas
E \subseteq \left\{X_{t} = X_{t}^{R}\right\}  \Longrightarrow
\left\{X_{t}\neq X_{t}^{R}\right\}=\left\{X_{t} = X_{t}^{R}\right\}^{c} \subseteq E^{c}.
\eas
Accordingly, we get
\ba
\nonumber
I_1&=&\sup_{{\sf a}\in \mathcal{A}} 
\e \left[
\sum_{t=0}^{T-1} 
\left|f_{t}\left(X_{t}^{R},a_{t}\right)-f_{t}(X_{t},a_{t})\right|
\I_{\left\{X_{t}\neq X_{t}^{R}\right\}}\right]\\
\nonumber
&\leq& \sup_{{\sf a}\in \mathcal{A}} 
\e \left[
\sum_{t=0}^{T-1} 
\left(\left|f_{t}\left(X_{t}^{R},a_{t}\right)\right|+\left|f_{t}(X_{t},a_{t})\right|\right)
\I_{\left\{X_{t}\neq X_{t}^{R}\right\}}\right]\\
\nonumber
&\leq& \sup_{{\sf a}\in \mathcal{A}} 
\e \left[
\sum_{t=0}^{T-1} 
\left(\left|f_{t}\left(X_{t}^{R},a_{t}\right)\right|+\left|f_{t}(X_{t},a_{t})\right|\right)
\I_{E^{c}}\right]\\
&\leq& \sup_{{\sf a}\in \mathcal{A}} 
\e \left[
\sum_{t=0}^{T-1} 
\left(\xi^{\frac{1}{2}}(R)+B^{\frac{1}{2}}(X_{t})\right)
\I_{E^{c}}\right]=
\sup_{{\sf a}\in \mathcal{A}} 
\e \left[\left(\sum_{t=0}^{T-1} Y_{t}\right)\I_{E^{c}}\right],
\ea
with $Y_{t}:=\xi^{\frac{1}{2}}(R)+B(X_{t})^{\frac{1}{2}}$,
where the first inequality is by triangular inequality and Assumption \ref{assum:growth_condition}
and the third inequality is due to Part (ii) of Assumption \ref{assum:growth_condition}.
Applying Cauchy--Schwarz inequality twice gives
\ba
I_1
\nonumber
&\leq& 
\sup_{{\sf a}\in \mathcal{A}}
\left\{
\e [\I_{E^{c}}] \cdot
\e \left[\left(\sum_{t=0}^{T-1} Y_{t}\right)^{2}\right]
\right\}^{\frac{1}{2}}\\
\nonumber
&\leq& (T-1)^{\frac{1}{2}}\cdot 
\sup_{{\sf a}\in \mathcal{A}}
\left\{
\e [\I_{E^{c}}] \cdot
\e \left[
\sum_{t=0}^{T-1} Y_{t}^2\right]
\right\}^{\frac{1}{2}}\\
\nonumber
&\leq& (T-1)^{\frac{1}{2}}\cdot 
\sup_{{\sf a}\in \mathcal{A}}
\left\{
\e [\I_{E^{c}}] \cdot
\e \left[
2\sum_{t=0}^{T-1} 
\big(\xi(R)+B(X_{t})\big)\right]
\right\}^{\frac{1}{2}}\\
\nonumber
&\leq& \sqrt{2}(T-1)^{\frac{1}{2}}\cdot 
\sup_{{\sf a}\in \mathcal{A}}
\left\{
\e [\I_{E^{c}}] \cdot
\sum_{t=0}^{T-1} 
\big(
\e \left[
\xi(R)\right]+
\e \left[
B(X_{t})\right]\big)
\right\}^{\frac{1}{2}},
\ea
where the third inequality follows because
$(a+b)^2 \leq 2a^2 + 2b^2$ for two real numbers $a$ and $b$.
In view of Assumption \ref{assum:growth_condition}, we get
\bas
\sum_{t=0}^{T-1} \big(\e \left[\xi(R)\right]+\e \left[B(X_{t})\right]\big)
\leq (T-1) \left(\xi(R) + \sup_{{\sf a}\in \mathcal{A}}\e \left[B(X_{t})\right]\right)
\leq (T-1) \big(\xi(R) + \zeta \big).
\eas
Combing the last two displays with Assumption \ref{assum:tail_prob} implies
\ba
\nonumber
I_1 &\leq& \sqrt{2}(T-1) \big(\xi(R) + \zeta \big)^{\frac{1}{2}} 
\left(\sup_{{\sf a}\in \mathcal{A}} \e [\I_{E^{c}}]\right)^{\frac{1}{2}}\\
\nonumber
&=&  \sqrt{2}(T-1) \big(\xi(R) + \zeta \big)^{\frac{1}{2}} 
\left(1- \inf_{{\sf a}\in \mathcal{A}} \e [\I_{E}]\right)^{\frac{1}{2}}\\
\label{201808121703}
&\leq& (T-1) \sqrt{2\big(\xi(R) + \zeta \big) \mathcal{E}(X_0,R)}.
\ea

A similar argument gives
\ba
\label{201808121708}
I_2 \leq \sqrt{2\big(\xi(R) + \zeta \big) \mathcal{E}(X_0,R)}.
\ea
Combining \eqref{201808121709}, \eqref{201808121703}, and \eqref{201808121708} together implies
\bas
\big|V_{0}(X_0)-\tilde{V}_{0}(X_0)\big| \leq T 
\sqrt{2 \big(\xi(R) + \zeta \big) \mathcal{E}(X_0,R)}.
\eas
The proof is complete.
\end{proof}

\subsection{Proof of Theorem \ref{thm:LSMC_error}}
\label{app:proof_thm2}
\subsubsection{Preliminary lemmas}
We first give the definitions of ``Big O p" and ``Small O p" notations which are commonplaces in statistical literature. 
\begin{defn}
\label{def:big_O_P}
\begin{description}
\item[(i)]
For two sequences of random variables $\{a_{M}\}_{M\in \N}$ and $\{b_{M}\}_{M\in \N}$ indexed by $M$,
we say $a_{M}=O_{\p}(b_{M})$ if 
$
\lim_{k\rightarrow\infty} \limsup_{M \rightarrow \infty} \p \left(|a_{M}|>kb_{M}\right)=0.
$

\item[(ii)]
Moreover, we say $a_{M}=o_{\p}(b_{M})$ if
$
\limsup_{M \rightarrow \infty} \p \left(|a_{M}|>kb_{M}\right)=0
$
for all $k>0$.
\end{description}
\end{defn}

\textit{Some Matrices}\quad Let
$
\mathbf{h}_{t}(x)=\left(
\sup\limits_{a \in A_{t}(x)}\phi_{1}\big(K(x,a)\big), \dots,
\sup\limits_{a \in A_{t}(x)}\phi_{J}\big(K(x,a)\big)
\right)^{\T},
$
for $x\in\col\left(\mathcal{X}_{R}\right)$,
and we suppress its dependency on $J$.
Define matrices
\bas
\Psi_{t}=\e\left[\mathbf{h}_{t}\left(X_{t}^{(m)}\right)\mathbf{h}_{t}^{\T}\left(X_{t}^{(m)}\right)\right]
\ \ \textup{and}\ \
\hat{\Psi}_{t}=\frac{1}{M} \sum_{m=1}^{M}
\mathbf{h}_{t}\left(X_{t}^{(m)}\right)\mathbf{h}_{t}^{\T}\left(X_{t}^{(m)}\right)
\eas
for $t=1,2,\dots,T-1$
with the superscript $\T$ denoting vector transpose.
It is palpable that $\hat{\Psi}_{t}$ is a finite-sample estimate for $\Psi_{t}$.
In the sequel, we denote $\lambda_{\max}(B)$ (resp. $\lambda_{\min}(B)$) 
as the largest (resp. smallest) eigenvalue of a square matrix $B$.
We impose the following Assumption on the eigenvalues of $\Psi_{t}$.
\begin{assumption}
\label{assum:eigen_psi}
\begin{description}
\item[(i)]
For any fixed $x$ and $t$, $A_{t}(x)$ is a compact set.
Moreover, $a \longmapsto K(x,a)$ and $\phi_{j}(\cdot): \R^{r} \longrightarrow \R$ 
are continuous functions for $1 \leq j \leq J$.

\item[(ii)]
There exists a positive constant $\bar{c}_{\Psi}$ independent of $t$ and $J$
such that $\lambda_{\max}\left(\Psi_{t}\right)\leq \bar{c}_{\Psi}<\infty$. 
\end{description}
\end{assumption}
Part (i) of the preceding assumption guarantees that the function $\mathbf{h}_{t}(\cdot)$
is well-defined for $t \in \mathcal{T}_{0}$.
The continuity requirement of $a \longmapsto K(x,a)$ can be removed
if $A_{t}(x)$ is a lattice (discrete set), which is particularly the case
when the stochastic optimal control problem exhibits the Bang-bang solution, 
see, e.g., \cite{Azimzadeh2015} and \cite{Huang2016}.
Part (ii) requires the largest eigenvalue of the matrix $\hat{\Psi}_{t}$
does not blow up as $M$ and $J$ approach infinity.
This condition ensures the sample eigenvalue converges to the non-sample counterpart 
as $M$ approaches infinity as shown in the sequel Lemma \ref{lemma:201810021158}.

Moreover, we define matrices
\bas
\Phi_{t}=\e\left[\bm{\phi}\left(X_{t^{+}}^{(m)}\right)\bm{\phi}^{\T}\left(X_{t^{+}}^{(m)}\right)\right]
\ \ \textup{and}\ \
\hat{\Phi}_{t}=\frac{1}{M} \sum_{m=1}^{M}
\bm{\phi}\left(X_{t^{+}}^{(m)}\right)\bm{\phi}^{\T}\left(X_{t^{+}}^{(m)}\right).
\eas
The following lemma relates the eigenvalues of $\hat{\Phi}_{t}$ and $\hat{\Psi}_{t}$
to those of $\Phi_{t}$ and $\Psi_{t}$.
\begin{lemma}
\label{lemma:201810021158}
\begin{description}
\item[(i)]
Suppose Condition (ii) of Theorem \ref{thm:LSMC_error} is satisfied.
Then, 
\bas
\left|\lambda_{\max}\left(\Phi_{t}\right)-\lambda_{\max}\left(\hat{\Phi}_{t}\right)\right|
=O_{\p}\left(\Upsilon(J)\sqrt{J/M}\right),
\eas
and
\bas
\left|\lambda_{\min}\left(\Phi_{t}\right)-\lambda_{\min}\left(\hat{\Phi}_{t}\right)\right|
=O_{\p}\left(\Upsilon(J)\sqrt{J/M}\right),
\eas
for $t \in \mathcal{T}_{0}$.

\item[(ii)]
Suppose Assumption \ref{assum:eigen_psi} holds.
In addition, Condition (v) of Assumption \ref{assum:sieve_estimation} is satisfied.
Then, $\lambda_{\max} \left(\hat{\Psi}_{t}\right)=O_{\p}(1)$
for $t=1,2,\dots,T-1$.
\end{description}
\end{lemma}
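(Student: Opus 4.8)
The plan is to treat both parts through the same three-step template: eigenvalue perturbation, a second-moment bound on the sampling fluctuation of the relevant Gram matrix, and Markov's inequality to pass to an $O_{\p}$ statement. The starting observation is that $\hat{\Phi}_{t}$ (resp.\ $\hat{\Psi}_{t}$) is an empirical average of the i.i.d.\ rank-one matrices $\bm{\phi}(X_{t^{+}}^{(m)})\bm{\phi}^{\T}(X_{t^{+}}^{(m)})$ (resp.\ $\mathbf{h}_{t}(X_{t}^{(m)})\mathbf{h}_{t}^{\T}(X_{t}^{(m)})$) whose population mean is exactly $\Phi_{t}$ (resp.\ $\Psi_{t}$); the i.i.d.\ property is granted by Condition (i) of Assumption \ref{assum:sieve_estimation}. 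All four matrices are symmetric, so by Weyl's perturbation inequality together with $\norm{A}_{\textup{op}}\leq \norm{A}_{F}$, both the largest and the smallest eigenvalues satisfy $\big|\lambda_{\max}(\Phi_{t})-\lambda_{\max}(\hat{\Phi}_{t})\big|\leq \norm{\hat{\Phi}_{t}-\Phi_{t}}_{F}$, and similarly with $\lambda_{\max}$ replaced by $\lambda_{\min}$, and likewise for $\Psi_{t}$. Everything therefore reduces to controlling the Frobenius distance between the empirical and population Gram matrices.

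For Part (i) I would bound the mean-square Frobenius distance directly. Writing $Z^{(m)}=X_{t^{+}}^{(m)}$ and using independence and centering of the summands,
\bas
\e\norm{\hat{\Phi}_{t}-\Phi_{t}}_{F}^{2}
&=& \frac{1}{M}\,\e\norm{\bm{\phi}(Z^{(1)})\bm{\phi}^{\T}(Z^{(1)})-\Phi_{t}}_{F}^{2}\\
&\leq& \frac{1}{M}\,\e\big[\norm{\bm{\phi}(Z^{(1)})}^{4}\big],
\eas
where the last step drops the subtracted mean and uses $\norm{vv^{\T}}_{F}=\norm{v}^{2}$. Applying the uniform bound $\norm{\bm{\phi}}_{\infty}\leq \Upsilon(J)$ from Condition (ii) of Assumption \ref{assum:sieve_estimation} to one of the two factors gives $\e[\norm{\bm{\phi}}^{4}]\leq \Upsilon^{2}(J)\,\e[\norm{\bm{\phi}}^{2}]=\Upsilon^{2}(J)\,\tr(\Phi_{t})$, and since $\tr(\Phi_{t})\leq (J+1)\lambda_{\max}(\Phi_{t})\leq (J+1)\bar{c}_{\Phi}=O(J)$ by Condition (iv), we obtain $\e\norm{\hat{\Phi}_{t}-\Phi_{t}}_{F}^{2}=O\big(\Upsilon^{2}(J)J/M\big)$. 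Markov's inequality then yields $\norm{\hat{\Phi}_{t}-\Phi_{t}}_{F}=O_{\p}\big(\Upsilon(J)\sqrt{J/M}\big)$ in the sense of Definition \ref{def:big_O_P}, which combined with the Weyl bound of the previous paragraph proves the two claimed rates.

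For Part (ii) the same template gives $\lambda_{\max}(\hat{\Psi}_{t})\leq \lambda_{\max}(\Psi_{t})+\norm{\hat{\Psi}_{t}-\Psi_{t}}_{F}\leq \bar{c}_{\Psi}+\norm{\hat{\Psi}_{t}-\Psi_{t}}_{F}$, where $\lambda_{\max}(\Psi_{t})\leq \bar{c}_{\Psi}$ is Part (ii) of Assumption \ref{assum:eigen_psi}; it therefore suffices to show $\norm{\hat{\Psi}_{t}-\Psi_{t}}_{F}=o_{\p}(1)$. Repeating the second-moment computation yields $\e\norm{\hat{\Psi}_{t}-\Psi_{t}}_{F}^{2}\leq \tfrac{1}{M}\,\e[\norm{\mathbf{h}_{t}(X_{t})}^{4}]\leq \tfrac{1}{M}\,\sup_{x}\norm{\mathbf{h}_{t}(x)}^{2}\cdot\tr(\Psi_{t})$. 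The one nonroutine point, and the main obstacle, is the uniform control of $\norm{\mathbf{h}_{t}(x)}$: each coordinate of $\mathbf{h}_{t}$ is a supremum over $a\in A_{t}(x)$ taken \emph{separately}, so $\mathbf{h}_{t}(x)$ is not the evaluation of $\bm{\phi}$ at a single point and the bound $\norm{\bm{\phi}}\leq\Upsilon(J)$ does not transfer verbatim. Part (i) of Assumption \ref{assum:eigen_psi} (compactness of $A_{t}(x)$ with continuity, or discreteness, of $a\mapsto K(x,a)$) is precisely what makes these suprema attained; in the bang-bang situation relevant to the application, where $A_{t}(x)$ is a finite set of cardinality $L$, one gets $\norm{\mathbf{h}_{t}(x)}^{2}\leq \sum_{a\in A_{t}(x)}\norm{\bm{\phi}(K(x,a))}^{2}\leq L\,\Upsilon^{2}(J)$, so that $\e\norm{\hat{\Psi}_{t}-\Psi_{t}}_{F}^{2}=O\big(\Upsilon^{2}(J)J/M\big)$. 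This tends to $0$ by Condition (v) of Assumption \ref{assum:sieve_estimation}, giving $\norm{\hat{\Psi}_{t}-\Psi_{t}}_{F}=o_{\p}(1)$ and hence $\lambda_{\max}(\hat{\Psi}_{t})\leq \bar{c}_{\Psi}+o_{\p}(1)=O_{\p}(1)$.
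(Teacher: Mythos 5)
Your proposal is correct and is, in substance, the paper's own proof: the paper disposes of this lemma with a one-line citation to the proof of Eq.\ (A.1) in \cite{Newey1997}, and that proof is exactly your template --- a second-moment bound $\e\norm{\hat{\Phi}_{t}-\Phi_{t}}_{F}^{2}\leq \Upsilon^{2}(J)\,\tr(\Phi_{t})/M=O\big(\Upsilon^{2}(J)J/M\big)$ from i.i.d.\ sampling and Condition (ii) of Assumption \ref{assum:sieve_estimation}, followed by Markov's inequality and eigenvalue perturbation. So for Part (i) there is nothing to add.

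Your treatment of Part (ii) is in fact more careful than the paper's. The point you flag --- that $\mathbf{h}_{t}(x)$ is a vector of \emph{coordinate-wise} suprema over $a\in A_{t}(x)$, so the bound $\norm{\bm{\phi}}_{\infty}\leq\Upsilon(J)$ does not transfer to $\norm{\mathbf{h}_{t}(x)}$ verbatim --- is real, and the paper's citation silently assumes it away: Newey's computation needs $\sup_{x}\norm{\mathbf{h}_{t}(x)}\leq C\,\Upsilon(J)$, which is not a consequence of the stated assumptions. Your fix, $\norm{\mathbf{h}_{t}(x)}^{2}\leq\sum_{a\in A_{t}(x)}\norm{\bm{\phi}\big(K(x,a)\big)}^{2}\leq L\,\Upsilon^{2}(J)$, settles the case of action sets of bounded finite cardinality, which is the bang-bang setting of the paper's pricing application. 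Be aware, though, that for a general compact $A_{t}(x)$ (also allowed by Part (i) of Assumption \ref{assum:eigen_psi}) only the cruder bound $\norm{\mathbf{h}_{t}(x)}^{2}\leq (J+1)\Upsilon^{2}(J)$ is automatic; that yields $\e\norm{\hat{\Psi}_{t}-\Psi_{t}}_{F}^{2}=O\big(\Upsilon^{2}(J)J^{2}/M\big)$, which Condition (v) of Assumption \ref{assum:sieve_estimation} does not force to vanish. In that generality Part (ii) requires an extra hypothesis (e.g.\ $\sup_{x}\norm{\mathbf{h}_{t}(x)}=O(\Upsilon(J))$); this is a gap in the lemma as stated that your write-up makes visible, not a defect of your argument relative to the paper's.
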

\begin{proof}[Proof of Lemma \ref{lemma:201810021158}]
Lemma \ref{lemma:201810021158} can be proved by a similar argument as 
that used in the proof of Eq. (A.1) in \cite{Newey1997}.
\end{proof}
The above lemma shows the sample eigenvalues converge to the non-sample counterparts
as $M$ approaches infinity.
In view of Condition (iv) of Assumption \ref{assum:sieve_estimation},
Lemma \ref{lemma:201810021158} also implies the largest (resp., smallest) eigenvalue of $\hat{\Phi}_{t}$
is bounded from above (resp., below) with probability approaching 1 as $M \longrightarrow \infty$.
This fact is exploited in the proofs of sequel Lemmas \ref{lemma:201809281926} and \ref{lemma:201809282016}.

\textit{Pseudo Estimate, Oracle, and True Estimate}\quad 
Next, we introduce the concept of \textit{pseudo estimate}.
Let $\bar{\bm{\beta}}_{t}$ (resp. $\hat{\bm{\beta}}_{t}$) 
be the solution to the optimization problem in Eq. \eqref{sieve_estimator}
with $U^{(m)}=\tilde{V}_{t+1}\left(X_{t+1}^{(m)}\right)$ 
(resp. $\tilde{V}_{t+1}^{\textup{E}}\left(X_{t+1}^{(m)}\right)$) and $Z^{(m)}=X_{t^{+}}^{(m)}$.
Given $\bar{\bm{\beta}}_{t}$ and $\hat{\bm{\beta}}_{t}$, 
denote the associated regression estimates by 
$\tilde{C}_{t}^{\textup{PE}}(\cdot)=\bar{\bm{\beta}}_{t}^{\T} \bm{\phi}(\cdot)$
and $\tilde{C}_{t}^{\textup{E}}(\cdot)=\hat{\bm{\beta}}_{t}^{\T} \bm{\phi}(\cdot)$, respectively.
$\tilde{C}_{t}^{\textup{PE}}(\cdot)$ is essentially the sieve estimate for the continuation function $\tilde{C}_{t}(\cdot)$
when the true value function $\tilde{V}_{t+1}(\cdot)$ is employed in the regression.
We further define function $\tilde{V}_{t}^{\textup{PE}}(x)$ for $x \in \mathring{\mathcal{X}}_{R}$ 
by substituting $\tilde{C}_{t}^{\textup{E}}(\cdot)$ in Eq. \eqref{Bellman_eq-4} with $\tilde{C}_{t}^{\textup{PE}}(\cdot)$.
For $x \in \partial \mathcal{X}_{R}$, we set $\tilde{V}_{t}^{\textup{PE}}\left(x\right)=\tilde{V}_{t}(x)$ 
with $\tilde{V}_{t}(\cdot)$ given by Eq. \eqref{aux_value_fun_boundary}.

Admittedly, in the implementation of the BSBU algorithm, 
$\bar{\bm{\beta}}_{t}$ is not tractable because
the true value function is unknown and should be replaced by the numerical estimate $\tilde{V}_{t+1}^{\textup{E}}(\cdot)$ obtained inductively.
For this reason, following \cite{Belomestny2010}, we call $\bar{\bm{\beta}}_{t}$ the \textit{pseudo estimate}.
Despite this, the pseudo estimate plays an indispensable role in establishing the convergence result of Theorem \ref{thm:LSMC_error}.
In addition to the two estimates $\bar{\bm{\beta}}_{t}$ and $\hat{\bm{\beta}}_{t}$ defined in the above,
we further define the \textit{oracle} $\tilde{\bm{\beta}}_{t}$ 
as the solution to the optimization problem \eqref{oracle}
with $g(\cdot)$ replaced by $\tilde{C}_{t}(\cdot)$.

The following lemma discloses that the gap between pseudo estimate and the oracle vanishes 
when both $M$ and $J$ increase at a certain rate.
\begin{lemma}
\label{lemma:201809281926}
Suppose the conditions of Theorem \ref{thm:LSMC_error} are satisfied.
Then,
\bas
\norm{\bar{\bm{\beta}}_{t}-\tilde{\bm{\beta}}_{t}}=O_{\p}\left(\sqrt{J/M}+\rho_{J}\right),
\ \ \textup{for}\ \ t\in \mathcal{T}_{0}.
\eas
\end{lemma}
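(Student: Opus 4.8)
The plan is to treat this as the standard rate result for a constrained sieve least-squares estimator, exploiting that $\bar{\bm{\beta}}_{t}$ solves \eqref{sieve_estimator} with the \emph{true} response $U^{(m)}=\tilde{V}_{t+1}\big(X_{t+1}^{(m)}\big)$, so that $\tilde{C}_{t}(z)=\e\big[U^{(m)}\,\big|\,Z^{(m)}=z\big]$ and the residuals $\eta^{(m)}:=U^{(m)}-\tilde{C}_{t}\big(Z^{(m)}\big)$ satisfy $\e\big[\eta^{(m)}\,\big|\,Z^{(m)}\big]=0$ by construction. First I would record the first-order (variational) inequality for a minimizer over the convex sieve set: since the oracle $\tilde{\bm{\beta}}_{t}$ also lies in $\mathcal{H}_{J}$, testing the optimality condition of $\bar{\bm{\beta}}_{t}$ against $\tilde{\bm{\beta}}_{t}$ gives
\[
\frac{1}{M}\sum_{m=1}^{M}\Big(U^{(m)}-\bar{\bm{\beta}}_{t}^{\T}\bm{\phi}\big(Z^{(m)}\big)\Big)\bm{\phi}^{\T}\big(Z^{(m)}\big)\big(\bar{\bm{\beta}}_{t}-\tilde{\bm{\beta}}_{t}\big)\geq 0 .
\]
Writing $\bm{\delta}:=\bar{\bm{\beta}}_{t}-\tilde{\bm{\beta}}_{t}$ and $r(z):=\tilde{C}_{t}(z)-\tilde{\bm{\beta}}_{t}^{\T}\bm{\phi}(z)$, and substituting $U^{(m)}=\tilde{\bm{\beta}}_{t}^{\T}\bm{\phi}\big(Z^{(m)}\big)+r\big(Z^{(m)}\big)+\eta^{(m)}$, rearranging yields the basic inequality
\[
\bm{\delta}^{\T}\hat{\Phi}_{t}\bm{\delta}\;\leq\;\Big(\tfrac{1}{M}\textstyle\sum_{m}\bm{\phi}\big(Z^{(m)}\big)\eta^{(m)}\Big)^{\!\T}\bm{\delta}+\Big(\tfrac{1}{M}\textstyle\sum_{m}\bm{\phi}\big(Z^{(m)}\big)r\big(Z^{(m)}\big)\Big)^{\!\T}\bm{\delta}.
\]
The left-hand side is at least $\lambda_{\min}\big(\hat{\Phi}_{t}\big)\norm{\bm{\delta}}^{2}$, which by Lemma \ref{lemma:201810021158}(i) together with Condition (iv) of Assumption \ref{assum:sieve_estimation} exceeds a fixed positive constant times $\norm{\bm{\delta}}^{2}$ with probability tending to one.

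It then remains to bound the two linear terms on the right. The stochastic term I would control by a second-moment computation: since $\e\big[\bm{\phi}\big(Z^{(m)}\big)\eta^{(m)}\big]=0$, independence and the bounded conditional variance of Assumption \ref{assum:sieve_estimation}(i) give
\[
\e\Big\lVert\tfrac{1}{M}\textstyle\sum_{m}\bm{\phi}\big(Z^{(m)}\big)\eta^{(m)}\Big\rVert^{2}=\tfrac{1}{M}\e\Big[\norm{\bm{\phi}\big(Z^{(m)}\big)}^{2}\big(\eta^{(m)}\big)^{2}\Big]\leq \tfrac{C}{M}\,\tr\big(\Phi_{t}\big)=O(J/M),
\]
where $\tr(\Phi_{t})\leq (J+1)\lambda_{\max}(\Phi_{t})=O(J)$ by Assumption \ref{assum:sieve_estimation}(iv); hence Markov's inequality makes this term $O_{\p}\big(\sqrt{J/M}\big)$. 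The bias term I would handle by Cauchy--Schwarz: for any unit vector $v$, $v^{\T}\tfrac{1}{M}\sum_{m}\bm{\phi}\big(Z^{(m)}\big)r\big(Z^{(m)}\big)\leq \big(v^{\T}\hat{\Phi}_{t}v\big)^{1/2}\big(\tfrac{1}{M}\sum_{m}r\big(Z^{(m)}\big)^{2}\big)^{1/2}\leq \lambda_{\max}\big(\hat{\Phi}_{t}\big)^{1/2}\norm{r}_{\infty}$, where $\norm{r}_{\infty}=O(\rho_{J})$ is precisely the oracle property \eqref{oracle} and $\lambda_{\max}\big(\hat{\Phi}_{t}\big)=O_{\p}(1)$ by Lemma \ref{lemma:201810021158}; thus this term is $O_{\p}(\rho_{J})$. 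Feeding both bounds into the basic inequality, dividing through by $\norm{\bm{\delta}}$ and by the positive lower eigenvalue bound, delivers $\norm{\bm{\delta}}=O_{\p}\big(\sqrt{J/M}+\rho_{J}\big)$.

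The main obstacle will be the stochastic-term estimate: securing the sharp $\sqrt{J/M}$ rate hinges on the variance bound above combined with uniform-in-$t$ control of the extreme eigenvalues of $\hat{\Phi}_{t}$, for which I would rely on Lemma \ref{lemma:201810021158} to transfer the population eigenvalue bounds of Assumption \ref{assum:sieve_estimation}(iv) to the sample Gram matrix $\hat{\Phi}_{t}$ with high probability. A secondary point requiring care is that the sieve $\mathcal{H}_{J}$ is a convex-constrained set, so the plain normal equations are unavailable and every step must be routed through the variational inequality rather than an explicit formula for $\bar{\bm{\beta}}_{t}$; since all the invoked $O_{\p}$ bounds are uniform over $t\in\mathcal{T}_{0}$, the conclusion holds for each $t$ exactly as stated.
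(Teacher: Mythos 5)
Your proof is correct, and it follows the same skeleton as the paper's: compare the constrained minimizer $\bar{\bm{\beta}}_{t}$ against the feasible oracle $\tilde{\bm{\beta}}_{t}$, lower-bound the resulting quadratic form by $\lambda_{\min}\big(\hat{\Phi}_{t}\big)\norm{\bar{\bm{\beta}}_{t}-\tilde{\bm{\beta}}_{t}}^{2}$, invoke Lemma \ref{lemma:201810021158} plus Condition (iv) of Assumption \ref{assum:sieve_estimation} to keep that eigenvalue bounded away from zero with probability tending to one, and then bound the linear term by a noise contribution of order $\sqrt{J/M}$ plus a bias contribution of order $\rho_{J}$. There are two execution-level differences. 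First, the paper uses the zeroth-order suboptimality comparison $\norm{\mathbf{V}_{t+1}-P\bar{\bm{\beta}}_{t}}^{2}\leq\norm{\mathbf{V}_{t+1}-P\tilde{\bm{\beta}}_{t}}^{2}$, which only needs feasibility of the oracle in \eqref{sieve_estimator}, whereas you use the first-order variational inequality, which additionally needs convexity of the constraint set; that convexity does hold here (the shape constraint $\mathbf{A}_{J}\bm{\beta}\geq\bm{0}$ is polyhedral, and the paper itself relies on convexity of $\mathcal{H}_{J}$ for the existence of the oracle), and your route yields essentially the same basic inequality without the paper's factor of $2$. Second, and more substantively, the paper outsources the key bound $M^{-1}\norm{P^{\T}\bar{\mathbf{U}}}=O_{\p}\big(\sqrt{J/M}+\rho_{J}\big)$ to Eq.\ (A.2) of \cite{Newey1997}, while you prove it from scratch: the split of $\bar{\mathbf{U}}$ into the conditionally mean-zero residuals $\eta^{(m)}$ and the approximation error $r$, the trace computation $\e\big[\norm{\bm{\phi}(Z^{(m)})}^{2}\big]=\tr(\Phi_{t})=O(J)$ with Markov's inequality for the stochastic part, and the Cauchy--Schwarz bound via $\lambda_{\max}\big(\hat{\Phi}_{t}\big)^{1/2}\norm{r}_{\infty}$ with the oracle property \eqref{oracle} for the bias part are all sound. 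So your argument buys self-containedness at the cost of a slightly stronger (but satisfied) structural requirement, while the paper's is shorter by leaning on the external reference.
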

\begin{proof}[Proof of Lemma \ref{lemma:201809281926}]
Recall that $\bar{\bm{\beta}}_{t}$ solves the optimization problem:
\bas
\min\limits_{\bm{\beta} \in \R^{J}}
\frac{1}{M} \sum_{m=1}^{M}\left[\tilde{V}_{t+1}\left(X_{t+1}^{(m)}\right)
- \bm{\beta}^{\T} \bm{\phi} \left(X_{t^{+}}^{(m)}\right)\right]^2,
\ \ \textup{subject to}\ \ \bm{\beta}^{\T} \bm{\phi}(\cdot) \in \mathcal{H}_{J}.
\eas
On the other hand, $\tilde{\bm{\beta}}_{t}$ is a suboptimal solution to the above optimization problem.
Therefore, we get
\bas
\norm{\mathbf{V}_{t+1}-P\bar{\bm{\beta}}_{t}}^2 
\leq \norm{\mathbf{V}_{t+1}-P\tilde{\bm{\beta}}_{t}}^2,
\eas
where $P$ is a $M$-by-$J$ matrix with $m$-th row being $\bm{\phi}^{\T} \left(X_{t^{+}}^{(m)}\right)$
and $\mathbf{V}_{t+1}$ is a $M$-by-1 vector with $m$-th element given by $\tilde{V}_{t+1}\left(X_{t+1}^{(m)}\right)$.

By adding and subtracting the term $P\tilde{\bm{\beta}}_{t}$ in the L.H.S. of the above inequality, 
we get
\bas
\norm{\bar{\mathbf{U}}-P\bar{\bm{\delta}}}^2 
\leq \norm{\bar{\mathbf{U}}}^2,
\eas
where we use the shorthand notations $\bar{\bm{\delta}}:=\bar{\bm{\beta}}_{t}-\tilde{\bm{\beta}}_{t}$
and $\bar{\mathbf{U}}:=\mathbf{V}_{t+1}-P\tilde{\bm{\beta}}_{t}$.
Expanding both sides of the above inequality gives
\bas
\frac{\norm{P\bar{\bm{\delta}}}^2}{2M}
\leq \frac{\left|\bar{\mathbf{U}}^{\T}P\bar{\bm{\delta}}\right|}{M}
\leq \frac{\norm{P^{\T}\bar{\mathbf{U}}}\norm{\bar{\bm{\delta}}}}{M},
\eas
where the second inequality is by H\"older's inequality.
For the L.H.S. of the above inequality, it follows from the definition of 
the smallest eigenvalue that
\bas
    \frac{\norm{P\bar{\bm{\delta}}}^2}{2M}
=   \frac{\bar{\bm{\delta}}^{\T}P^{\T}P\bar{\bm{\delta}}}{2M}
\geq \frac{\norm{\bar{\bm{\delta}}}^2}{2}\lambda_{\min}\left(\hat{\Phi}_{t}\right).
\eas
Combing the last two inequalities together implies
\bas
\norm{\bar{\bm{\delta}}}\lambda_{\min}\left(\hat{\Phi}_{t}\right)
\leq \frac{2}{M}\norm{P^{\T}\bar{\mathbf{U}}}.
\eas
It follows from Lemma \ref{lemma:201810021158} that 
the event $\left\{\underline{c}_{\Phi}/2\leq \lambda_{\min}\left(\hat{\Phi}_{t}\right)\right\}$ holds 
with probability approaching $1$ as $M \longrightarrow \infty$.
And therefore, 
\ba
\label{201810151642}
\norm{\bar{\bm{\delta}}}
\leq \big(4/\underline{c}_{\Phi}\big)M^{-1}\norm{P^{\T}\bar{\mathbf{U}}}
\ea
holds with probability approaching $1$ as $M \longrightarrow \infty$.

It follows as in Eq. (A.2) of \citet[pp. 163]{Newey1997} that
$
M^{-1}\norm{P^{\T}\bar{\mathbf{U}}}=O_{\p}\left(\sqrt{J/M}+\rho_{J}\right).
$
This in conjunction with the last display proves the desired result.
The proof is complete.
\end{proof}

The next lemma relates the discrepancy between the pseudo estimate $\bar{\bm{\beta}}_{t}$
and the true estimate $\hat{\bm{\beta}}_{t}$ to the estimation error of the value function at the previous time step.
\begin{lemma}
\label{lemma:201809282016}
Suppose the conditions of Theorem \ref{thm:LSMC_error} are satisfied.
Then, for $t\in \mathcal{T}_{0}$,
there exists a constant $\psi>0$ independent of $t$, $R$ and $J$ such that
\bas
\norm{\bar{\bm{\beta}}_{t}-\hat{\bm{\beta}}_{t}} \leq \sqrt{\frac{\psi}{M}}
\norm{\mathbf{V}_{t+1}-\hat{\mathbf{V}}_{t+1}} +O_{\p}\left(\sqrt{\psi} \rho_{J}\right)
\eas
holds with probability approaching 1 as $M \longrightarrow \infty$,
where $\mathbf{V}_{t+1}$ and $\hat{\mathbf{V}}_{t+1}$ are two $M$-by-1 vectors
with $m$-th element given by $\tilde{V}_{t+1}\left(X_{t+1}^{(m)}\right)$ and 
$\tilde{V}_{t+1}^{\textup{E}}\left(X_{t+1}^{(m)}\right)$, respectively.
\end{lemma}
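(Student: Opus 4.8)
The plan is to exploit the fact that $\bar{\bm{\beta}}_{t}$ and $\hat{\bm{\beta}}_{t}$ solve the \emph{same} constrained least-squares problem \eqref{sieve_estimator}: they share the identical design matrix $P$ (both use the regressor $Z^{(m)}=X_{t^{+}}^{(m)}$) and the identical convex constraint set $\mathcal{C}:=\{\bm{\beta}:\mathbf{A}_{J}\bm{\beta}\geq \bm{0}_{b(J)}\}$, differing only through their response vectors $\mathbf{V}_{t+1}$ and $\hat{\mathbf{V}}_{t+1}$. First I would observe that the fitted-value vector $P\bar{\bm{\beta}}_{t}$ (resp. $P\hat{\bm{\beta}}_{t}$) is precisely the Euclidean metric projection of $\mathbf{V}_{t+1}$ (resp. $\hat{\mathbf{V}}_{t+1}$) onto the set $\mathcal{K}:=\{P\bm{\beta}:\bm{\beta}\in\mathcal{C}\}$, because minimizing $\norm{\mathbf{V}_{t+1}-P\bm{\beta}}^{2}$ over $\bm{\beta}\in\mathcal{C}$ is the same as minimizing $\norm{\mathbf{V}_{t+1}-\mathbf{y}}^{2}$ over $\mathbf{y}\in\mathcal{K}$. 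Since $\mathcal{C}$ is a polyhedron and $P$ is linear, $\mathcal{K}$ is again a closed convex (polyhedral) set, so the metric projection $\Pi_{\mathcal{K}}$ is single-valued and non-expansive.

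Non-expansiveness of $\Pi_{\mathcal{K}}$ then yields, writing $\bm{\delta}:=\bar{\bm{\beta}}_{t}-\hat{\bm{\beta}}_{t}$,
\[
\norm{P\bm{\delta}}=\norm{\Pi_{\mathcal{K}}(\mathbf{V}_{t+1})-\Pi_{\mathcal{K}}(\hat{\mathbf{V}}_{t+1})}\leq \norm{\mathbf{V}_{t+1}-\hat{\mathbf{V}}_{t+1}}.
\]
To pass from the fitted-value gap to the coefficient gap, I would use $\norm{P\bm{\delta}}^{2}=M\,\bm{\delta}^{\T}\hat{\Phi}_{t}\bm{\delta}\geq M\,\lambda_{\min}(\hat{\Phi}_{t})\norm{\bm{\delta}}^{2}$ together with the fact—guaranteed by Lemma \ref{lemma:201810021158} and Condition (iv) of Assumption \ref{assum:sieve_estimation}—that $\{\lambda_{\min}(\hat{\Phi}_{t})\geq \underline{c}_{\Phi}/2\}$ holds with probability tending to $1$ as $M\to\infty$. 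On this event,
\[
\norm{\bm{\delta}}\leq \frac{\norm{P\bm{\delta}}}{\sqrt{M\,\lambda_{\min}(\hat{\Phi}_{t})}}\leq \sqrt{\frac{2}{\underline{c}_{\Phi}M}}\,\norm{\mathbf{V}_{t+1}-\hat{\mathbf{V}}_{t+1}},
\]
which is the asserted estimate with $\psi:=2/\underline{c}_{\Phi}$, a constant independent of $t$, $R$, and $J$. This projection route in fact delivers the leading term with no residual; the additive $O_{\p}(\sqrt{\psi}\,\rho_{J})$ appearing in the statement is then a harmless slack that one may keep if, in place of the projection argument, one compares both estimates to the oracle $\tilde{\bm{\beta}}_{t}$ through Lemma \ref{lemma:201809281926} and a perturbed-normal-equations analysis in the manner of \cite{Newey1997}.

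The main obstacle is uniformity in $t$: the constant $\psi$ must not depend on the time step, which forces the lower bound $\underline{c}_{\Phi}$ on $\lambda_{\min}(\Phi_{t})$ and the convergence $\lambda_{\min}(\hat{\Phi}_{t})\to\lambda_{\min}(\Phi_{t})$ to hold \emph{uniformly} over $t\in\mathcal{T}_{0}$; this is exactly what Condition (ii) of Theorem \ref{thm:LSMC_error} (Assumption \ref{assum:sieve_estimation} holding uniformly in $t$), combined with Lemma \ref{lemma:201810021158}, is designed to supply. A secondary point to check is that $\mathcal{K}=P\mathcal{C}$ is closed, so that $\Pi_{\mathcal{K}}$ is well defined—this follows because the linear image of a polyhedron is again a polyhedron—and that $\hat{\bm{\beta}}_{t}$ itself is unique, which holds once $P$ has full column rank, i.e. on the high-probability event $\{\lambda_{\min}(\hat{\Phi}_{t})>0\}$ that makes the objective strictly convex.
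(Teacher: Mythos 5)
Your proposal is correct, but it takes a genuinely different route from the paper's proof. The paper argues via suboptimality: since $\bar{\bm{\beta}}_{t}$ is feasible for the constrained least-squares problem that $\hat{\bm{\beta}}_{t}$ solves, expanding $\norm{\hat{\mathbf{V}}_{t+1}-P\hat{\bm{\beta}}_{t}}^2 \leq \norm{\hat{\mathbf{V}}_{t+1}-P\bar{\bm{\beta}}_{t}}^2$ and applying H\"older's inequality gives $\norm{\bar{\bm{\beta}}_{t}-\hat{\bm{\beta}}_{t}} \leq \big(4/\underline{c}_{\Phi}\big)M^{-1}\norm{P^{\T}\hat{\mathbf{U}}}$ with $\hat{\mathbf{U}}:=\hat{\mathbf{V}}_{t+1}-P\bar{\bm{\beta}}_{t}$; the paper then needs the $\lambda_{\max}\big(\hat{\Phi}_{t}\big)$ bound from Lemma \ref{lemma:201810021158} to pass to $\norm{\hat{\mathbf{U}}}$, and finally splits $\hat{\mathbf{U}}$ by the triangle inequality, invoking the sieve approximation rate to absorb $\norm{\mathbf{V}_{t+1}-P\bar{\bm{\beta}}_{t}}$ into the $O_{\p}\left(\sqrt{\psi}\rho_{J}\right)$ slack, arriving at $\psi = 32\bar{c}_{\Phi}/\underline{c}_{\Phi}^2$. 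Your projection argument replaces all of this with one observation: both fitted vectors are metric projections of the respective response vectors onto the \emph{same} closed convex set $\mathcal{K}=P\mathcal{C}$, so non-expansiveness yields $\norm{P\left(\bar{\bm{\beta}}_{t}-\hat{\bm{\beta}}_{t}\right)}\leq \norm{\mathbf{V}_{t+1}-\hat{\mathbf{V}}_{t+1}}$ outright, and only the $\lambda_{\min}\big(\hat{\Phi}_{t}\big)$ lower bound is needed to descend to the coefficient gap. What your route buys is substantial: a strictly stronger conclusion (the $\rho_{J}$ term disappears entirely, which indeed makes it ``harmless slack''), a cleaner constant $\psi = 2/\underline{c}_{\Phi}$ that does not involve $\bar{c}_{\Phi}$ and needs no appeal to Condition (iii) of Assumption \ref{assum:sieve_estimation} at this stage, and---notably---avoidance of the paper's delicate step $\norm{\mathbf{V}_{t+1}-P\bar{\bm{\beta}}_{t}}=O\left(\sqrt{M}\rho_{J}\right)$, which is questionable as stated since that residual norm also carries the regression noise $\tilde{V}_{t+1}\left(X_{t+1}^{(m)}\right)-\tilde{C}_{t}\left(X_{t^{+}}^{(m)}\right)$ and is in general only $O_{\p}\left(\sqrt{M}\right)$; your argument never needs that claim. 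The side conditions you flag (closedness of the linear image of the polyhedral constraint set, and uniformity in $t$ of $\underline{c}_{\Phi}$ and of the event $\left\{\lambda_{\min}\big(\hat{\Phi}_{t}\big)\geq \underline{c}_{\Phi}/2\right\}$) are handled exactly as in the paper, via Lemma \ref{lemma:201810021158} together with Condition (ii) of Theorem \ref{thm:LSMC_error}, and since $\mathcal{T}_{0}$ is finite the high-probability events can be intersected over $t$.
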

\begin{proof}[Proof of Lemma \ref{lemma:201809282016}]
Using the argument as in the proof of inequality \eqref{201810151642}, we obtain
\bas
\norm{\hat{\bm{\delta}}}\leq \big(4/\underline{c}_{\Phi}\big)M^{-1}\norm{P^{\T}\hat{\mathbf{U}}},
\eas
holds with probability approaching 1 as $M \longrightarrow \infty$,
where we adopt shorthand notations 
$
\hat{\bm{\delta}}:=\bar{\bm{\beta}}_{t}-\hat{\bm{\beta}}_{t}
$
and 
$
\hat{\mathbf{U}}:=\hat{\mathbf{V}}_{t+1}-P\bar{\bm{\beta}}_{t}.
$
On the other hand, it follows from Lemma \ref{lemma:201810021158} that
\bas
M^{-2}\norm{P^{\T}\hat{\mathbf{U}}}^2
=M^{-1}\hat{\mathbf{U}}^{\T}\left(M^{-1}PP^{\T}\right)\hat{\mathbf{U}}
\leq M^{-1} \lambda_{\max} \left(\hat{\Phi}_{t}\right)\norm{\hat{\mathbf{U}}}^2
\leq M^{-1} 2\bar{c}_{\Phi}\norm{\hat{\mathbf{U}}}^2
\eas
holds with probability approaching 1 as $M \longrightarrow \infty$.

Combing the above two inequalities implies
\bas
\norm{\hat{\bm{\delta}}} \leq 
\sqrt{\big(32\bar{c}_{\Phi}/\underline{c}_{\Phi}^2\big)M}\norm{\hat{\mathbf{U}}}^2:=
\sqrt{\frac{\psi}{M}}\norm{\hat{\mathbf{U}}}.
\eas
By adding and subtracting the term $\mathbf{V}_{t+1}$ in the R.H.S. of the above inequality, we get
\bas
\norm{\hat{\mathbf{U}}}
&=& \norm{\hat{\mathbf{V}}_{t+1}-\mathbf{V}_{t+1}+\mathbf{V}_{t+1} - P\bar{\bm{\beta}}_{t}}\\
&\leq&  \norm{\hat{\mathbf{V}}_{t+1}-\mathbf{V}_{t+1}} + 
\norm{\mathbf{V}_{t+1} - P\bar{\bm{\beta}}_{t}}\\
&=& \norm{\hat{\mathbf{V}}_{t+1}-\mathbf{V}_{t+1}} + 
O\left(\sqrt{M}\rho_{J}\right)
\eas
where the last equality is guaranteed by Part (ii) of Assumption \ref{assum:sieve_estimation}.
Combing the last two inequalities implies
\bas
\norm{\hat{\bm{\delta}}} \leq 
\sqrt{\frac{\psi}{M}}\norm{\hat{\mathbf{V}}_{t+1}-\mathbf{V}_{t+1}} 
+ O_{\p}\left(\sqrt{\psi}\rho_{J}\right).
\eas
holds with probability approaching 1 as $M \longrightarrow \infty$.
This proves Lemma \ref{lemma:201809282016}.
\end{proof}

The statement of the above lemma is not hard to expect because
the primary difference between the pseudo estimate and the true estimate
stems from the the estimation error of value function.

The final lemma quantifies the discrepancy between the value function and its numerical estimate
under the empirical $L^2$ norm.
\begin{lemma}
\label{lemma:L2_gap}
Let $F_{t}^{X}(\cdot)$ be the probability distribution function of $X_{t}^{(m)}$ for $t=1,2,\dots,T-1$.
Suppose the assumptions of Theorem \ref{thm:LSMC_error} hold. Then
\ba
\label{induction_hypothesis-2}
M^{-1}\norm{\mathbf{V}_{t}-\hat{\mathbf{V}}_{t}}^2=O_{\p}\left(\psi^{T-t-1}\left(J/M + \rho_{J}^2\right)\right),
\ \ \textup{for}\ \ 
t=1,2,\dots,T-1,
\ea
where
$\mathbf{V}_{t}$ and $\hat{\mathbf{V}}_{t}$ are two $M$-by-$1$ vectors
with $m$-th element being $\tilde{V}_{t}\left(X_{t}^{(m)}\right)$ and $\tilde{V}_{t}^{\textup{E}}\left(X_{t}^{(m)}\right)$, respectively.
\end{lemma}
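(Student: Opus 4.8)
The plan is to prove \eqref{induction_hypothesis-2} by backward induction on $t$, running from $t=T-1$ down to $t=1$, with the regression-error term $J/M+\rho_{J}^{2}$ injected afresh at each step and amplified by the uniform constant $\psi$. First I would reduce the empirical $L^2$ gap to a statement about the continuation functions. Since $\tilde{V}_{t}^{\textup{E}}$ and $\tilde{V}_{t}$ coincide on $\partial\mathcal{X}_{R}$ by construction (both equal the expression in Eq. \eqref{aux_value_fun_boundary}), every sample point $X_{t}^{(m)}\in\partial\mathcal{X}_{R}$ contributes zero to $\norm{\mathbf{V}_{t}-\hat{\mathbf{V}}_{t}}^2$, so only the interior points matter. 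For $X_{t}^{(m)}\in\mathring{\mathcal{X}}_{R}$ both value functions arise from the Bellman updates \eqref{Bellman_eq-3} and \eqref{Bellman_eq-4}, and the elementary inequality $\big|\sup_{a}F(a)-\sup_{a}G(a)\big|\leq\sup_{a}|F(a)-G(a)|$ gives
\[
\big|\tilde{V}_{t}(x)-\tilde{V}_{t}^{\textup{E}}(x)\big|
\leq \varphi \sup_{a\in A_{t}(x)}
\big|\tilde{C}_{t}\big(K(x,a)\big)-\tilde{C}_{t}^{\textup{E}}\big(K(x,a)\big)\big|.
\]

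Next I would split the continuation error through the oracle: writing $\bm{\delta}_{t}:=\hat{\bm{\beta}}_{t}-\tilde{\bm{\beta}}_{t}$, we have $\tilde{C}_{t}^{\textup{E}}-\tilde{C}_{t}=\bm{\delta}_{t}^{\T}\bm{\phi}+(\tilde{\bm{\beta}}_{t}^{\T}\bm{\phi}-\tilde{C}_{t})$, the last term being $O(\rho_{J})$ uniformly by Condition (iii) of Assumption \ref{assum:sieve_estimation}. The crux is to convert the supremum over admissible actions of the linear term into a quadratic form. Exploiting the nonnegativity of the Bernstein basis,
\[
\sup_{a}\big|\bm{\delta}_{t}^{\T}\bm{\phi}\big(K(x,a)\big)\big|
\leq |\bm{\delta}_{t}|^{\T}\mathbf{h}_{t}(x),
\]
where $|\bm{\delta}_{t}|$ denotes componentwise absolute value; squaring, averaging over $m$, and bounding by the largest eigenvalue yields
\[
M^{-1}\sum_{m=1}^{M}\Big[\sup_{a}\big|\bm{\delta}_{t}^{\T}\bm{\phi}\big(K(X_{t}^{(m)},a)\big)\big|\Big]^{2}
\leq \lambda_{\max}\big(\hat{\Psi}_{t}\big)\,\norm{\bm{\delta}_{t}}^{2}
= O_{\p}(1)\,\norm{\bm{\delta}_{t}}^{2},
\]
the final bound being Lemma \ref{lemma:201810021158}(ii). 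This is precisely the purpose for which $\mathbf{h}_{t}$ and $\hat{\Psi}_{t}$ were introduced.

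Then I would control $\norm{\bm{\delta}_{t}}$ by inserting the pseudo estimate, $\norm{\bm{\delta}_{t}}\leq\norm{\hat{\bm{\beta}}_{t}-\bar{\bm{\beta}}_{t}}+\norm{\bar{\bm{\beta}}_{t}-\tilde{\bm{\beta}}_{t}}$, applying Lemma \ref{lemma:201809282016} to the first summand and Lemma \ref{lemma:201809281926} to the second. After absorbing $\varphi^{2}$, the $O_{\p}(1)$ eigenvalue factor, and all remaining constants into a single constant (enlarged if necessary so that $\psi\geq 1$ and it matches the $\psi$ of Lemma \ref{lemma:201809282016}), this produces the one-step recursion
\[
D_{t}\leq \psi\, D_{t+1}+O_{\p}\!\left(J/M+\rho_{J}^{2}\right),
\qquad D_{t}:=M^{-1}\norm{\mathbf{V}_{t}-\hat{\mathbf{V}}_{t}}^{2}.
\]
The base case is immediate: at $t=T-1$, the identity $\tilde{V}_{T}^{\textup{E}}=f_{T}=\tilde{V}_{T}$ forces $\mathbf{V}_{T}=\hat{\mathbf{V}}_{T}$, whence $D_{T}=0$ and $D_{T-1}=O_{\p}(J/M+\rho_{J}^{2})=O_{\p}\big(\psi^{0}(J/M+\rho_{J}^{2})\big)$. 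Unrolling the recursion downward and using that $T$ is fixed gives $D_{t}=O_{\p}\big(\psi^{T-t-1}(J/M+\rho_{J}^{2})\big)$, as claimed.

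The main obstacle I anticipate is the uniformity bookkeeping across the nested stochastic-order statements. Each of Lemmas \ref{lemma:201810021158}--\ref{lemma:201809282016} holds only on events of probability approaching $1$, and the constant $\psi$ must be verified to be independent of $t$, $R$, and $J$ so that the geometric amplification $\psi^{T-t-1}$ is legitimate; combining the $O(T)$ such high-probability events and checking that the $O_{\p}$ rates compose correctly through the recursion, rather than degrading at each step, is the delicate part. A secondary subtlety is justifying the passage from $\sup_{a}\big|\bm{\delta}_{t}^{\T}\bm{\phi}(K(x,a))\big|$ to $|\bm{\delta}_{t}|^{\T}\mathbf{h}_{t}(x)$, which relies on the sign structure of the basis functions together with the continuity and measurability granted by Assumption \ref{assum:eigen_psi}(i) that make $\mathbf{h}_{t}(\cdot)$ well defined.
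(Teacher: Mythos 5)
Your proposal is correct and follows essentially the same route as the paper's proof: backward induction in $t$, the triangle-inequality split of $\hat{\bm{\beta}}_{t}-\tilde{\bm{\beta}}_{t}$ through the pseudo estimate $\bar{\bm{\beta}}_{t}$ (the paper phrases this at the value-function level via $\tilde{V}_{t}^{\textup{PE}}$, which is the same decomposition), Lemma \ref{lemma:201809281926} and Lemma \ref{lemma:201809282016} for the two pieces, and the $\mathbf{h}_{t}$/$\lambda_{\max}(\hat{\Psi}_{t})$ quadratic-form bound from Lemma \ref{lemma:201810021158} to pass from coefficient error to the empirical $L^2$ error. Your replacement of the paper's bound $\big|\bm{\delta}_{t}^{\T}\mathbf{h}_{t}(x)\big|$ by $|\bm{\delta}_{t}|^{\T}\mathbf{h}_{t}(x)$, justified by nonnegativity of the Bernstein basis, is in fact a slightly more careful rendering of the same step.
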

\begin{proof}[Proof of Lemma \ref{lemma:L2_gap}]
We use a backward induction procedure to prove the statement of Lemma \ref{lemma:L2_gap}.
For $t=T-1$, we note that $\tilde{C}_{T-1}^{\textup{E}}(\cdot)$ is in agreement with $\tilde{C}_{T-1}^{\textup{PE}}(\cdot)$
because $\tilde{V}_{T}^{\textup{E}}(x)=\tilde{V}_{T}(x)=G(x)$ for $x \in \col \left(\mathcal{X}_{R}\right)$.
We get $\tilde{V}_{T-1}^{\textup{E}}(x)=\tilde{V}_{T-1}^{\textup{PE}}(x)$ for $x \in \col \left(\mathcal{X}_{R}\right)$, accordingly. Furthermore, we observe that
\ba
\nonumber
    \left|\tilde{V}_{T-1}^{\textup{E}}(x)-\tilde{V}_{T-1}(x)\right|
&=&   \left|\tilde{V}_{T-1}^{\textup{PE}}(x)-\tilde{V}_{T-1}(x)\right|\\
\nonumber
&\leq&  \sup_{a \in A_{T-1}(x)}\left|\tilde{C}_{T-1}^{\textup{PE}}\big(K(x,a)\big)-\tilde{C}_{T-1}\big(K(x,a)\big)\right|\\
\nonumber
&=&  \sup_{a \in A_{T-1}(x)} \left|\bar{\bm{\beta}}_{T-1}^{\T}\bm{\phi}\big(K(x,a)\big)-\tilde{C}_{T-1}\big(K(x,a)\big)\right|\\
\nonumber
&\leq& \sup_{a \in A_{T-1}(x)} \left|\left(\bar{\bm{\beta}}_{T-1}-\tilde{\bm{\beta}}_{T-1}\right)^{\T}\bm{\phi}\big(K(x,a)\big)\right|\\
\nonumber
&&+ \sup_{a \in A_{T-1}(x)} \left|\tilde{\bm{\beta}}_{T-1}^{\T}\bm{\phi}\big(K(x,a)\big)-\tilde{C}_{T-1}\big(K(x,a)\big)\right|\\
\nonumber
&\leq& \left|\left(\bar{\bm{\beta}}_{T-1}-\tilde{\bm{\beta}}_{T-1}\right)^{\T}\mathbf{h}_{T-1}(x)\right|
+\norm{\tilde{\bm{\beta}}_{T-1}^{\T}\bm{\phi}-\tilde{C}_{T-1}}_{\infty}\\
\label{201810021314}
&=& \left|\left(\bar{\bm{\beta}}_{T-1}-\tilde{\bm{\beta}}_{T-1}\right)^{\T}\mathbf{h}_{T-1}(x)\right| + O(\rho_{J}),
\ea
where the third inequality is by the definition of function $\mathbf{h}_{T-1}(\cdot)$ 
and the last equality is guaranteed by Assumption \ref{assum:sieve_estimation}.

Consequently, we obtain
\ba
\nonumber
    M^{-1}\norm{\mathbf{V}_{T-1}-\hat{\mathbf{V}}_{T-1}}^2
&=& \frac{1}{M} \sum_{m=1}^{M} 
\left|\tilde{V}_{T-1}^{\textup{E}}\left(X_{T-1}^{(m)}\right)-\tilde{V}_{T-1}\left(X_{T-1}^{(m)}\right)\right|^2\\
\nonumber
&\leq&\frac{1}{M} \sum_{m=1}^{M} 
\left|\left(\bar{\bm{\beta}}_{T-1}-\tilde{\bm{\beta}}_{T-1}\right)^{\T}\mathbf{h}_{T-1}\left(X_{T-1}^{(m)}\right)\right|^2
+ O\left(\rho_{J}^2\right)\\
\nonumber
&=& \left(\bar{\bm{\beta}}_{T-1}-\tilde{\bm{\beta}}_{T-1}\right)^{\T}
\hat{\Psi}_{T-1}
\left(\bar{\bm{\beta}}_{T-1}-\tilde{\bm{\beta}}_{T-1}\right) + O\left(\rho_{J}^2\right)\\
\nonumber
&\leq& 2 \lambda_{\max} \left(\hat{\Psi}_{T-1}\right)
\norm{\bar{\bm{\beta}}_{T-1}-\tilde{\bm{\beta}}_{T-1}}^2 + O\left(\rho_{J}^2\right)\\
\label{201810021346}
&=& O_{\p} \left(J/M + \rho_{J}^2\right),
\ea
where the second inequality follows from the definition of the largest eigenvalue of a matrix
and the last equality is guaranteed by Lemma \ref{lemma:201809281926} and 
Lemma \ref{lemma:201810021158}.
In view of the above display, Eq. \eqref{induction_hypothesis-2} holds for $t=T-1$.

As induction hypothesis, we assume \eqref{induction_hypothesis-2} holds for $t+1$.
Note that, for $x \in \mathring{\mathcal{X}_{R}}$,
\ba
\label{201810021335}
\left|\tilde{V}_{t}(x)-\tilde{V}_{t}^{\textup{E}}(x)\right|
\leq \left|\tilde{V}_{t}(x)-\tilde{V}_{t}^{\textup{PE}}(x)\right| + \left|\tilde{V}_{t}^{\textup{E}}(x)-\tilde{V}_{t}^{\textup{PE}}(x)\right|.
\ea
An argument similar to the one used in establishing \eqref{201810021346} shows that
\ba
\label{201810021342}
\frac{1}{M} \sum_{m=1}^{M} 
\left|\tilde{V}_{t}^{\textup{PE}}\left(X_{t}^{(m)}\right)-\tilde{V}_{t}\left(X_{t}^{(m)}\right)\right|^2
=O_{\p} \left(J/M + \rho_{J}^2\right).
\ea
Next, we investigate the term $\left|\tilde{V}_{t}^{\textup{E}}(x)-\tilde{V}_{t}^{\textup{PE}}(x)\right|$.
Observe that
\ba
\nonumber
\left|\tilde{V}_{t}^{\textup{E}}(x)-\tilde{V}_{t}^{\textup{PE}}(x)\right|
&\leq&  \sup_{a\in A_{t}(x)} \left|\tilde{C}_{t}^{\textup{E}}\big(K(x,a)\big)-\tilde{C}_{t}^{\textup{PE}}\big(K(x,a)\big)\right|\\
\nonumber
&=&  \sup_{a\in A_{t}(x)} \left|
\left(\hat{\bm{\beta}}_{t}-\bar{\bm{\beta}}_{t}\right)^{\T}
\bm{\phi}\big(K(x,a)\big)
\right|\\
\label{201810021520}
&\leq&  \left|
\left(\hat{\bm{\beta}}_{t}-\bar{\bm{\beta}}_{t}\right)^{\T}
\mathbf{h}_{t}\big(x\big)
\right|.
\ea
We adopt the same argument as in the proof of \eqref{201810021346} to get
\bas
\frac{1}{M} \sum_{m=1}^{M} 
\left|\tilde{V}_{t}^{\textup{E}}\left(X_{t}^{(m)}\right)-\tilde{V}_{t}^{\textup{PE}}\left(X_{t}^{(m)}\right)\right|^2
\leq\lambda_{\max} \left(\hat{\Psi}_{t}\right)\norm{\hat{\bm{\beta}}_{t}-\bar{\bm{\beta}}_{t}}^2.
\eas
Applying Lemma \ref{lemma:201809282016} yields
\bas
\frac{1}{M} \sum_{m=1}^{M} 
\left|\tilde{V}_{t}^{\textup{E}}\left(X_{t}^{(m)}\right)-\tilde{V}_{t}^{\textup{PE}}\left(X_{t}^{(m)}\right)\right|^2
&\leq& 2\lambda_{\max}
\left(\hat{\Psi}_{t}\right)
\left[ 
\frac{\psi}{M} \norm{\mathbf{V}_{t+1}-\hat{\mathbf{V}}_{t+1}}^2
+ O\left(\psi \rho_{J}^2\right)\right]\\
&=&O_{\p}\left(\psi^{T-t-1}\left(J/M + \rho_{J}^2\right)\right),
\eas
where the last equality is due to induction hypothesis \eqref{induction_hypothesis-2}
and $\lambda_{\max} \left(\hat{\Psi}_{t}\right)=O_{\p}(1)$ (see Lemma \ref{lemma:201810021158}).
The above display in conjunction with \eqref{201810021335} and \eqref{201810021342} implies 
\bas
M^{-1}\norm{\mathbf{V}_{t}-\hat{\mathbf{V}}_{t}}^2
&=&O_{\p}\left(J/M + \rho_{J}^2\right)+
O_{\p}\left(\psi^{T-t-1}\left(J/M + \rho_{J}^2\right)\right)\\
&=&O_{\p}\left(\psi^{T-t-1}\left(J/M + \rho_{J}^2\right)\right).
\eas
This completes the proof.
\end{proof}

\subsubsection{Proof of the Main Result}
\begin{proof}[Proof of Theorem \ref{thm:LSMC_error}]
Following the arguments used to prove \eqref{201810021314}, we get
\bas
\left|\tilde{V}_{0}^{\textup{PE}}(X_0)-\tilde{V}_{0}(X_0)\right|
\leq \left|\left(\bar{\bm{\beta}}_{0}-\tilde{\bm{\beta}}_{0}\right)^{\T}\mathbf{h}_{0}(X_{0})\right| + O(\rho_{J})
=O_{\p}\left(\sqrt{J/M} + \rho_{J}\right),
\eas
where the last equality is by Lemma \ref{lemma:201809281926} and Part (ii) of Assumption \ref{assum:eigen_psi}.

On the other hand, an argument similar to the one used in deriving \eqref{201810021520} shows 
\bas
\left|\tilde{V}_{0}^{\textup{PE}}(X_0)-\tilde{V}_{0}^{\textup{E}}(X_0)\right|
\leq \left|
\left(\hat{\bm{\beta}}_{0}-\bar{\bm{\beta}}_{0}\right)^{\T}
\mathbf{h}_{0}(X_{0})
\right|
\leq  M^{-1/2}\norm{\mathbf{V}_{1}-\hat{\mathbf{V}}_{1}}
\norm{\mathbf{h}_{0}(X_{0})}.
\eas
The above two displays in conjunction with \eqref{induction_hypothesis-2} implies
\bas
\left|\tilde{V}_{0}(X_0)-\tilde{V}_{0}^{\textup{E}}(X_0)\right|&\leq&
\left|\tilde{V}_{0}^{\textup{PE}}(X_0)-\tilde{V}_{0}(X_0)\right| + 
\left|\tilde{V}_{0}^{\textup{PE}}(X_0)-\tilde{V}_{0}^{\textup{E}}(X_0)\right|\\
&=&O_{\p} \left(\sqrt{\psi^{T-1}\left(J/M + \rho_J^2\right)}\right).
\eas
This shows \eqref{LSMC_error} and completes the proof of Theorem \ref{thm:LSMC_error}.
\end{proof}

\end{document}